\newtheorem*{rep@theorem}{\rep@title}
\newcommand{\newreptheorem}[2]{%
\newenvironment{rep#1}[1]{%
 \def\rep@title{#2 \ref{##1}}%
 \begin{rep@theorem}}%
 {\end{rep@theorem}}}
\DeclareMathOperator*{\argmax}{\arg\max}
\newcommand{\RNum}[1]{\uppercase\expandafter{\romannumeral #1\relax}}
\newtheorem{lemma}{Lemma}
\newtheorem{theorem}{Theorem}
\theoremstyle{definition}
\newtheorem{definition}{Definition}
\def\blfootnote{\gdef\@thefnmark{}\@footnotetext}
\newcommand{\xmark}{\ding{55}}%
\def\blfootnote{\gdef\@thefnmark{}\@footnotetext}
\begin{document}
\title {Feedback Capacity and Coding for the $(0,k)$-RLL Input-Constrained BEC}
\author{Ori Peled, Oron Sabag and Haim H. Permuter}
\maketitle

\begin{abstract}
  The input-constrained binary erasure channel (BEC) with strictly causal feedback is studied. The channel input sequence must satisfy the $(0,k)$-runlength limited (RLL) constraint, i.e., no more than $k$ consecutive `$0$'s are allowed. The feedback capacity of this channel is derived for all $k\geq 1$, and is given by $$C^\mathrm{fb}_{(0,k)}(\varepsilon) = \max\frac{\overline{\varepsilon}H_2(\delta_0)+\sum_{i=1}^{k-1}\left(\overline{\varepsilon}^{i+1}H_2(\delta_i)\prod_{m=0}^{i-1}\delta_m\right)}{1+\sum_{i=0}^{k-1}\left(\overline{\varepsilon}^{i+1}
   \prod_{m=0}^{i}\delta_m\right)},$$ where $\varepsilon$ is the erasure probability, $\overline{\varepsilon}=1-\varepsilon$ and $H_2(\cdot)$ is the binary entropy function. The maximization is only over $\delta_{k-1}$, while the parameters $\delta_i$ for $i\leq k-2$ are straightforward functions of $\delta_{k-1}$. The lower bound is obtained by constructing a simple coding for all $k\geq1$. It is shown that the feedback capacity can be achieved using zero-error, variable length coding. For the converse, an upper bound on the non-causal setting, where the erasure is available to the encoder just prior to the transmission, is derived. This upper bound coincides with the lower bound and concludes the search for both the feedback capacity and the non-causal capacity. As a result, non-causal knowledge of the erasures at the encoder does not increase the feedback capacity for the $(0,k)$-RLL input-constrained BEC. This property does not hold in general: the $(2,\infty)$-RLL input-constrained BEC, where every `$1$' is followed by at least two `$0$'s, is used to show that the feedback capacity can be strictly greater than the non-causal capacity.
\end{abstract}

\begin{IEEEkeywords}
Constrained coding, feedback capacity, finite-state machine, Markov decision process, posterior matching, runlength limited (RLL) constraints.
\end{IEEEkeywords}

\section{Introduction}\label{sec:intro}
\blfootnote{The work  was supported in part by the European Research Council under the European Union’s Seventh Framework Programme (FP7/2007-2013)/ERC grant agreement no.337752, and the ISF research grant 818/17. Part of this work was presented at the 2017 IEEE Int. Symposium on Information Theory (ISIT 2017), Aachen, Germany \cite{Peled_ISIT}.O. Peled, O. Sabag and H. H. Permuter are with the department of Electrical and Computer Engineering, Ben-Gurion University of the Negev, Beer-Sheva, Israel (oripe@post.bgu.ac.il, oronsa@post.bgu.ac.il, haimp@bgu.ac.il).}
The physical limitations of the hardware used in recording and communication systems cause some digital sequences to be more prone to errors than others. This elicits the need to ensure that such sequences will not be recorded or transmitted. Constrained coding is a method that enables such systems to encode arbitrary data sequences into sequences that abide by the imposed restrictions \cite{Marcus98}. In the classical constrained coding setting, it is assumed that the transmission is noiseless if the transmitted sequence satisfies the imposed constraint. In this paper, however, we consider a transmission of constrained sequences where the transmission is over a noisy channel, the binary erasure channel (BEC) (Fig. \ref{figure:BEC}).

Run-length limited (RLL) constraints are common in magnetic and optical recording standards, where the run length of consecutive `$0$'s should be limited between $d$ and $k$ $(d<k)$. A $(d,k)$-RLL constrained binary sequence must satisfy two restrictions:
\begin{enumerate}
  \item at least $d$ `$0$'s must follow each `$1$'.
  \item no more than $k$ consecutive `$0$'s are allowed.
\end{enumerate}
The first restriction ensures that the frequency of transitions, i.e., $1 \to 0$ or $0 \to 1$, will not be too high. This is necessary in systems where the sequence is conveyed over band-limited channels. Timing is commonly recovered with a phase-locked loop (PLL) that adjusts the phase of the detection instant according to the observed transition of the received waveform. The second restriction guarantees that the PLL does not fall out of synchronization with the waveform \cite{Immink90runlength-limitedsequences,Marcus98}.

Two important families of the RLL constraint are the $(d,\infty)$-RLL and $(0,k)$-RLL. These constraints might seem symmetric in some sense, but indeed, may greatly differ in their behavior, see e.g., \cite{dk_1,dk_2}. Therefore, when dealing with RLL constraints, it is common to tackle each of these families separately before approaching the general $(d,k)$-RLL. In this paper, we adopt this approach and show that the $(0,k)$-RLL problem is solvable, while the same problem with a $(d,\infty)$-RLL constraint is a great deal more challenging.

The model studied in this paper is a BEC (Fig. \ref{figure:BEC}), in which the input sequences must satisfy the $(0,k)$-RLL constraint. Two cases of this model are investigated, based on the information that is available to the encoder. In the first case, described in Fig. \ref{figures:fb_channel}, the encoder has access to all past outputs via a noiseless feedback link. In the second case, described in Fig. \ref{figures:nc_channel}, the encoder has non-causal access to the erasure that is about to occur, that is, the encoder knows in advance whether the BEC behaves like a clean channel or not. From an operational point of view, the capacity of the non-causal case must be greater than the feedback case due to the additional information that the encoder has.

\begin{figure}
  \centering
  \psfrag{A}[c][][.8]{$1-\varepsilon$}
  \psfrag{B}[c][r][.8]{$\varepsilon$}
  \psfrag{C}[c][][.8]{$X$}
  \psfrag{D}[c][][.8]{$Y$}
  \psfrag{E}[c][][.8]{?}
  \includegraphics[scale=0.85]{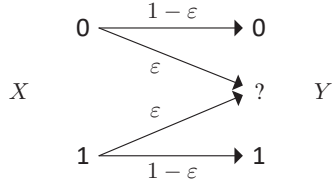}
  \caption{Binary erasure channel with erasure parameter $\varepsilon$.}\label{figure:BEC}
\end{figure}
\begin{figure}
  \centering
  \psfrag{A}[c][][.85]{\shortstack{$(0,k)$-RLL \\ Constrained \\ Encoder}}
  \psfrag{B}[c][][.9]{BEC}
  \psfrag{C}[c][][.9]{Decoder}
  \psfrag{D}[c][][.9]{Unit Delay}
  \psfrag{X}[c][][.9]{$M$}
  \psfrag{Y}[c][][.9]{$Y_{i-1}$}
  \psfrag{Z}[c][][.9]{$X_i$}
  \psfrag{W}[c][][.9]{$Y_i$}
  \psfrag{T}[c][][.9]{$\hat{M}$}
  \includegraphics[scale=0.75]{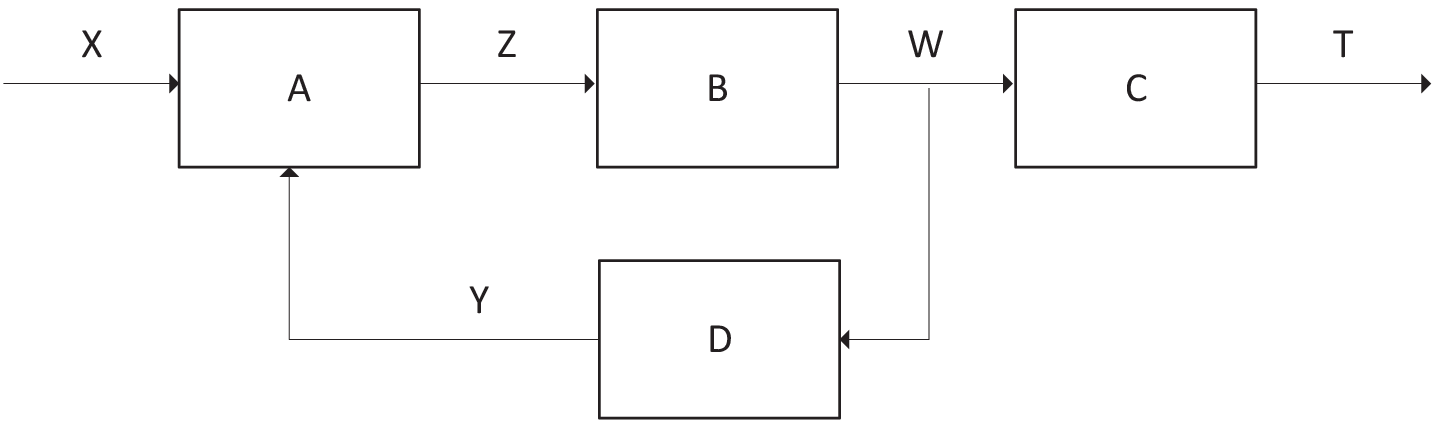}
  \caption{Input constrained BEC with strictly causal feedback. The channel input $X_i$ is a function of the message $M$ and of the channel output history $Y^{i-1}$.}\label{figures:fb_channel}
\end{figure}
\begin{figure}
  \centering
  \psfrag{A}[c][][.85]{\shortstack{$(0,k)$-RLL \\ Constrained \\ Encoder}}
  \psfrag{B}[c][][.79]{\shortstack{BEC \\ $Y_i=f(X_i,\theta_i)$}}
  \psfrag{C}[c][][.9]{Decoder}
  \psfrag{X}[c][][.9]{$M$}
  \psfrag{Y}[c][][.9]{$\theta^i$}
  \psfrag{Z}[c][][.9]{$X_i$}
  \psfrag{W}[c][][.9]{$Y_i$}
  \psfrag{T}[c][][.9]{$\hat{M}$}
  \includegraphics[scale=0.75]{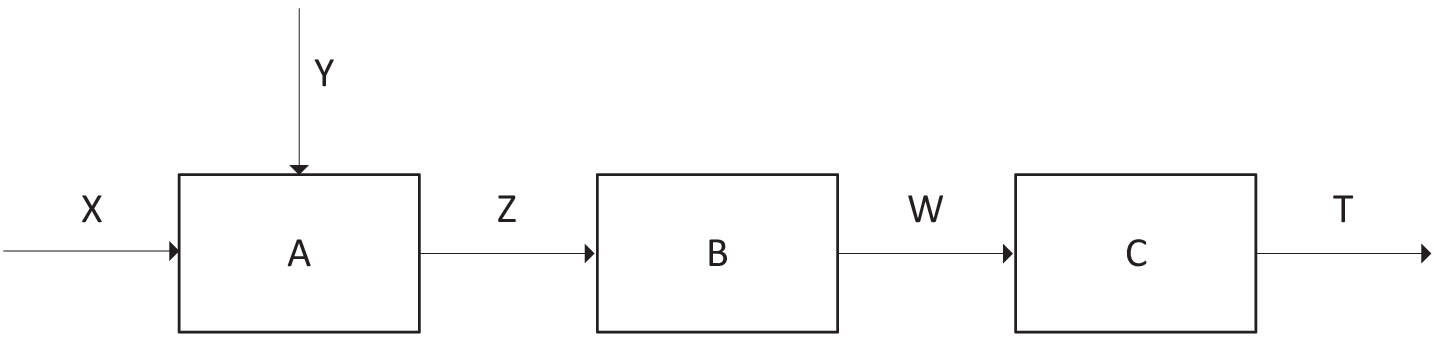}
  \caption{Input-constrained BEC with non-causal knowledge of the erasures. The encoder has access both to the message $M$ and to $\theta^i$ which model the erasure.}\label{figures:nc_channel}
\end{figure}

We show that the feedback capacity of the $(0,k)$-RLL input-constrained BEC is:
\begin{equation}\label{eq:k_capacity_in_intro}
    C^{\mathrm{fb}}_{(0,k)}(\varepsilon) = \max_{0\leq\delta_{k-1}\leq \frac{1}{2}} \frac{\overline{\varepsilon}H_2(\delta_0)+\displaystyle\sum_{i=1}^{k-1}\left(\overline{\varepsilon}^{i+1}H_2(\delta_i)\prod_{m=0}^{i-1}\delta_m\right)}{1+\displaystyle\sum_{i=0}^{k-1}\left(\overline{\varepsilon}^{i+1}
   \prod_{m=0}^{i}\delta_m\right)},
\end{equation}
for all $\varepsilon\in[0,1]$ and $k\geq1$, where $\delta_0,\dots,\delta_{k-2}$ are simple functions of $\delta_{k-1}$, given in Eq. \eqref{eq:delta's}, below. Surprisingly, we are also able show that the non-causal knowledge of the channel erasure does not increase the feedback capacity, so that \eqref{eq:k_capacity_in_intro} is the non-causal capacity as well.

This work generalizes the results in \cite{Sabag_BEC}, where the feedback capacity of the $(1,\infty)$-RLL input-constrained BEC was calculated\footnote{The $(1,\infty)$-RLL constraint is equivalent to the $(0,1)$-RLL constraint by swapping `$0$'s and `$1$'s}. In \cite{Sabag_BEC} and other works, \cite{TatikondaMitter_IT09,Chen05,Yang05,PermuterCuffVanRoyWeissman08,trapdoor_generalized,Ising_channel,generalized_Ising,sabag_BIBO}, the capacity was derived by formulating it as a dynamic programming (DP) problem and then solving the corresponding Bellman equation. In all past works, the DP state was an element of the $1$-simplex, an essential property in the solution of the Bellman equation. However, the DP state in our case is an element of the $k$-simplex. This makes the approach of solving the Bellman equation intractable and different methods are required.


To circumvent this difficulty, we use alternative techniques to solve the capacity of our problems. The upper bound follows from standard converse techniques for the non-causal model, where the encoder knows the erasure ahead of time. This upper bound is trivially an upper bound also for the feedback model, since non causal knowledge might increase the capacity only. Then, we construct a simple coding scheme for the feedback setting, inspired by the posterior matching principle \cite{horstein_original,shayevitz_posterior_mathcing,Li_elgamal_matching,shayevitz_simple}. The coding scheme enables both the encoder and the decoder to systematically reduce the size of the set of possible messages to a single message, which is then declared by the decoder as the correct message. An analysis of the achieved rate reveals an expression that is similar to the upper bound. The equivalence of these bounds is finally derived, and this concludes both the feedback capacity and the non-causal capacity for our setting.

The remainder of the paper is organized as follows: Section \ref{sec:Notation&Problem_Definition} includes the notations we use and the problem definition. Section \ref{sec:main_results} contains the main results of this paper. In Section \ref{sec:coding_scheme} we present the coding scheme and its rate analysis. Section \ref{sec:upper_bound} includes an upper bound of the capacity. In Section \ref{sec:(2,infty)-RLL} we discuss the $(2,\infty)$-RLL input constraint, as an example where the non-causal capacity is strictly greater than the feedback capacity. Section \ref{sec:(1,2)-RLL} presents the feedback capacity of the $(1,2)$-RLL BEC, as an example for possible future avenues of research. Finally, the appendices contain proofs of several lemmas used throughout the paper.

\section{Notations and Problem Definition}\label{sec:Notation&Problem_Definition}
\subsection{Notations}
Random variables are denoted using a capital letter $X$. Lower-case letters $x$ are used to denote realizations of random variables. Calligraphic letters $\mathcal{X}$ denote sets. The notation $X^n$ means the $n$-tuple $\left(X_1,\ldots,X_n\right)$ and $x^n$ is a realization of such a vector of random variables. For a real number $\alpha\in\left[0,1\right]$, we define $\overline{\alpha} \coloneqq 1-\alpha$. The binary entropy function is denoted by $H_2(\alpha)=-\alpha\log_2\alpha-\overline{\alpha}\log_2\overline{\alpha}$ for $\alpha\in\left[0,1\right]$.

\subsection{Problem Definition}
The BEC (Fig. \ref{figure:BEC}) is memoryless, that is $p(y_i \mid x^i,y^{i-1}) = p(y_i \mid x_i)\ \forall i$, and can be represented by:
 \begin{equation*}
   y_i = \begin{cases}
           x_i, & \mbox{if } \theta_i=\checkmark \\
           ?, & \mbox{if } \theta_i=\text{\xmark}
         \end{cases},
 \end{equation*}
 where $\theta^n$ is an i.i.d. process with $\theta_i\sim Ber(\varepsilon)$. A message $M$ is drawn uniformly from $\{1,2,\ldots,2^{nR}\}$ and is available to the encoder. We define two models, based on the additional information that is available to the encoder: in the first model, at time $i$, the encoder has knowledge of past channel outputs $y^{i-1}$ via a noiseless feedback link (Fig. \ref{figures:fb_channel}); in the second model, at time $i$, the encoder has non-causal access to $\theta_i$ (Fig. \ref{figures:nc_channel}). In both cases, the transmission is over a BEC.

\begin{figure}
  \centering
  \psfrag{A}[c][][.8]{$S=0$}
  \psfrag{B}[c][][.8]{$S=1$}
  \psfrag{c}[c][][.7]{$S={k-1}$}
  \psfrag{D}[c][][.8]{$S=k$}
  \psfrag{E}[c][][.8]{$1$}
  \psfrag{F}[c][][.8]{$0$}
  \includegraphics[scale=0.85]{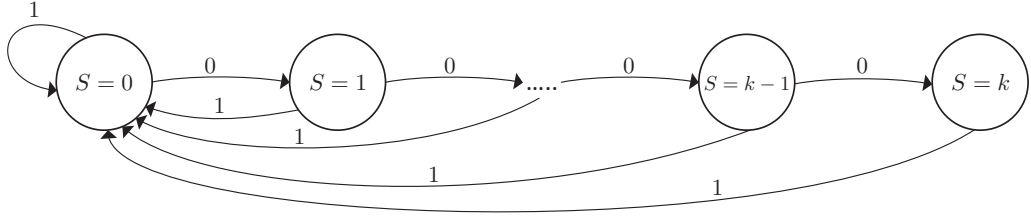}
  \caption{State diagram describing all sequences that can be generated while satisfying the $(0,k)$-RLL constraint. Note that after k consecutive `$0$'s the node $S=k$ is reached, which implies that the next bit is necessarily `$1$'.}\label{figures:(0,k)_RLL}
\end{figure}

The encoder must produce sequences that comply with the $(0,k)$-RLL input constraint. This constraint can be described graphically (Fig. \ref{figures:(0,k)_RLL}), where all walks along the directed edges of the graph do not contain the forbidden string. Note that the node $S=k$ has only one outgoing edge, labeled `$1$', which implies that after $k$ consecutive `$0$'s, the next bit must be a `$1$'. The constrained encoder and the decoder operations are made precise by the following code definitions.

\begin{definition}
  A $(n,2^{nR},(0,k))$ \textit{code} for an input-constrained BEC is composed of encoding and decoding functions.
The encoding functions for the first model (with feedback) are:
  \begin{equation}\label{eq:fb_enc}
    f_i:\{1,\ldots,2^{nR}\}\times\mathcal{Y}^{i-1} \rightarrow \mathcal{X}_i, \ i=1,\ldots,n ,
  \end{equation}
  satisfying $f_i\left(m,y^{i-1}\right) = 1$ if $\left( f_{i-1}\left(m,y^{i-2}\right),\ldots,f_{i-k}\left(m,y^{i-k-1}\right) \right)=(0,\ldots,0)$ for all $\left( m,y^{i-1} \right)$.
  For the non-causal model the encoding functions are defined by:
  \begin{equation}\label{eq:nc_enc}
    g_i:\{1,\ldots,2^{nR}\}\times\{\checkmark,\text{\xmark}\}^i\rightarrow\mathcal{X}_i, \ i=1,\ldots,n ,
  \end{equation}
  satisfying $g_i\left(m,\theta^i\right) = 1$ if $\left( g_{i-1}\left(m,\theta^{i-1}\right),\ldots,g_{i-k}\left(m,\theta^{i-k}\right) \right)=(0,\ldots,0)$ for all $\left( m,\theta^i \right)$. The decoding function for both models is defined by:
  \begin{equation*}
    \Psi : \mathcal{Y}^n \rightarrow \{1,\ldots,2^{nR}\}.
  \end{equation*}
Without loss of generality, we assume that $x_0=1$, so that the initial state is $s_0=0$.

The \textit{average probability of error} for a code is defined as $P_e^{(n)}=Pr\left(M\neq\Psi(Y^n)\right)$. A rate $R$ is said to be $(0,k)$-\textit{achievable} if there exists a sequence of $(n,2^{nR},(0,k))$ codes such that $\lim_{n\to\infty}P_e^{(n)}=0$. The \textit{capacity} is defined to be the supremum over all $(0,k)$-achievable rates and is a function of $k$ and the erasure probability $\varepsilon$. Denote by $C^{\mathrm{fb}}_{(0,k)}(\varepsilon)$ the capacity of the feedback model and $C^{\mathrm{nc}}_{(0,k)}(\varepsilon)$ that of the non-causal model. Since $y^{i-1}$ is computable from $\theta^{i-1}$ and $M$, we have the relation $C^{\mathrm{nc}}_{(0,k)}(\varepsilon)\geq C^{\mathrm{fb}}_{(0,k)}(\varepsilon)$, for all $k\geq1$, $\varepsilon\in[0,1]$.

\end{definition}

\section{Main Results}\label{sec:main_results}
In this section we present the main results, including the feedback capacity and the non-causal capacity for the BEC with $(0,k)$-RLL input constraints. We then explain the methodology used to prove the results. The following theorem constitutes our main results regarding the feedback capacity and the capacity achieving coding scheme. Define the function:
\begin{equation}\label{eq:capacity_expression}
  R_\varepsilon\left(\delta_0, \ldots ,\delta_{k-1}\right) =  \frac{\overline{\varepsilon}H_2(\delta_0)+\displaystyle\sum_{i=1}^{k-1}\left(\overline{\varepsilon}^{i+1}H_2(\delta_i)\prod_{m=0}^{i-1}\delta_m\right)}{1+\displaystyle\sum_{i=0}^{k-1}\left(\overline{\varepsilon}^{i+1}
   \prod_{m=0}^{i}\delta_m\right)} ,
\end{equation}
where $\delta_i$ takes values in $[0,1]$ for $i=0,\ldots,k-1$.

\begin{theorem}\label{thm:capacity}
The feedback capacity of the $(0,k)$-RLL input-constrained BEC with feedback is:
\begin{equation}\label{eq:k_capacity}
    C^{\mathrm{fb}}_{(0,k)}(\varepsilon) = \max_{0\leq\delta_{k-1}\leq \frac{1}{2}} R_\varepsilon\left(\delta_0, \ldots ,\delta_{k-1}\right),
\end{equation}
where $\delta_0,\dots,\delta_{k-2}$ are functions of $\delta_{k-1}$ and can be calculated recursively using:
\begin{equation}\label{eq:delta's}
    \delta_j = \frac{\delta_{j+1}}{\delta_{j+1}+\overline{\delta}_{j+1}\left( \frac{\overline{\delta}_{j+1}}{\overline{\delta}_{j+2}} \right)^{\overline{\varepsilon}}} \quad j=0,1,\ldots,k-2 ,
\end{equation}
with $\overline{\delta}_k \coloneqq 1$.
  In addition, there exists a simple coding scheme that achieves the capacity given in \eqref{eq:k_capacity}.

\end{theorem}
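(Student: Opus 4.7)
The plan is to sandwich the feedback capacity between an achievability lower bound and a non-causal upper bound that coincide after optimization. The non-causal bound trivially upper bounds $C^{\mathrm{fb}}_{(0,k)}(\varepsilon)$ because the encoder in the non-causal model has strictly more information, so the task reduces to (i) a standard Fano-type converse on the non-causal setting, (ii) a concrete feedback coding scheme, and (iii) showing both expressions reduce to $\max_{\delta_{k-1}} R_\varepsilon$ via the recursion in Eq. \eqref{eq:delta's}.

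For the converse, I would introduce the runlength state $S_i\in\{0,1,\ldots,k\}$ defined as the number of trailing zeros in the transmitted prefix $X^{i-1}$, which is a deterministic function of past inputs. Starting from Fano's inequality, I would single-letterize $\frac{1}{n}I(M;Y^n)$ using the memoryless erasure and the structural constraint $X_i=1$ whenever $S_i=k$. Parameterize the optimal conditional input laws as $\delta_j\coloneqq\Pr(X_i=0\mid S_i=j)$ for $j=0,\ldots,k-1$. A renewal argument that normalizes the mutual information by the expected recurrence time between visits to $S=0$ then yields an upper bound of the form $R_\varepsilon(\delta_0,\ldots,\delta_{k-1})$, where the numerator collects the per-cycle entropy contributions (weighted by $\overline{\varepsilon}^{i+1}\prod_{m<i}\delta_m$, the probability that the cycle survives $i$ further unerased transmissions) and the denominator is the expected cycle length.

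For achievability, I would exhibit a zero-error variable-length scheme in the spirit of posterior matching. Both encoder and decoder maintain the message uniform on a shrinking set. When the runlength state is $j<k$, they agree on a split of the current message set according to a target probability $\delta_j$; the encoder transmits the bit corresponding to the subset containing the true message, and upon an unerased reception both parties synchronously restrict to the chosen subset. Erasures leave the message set unchanged; the state still advances to $j+1$ (or resets to $0$ after a forced $1$ in state $k$), because the $(0,k)$ constraint is driven by transmitted symbols, not by received ones. A renewal-reward calculation, with rewards equal to binary entropies $H_2(\delta_j)$ weighted by the probability of reaching state $j$ within a cycle, delivers the identical expression $R_\varepsilon(\delta_0,\ldots,\delta_{k-1})$.

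The main obstacle is the matching step: the raw optimization is over the $k$-simplex, whereas the theorem asserts a scalar maximization. I would derive the recursion \eqref{eq:delta's} by writing the stationarity conditions $\partial R_\varepsilon/\partial \delta_j=0$ for $j=0,\ldots,k-2$. Because $R_\varepsilon$ is a ratio, each stationarity condition can be rearranged (using the common value of the ratio at the optimum to cancel cross-terms) into an identity linking $\delta_j,\delta_{j+1},\delta_{j+2}$ that is precisely Eq. \eqref{eq:delta's}, with the boundary convention $\overline{\delta}_k\coloneqq 1$. It remains to verify that the recursion yields feasible values $\delta_j\in[0,1]$ for every choice of $\delta_{k-1}\in[0,1/2]$, that the symmetry $\delta_{k-1}\leftrightarrow 1-\delta_{k-1}$ justifies restricting to $[0,1/2]$, and that boundary points do not outperform the interior critical point; these checks are technical but not conceptually deep. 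Once established, they collapse the $k$-dimensional problem to the scalar maximization in \eqref{eq:k_capacity}, completing both the converse and the achievability simultaneously.
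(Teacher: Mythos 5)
Your overall plan (sandwich between a feedback-achievable rate and a non-causal converse, then collapse the $k$-dimensional optimization via the stationarity recursion) is the same architecture the paper uses, and the converse sketch and the claim that the recursion \eqref{eq:delta's} comes from $\partial R_\varepsilon / \partial \delta_j = 0$ are on target. But there are two genuine gaps, and one minor misstatement.

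\textbf{The achievability scheme as you describe it is strictly suboptimal.} You propose that on an erasure the shared state simply advances to $j+1$ and the two parties then split the (unchanged) message set according to $\delta_{j+1}$. Take $k=1$: the shared state advances from $0$ to $1$ with probability $\varepsilon + \overline{\varepsilon}\delta_0$ (erasure, or successful zero) and resets with probability $\overline{\varepsilon}\overline{\delta}_0$. The resulting renewal-reward rate is
\begin{equation*}
\frac{\overline{\varepsilon}H_2(\delta_0)}{1 + \varepsilon + \overline{\varepsilon}\delta_0},
\end{equation*}
whereas the target is $R_\varepsilon(\delta_0) = \overline{\varepsilon}H_2(\delta_0)/(1 + \overline{\varepsilon}\delta_0)$. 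These differ for every $\varepsilon>0$, so your scheme cannot meet the converse. The missing ingredient is the paper's flipped labeling $\tilde{l}_0$: on an erasure the scheme jumps to a special state whose labeling assigns the interval $[\overline{\delta}_0, 1)$ to `0'. Because $\delta_j \le 1/2$, every message point that was being labeled `0' up to the erasure lies in $[0, \delta_j) \subseteq [0, 1/2] \subseteq [0, \overline{\delta}_0]$ and hence gets the label `1' under $\tilde{l}_0$. This simultaneously (i) guarantees the $(0,k)$-RLL constraint is never violated, (ii) keeps the labeling computable from the output history alone so encoder and decoder stay synchronized, and (iii) returns the chain to a ``state-zero'' probability structure on the very next step, which is exactly what produces the weights $\overline{\varepsilon}^{i+1}\prod_{m<i}\delta_m$ in $R_\varepsilon$ rather than $\prod_{m<i}(\varepsilon+\overline{\varepsilon}\delta_m)$. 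Without this flip your state either overcounts erased `1's (losing rate, as computed above) or, if you instead track the encoder's true run-length in $X^{i-1}$, the decoder cannot reproduce it.

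\textbf{The converse sketch omits a step that the paper needs.} You introduce $S_i$ as a deterministic function of $X^{i-1}$ and then condition on it, but in the non-causal model $X^{i-1}$ is not recoverable from $Y^{i-1}$ when erasures occur, so $H(Y_i\mid Y_{i-k}^{i-1})$ cannot in general be written as $H(Y_i\mid Y_{i-k}^{i-1}, S_{i-1})$. The paper resolves this by first passing (without loss of rate) to \emph{restricted} codes that always send $X=1$ on an erasure, which makes $S_{i-1}$ a function of $Y_{i-k}^{i-1}$; it then needs a method-of-types argument (Lemma~\ref{lemma:n-tuple_entropy}) to replace the time-varying $n$-tuple maximizer by a single time-invariant Markov law. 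Your ``renewal argument that normalizes by the expected recurrence time'' is plausible as intuition, but these two reductions are the technical content of the converse and cannot be skipped.

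\textbf{The symmetry claim is false.} You justify restricting to $\delta_{k-1}\in[0,1/2]$ by a symmetry $\delta_{k-1}\leftrightarrow 1-\delta_{k-1}$, but $R_\varepsilon$ is not invariant under this map (only the numerator's entropy factors are; the denominator is not). The paper instead proves the interior critical point is unique, shows the recursion forces $\delta_0\ge\cdots\ge\delta_{k-1}$, and separately shows $\delta_0<1/2$ via the sign of $\partial R_\varepsilon/\partial\delta_0$ at $0^+$ and at $1/2$ together with a KKT argument ruling out boundary maxima. That is the actual reason the maximizer lies in $[0,1/2]^k$.
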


Fig \ref{figures:various_k} presents graphs of the feedback capacity as a function of $\varepsilon$ for several values of $k$. The capacity is a decreasing function of $\varepsilon$, and an increasing function of $k$. For $\varepsilon=0$, the channel is noiseless and so the capacity is that of the corresponding constraint. For example, $C^{\mathrm{fb}}_{(0,1)}(0) = \log_2(\phi)$, where $\phi = \frac{1+\sqrt{5}}{2}$ is the golden ratio, which is known to be the capacity of sequences that do not contain two consecutive `$0$'s. For $\varepsilon = 1$, the output is constant so we have $C^{\mathrm{fb}}_{(0,k)}(1) = 0$ for all $k$. As $k$ increases the constraint becomes more lenient and the capacity approaches $1-\varepsilon$, which is the capacity of the unconstrained BEC.

\begin{figure}[h!]
  \centering
  \psfrag{A}[c][][.8]{Erasure probability $\varepsilon$}
  \psfrag{B}[b][][.9]{The feedback capacity $C^{\mathrm{fb}}_{(0,k)}(\varepsilon)$}
  \psfrag{C}[c][][.8]{Feedback capacity}
  \includegraphics[scale=0.50]{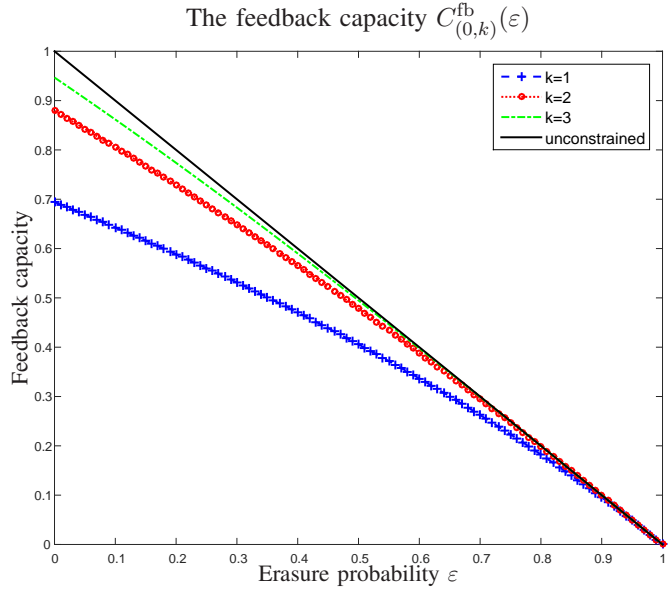}
  \caption{Feedback capacity as a function of $\varepsilon$ for several values of $k$ and the unconstrained capacity. As $k$ increases, the performance approaches that of the unconstrained channel.}\label{figures:various_k}
\end{figure}

Theorem \ref{thm:capacity} guarantees that even though the function we aim to maximize is a function of $k$ variables, to calculate the capacity, one needs only to perform a maximization over $\delta_{k-1}$. For any $\delta_{k-1}\in [0,1]$, the values of all other variables can be calculated by utilizing the set of equations given in \eqref{eq:delta's}.

Our proposed coding scheme has $k$ degrees of freedom, represented by $\delta_0,\ldots,\delta_{k-1}$. For this reason, it is rather surprising that the feedback capacity is a simple optimization problem of one variable for all $k\geq1$. Indeed, the relaxation of the optimization domain shows that optimizing over the k-tuple and the optimization in \eqref{eq:k_capacity} and \eqref{eq:delta's} are equivalent. In addition we also prove the following:
\begin{theorem}\label{thm:fb_capacity=nc_capacity}
  Non-causal knowledge of the erasures does not increase the feedback capacity, that is $\forall k\geq1,\varepsilon\in[0,1]$:
  \begin{equation*}
    C^{\mathrm{fb}}_{(0,k)}(\varepsilon) = C^{\mathrm{nc}}_{(0,k)}(\varepsilon).
  \end{equation*}
\end{theorem}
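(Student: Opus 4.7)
The plan is to sandwich both capacities between the same quantity, namely the right-hand side of \eqref{eq:k_capacity}. The operational inequality $C^{\mathrm{nc}}_{(0,k)}(\varepsilon)\geq C^{\mathrm{fb}}_{(0,k)}(\varepsilon)$ was already noted in the code definition (since $y^{i-1}$ is computable from $(M,\theta^{i-1})$, any feedback code can be simulated by a non-causal encoder). So it suffices to establish the reverse inequality, which will follow by showing that the upper bound used in the converse direction of Theorem \ref{thm:capacity} was in fact proved for the stronger non-causal setting.

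First, I would restate the achievability side. By Theorem \ref{thm:capacity} (specifically the coding scheme constructed in Section \ref{sec:coding_scheme}), there is a feedback code attaining rate $\max_{\delta_{k-1}\in[0,1/2]} R_\varepsilon(\delta_0,\ldots,\delta_{k-1})$, where the $\delta_j$ for $j\leq k-2$ are given by the recursion \eqref{eq:delta's}. This immediately yields
\begin{equation*}
C^{\mathrm{nc}}_{(0,k)}(\varepsilon)\;\geq\; C^{\mathrm{fb}}_{(0,k)}(\varepsilon)\;\geq\;\max_{0\leq\delta_{k-1}\leq 1/2} R_\varepsilon(\delta_0,\ldots,\delta_{k-1}).
\end{equation*}

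Second, I would invoke the converse developed in Section \ref{sec:upper_bound}. The key observation from the introduction is that the converse there is derived for the non-causal model (the encoder sees $\theta_i$ just prior to transmission) by standard single-letter techniques, and its single-letter upper bound is exactly $\max R_\varepsilon(\delta_0,\ldots,\delta_{k-1})$ after the optimization is reduced via \eqref{eq:delta's}. That is,
\begin{equation*}
C^{\mathrm{nc}}_{(0,k)}(\varepsilon)\;\leq\;\max_{0\leq\delta_{k-1}\leq 1/2} R_\varepsilon(\delta_0,\ldots,\delta_{k-1}).
\end{equation*}
Combining the two displayed inequalities forces all three quantities to be equal, which proves $C^{\mathrm{fb}}_{(0,k)}(\varepsilon)=C^{\mathrm{nc}}_{(0,k)}(\varepsilon)$ and simultaneously identifies both capacities with \eqref{eq:k_capacity}.

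The step that deserves the most care is making sure nothing in the converse of Section \ref{sec:upper_bound} secretly used strictly causal (feedback) knowledge of $\theta^{i}$; everywhere $\theta_i$ is allowed as side information to the encoder with the full non-causal history $\theta^i$. Concretely, I would trace the single-letter chain used to bound $\frac{1}{n}H(M)$ — expanding via Fano's inequality, conditioning on the erasure pattern $\theta^n$, and invoking the $(0,k)$-RLL structure through the induced Markov state $S_i$ — and verify that each mutual-information term is auxiliary-variable-compatible with an encoder that knows $\theta^i$ non-causally. If so, the same computation delivers the bound for $C^{\mathrm{nc}}$, and Theorem \ref{thm:fb_capacity=nc_capacity} follows without any additional work. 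This is the only non-bookkeeping obstacle; everything else is a direct concatenation of Theorem \ref{thm:capacity} and the operational ordering of the two models.
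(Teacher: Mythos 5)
Your overall plan---sandwich both $C^{\mathrm{fb}}_{(0,k)}$ and $C^{\mathrm{nc}}_{(0,k)}$ between the coding-scheme rate and the non-causal converse---is exactly the structure the paper uses (the chain of inequalities in \eqref{eq:inequalities}). However, there is a genuine gap in how you close the sandwich, and you misidentify where the real work lies.

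The converse in Section \ref{sec:upper_bound} delivers
\begin{equation*}
C^{\mathrm{nc}}_{(0,k)}(\varepsilon) \;\leq\; \max_{0\leq\delta_0,\ldots,\delta_{k-1}\leq 1} R_\varepsilon(\delta_0,\ldots,\delta_{k-1}),
\end{equation*}
with the $\delta_j$ \emph{unconstrained} on $[0,1]$ and \emph{not} required to satisfy the recursion \eqref{eq:delta's}. The coding scheme, by contrast, achieves $R_\varepsilon(\delta_0,\ldots,\delta_{k-1})$ only for tuples with every $\delta_j\leq 1/2$ (Lemma \ref{lemma:feasability}), so the achievability lower bound is $\max_{0\leq\delta_0,\ldots,\delta_{k-1}\leq 1/2} R_\varepsilon$. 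These are two different optimization domains, and a priori the relaxed maximum could strictly exceed the constrained one. You gloss over this by asserting that the converse's bound ``is exactly $\max R_\varepsilon$ after the optimization is reduced via \eqref{eq:delta's}.'' But that reduction is precisely Lemma \ref{lemma:main_lemma}, a substantive result whose proof occupies Appendix \ref{sec:equality_of_bounds}: one must show via the stationarity conditions (Lemma \ref{lemma:delta's_relations}), the monotonicity argument (Lemma \ref{lemma:delta_0<1/2}), and a KKT argument on the boundary (Lemma \ref{lemma:KKT}) that the unconstrained maximizer lies in $[0,1/2]^k$ and satisfies the recursion. Without invoking (or reproving) Lemma \ref{lemma:main_lemma}, your chain of inequalities does not close.

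Relatedly, the step you flag as ``the only non-bookkeeping obstacle''---verifying that the converse never secretly uses causal-only side information---is not actually an issue. The converse in Section \ref{sec:upper_bound} is built for the non-causal model from the outset, most visibly through Lemma \ref{lemma:preliminary_lemma}, which restricts attention to codes that transmit $X=1$ whenever $\theta=\text{\xmark}$; that restriction requires advance knowledge of the erasure and is therefore manifestly non-causal. The real obstacle is the equality of the two optima, and your proof should invoke Lemma \ref{lemma:main_lemma} explicitly at that step.
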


It is tempting to conjecture that this property holds for the general $(d,k)$-RLL constrained BEC, but we will provide a counterexample in Section \ref{sec:(2,infty)-RLL}. Theorems \ref{thm:capacity} and \ref{thm:fb_capacity=nc_capacity} both generalize parallel results shown in \cite{Sabag_BEC}, where the special case of $k=1$ was calculated using different techniques.
The following inequalities are the main steps required to prove Theorems \ref{thm:capacity} and \ref{thm:fb_capacity=nc_capacity}:
\begin{equation}\label{eq:inequalities}
  \max_{0\leq\delta_0,\ldots,\delta_{k-1}\leq \frac{1}{2}} R_\varepsilon\left(\delta_0, \ldots ,\delta_{k-1}\right) \stackrel{(a)}\leq
  C^{\mathrm{fb}}_{(0,k)}(\varepsilon) \stackrel{(b)}\leq C^{\mathrm{nc}}_{(0,k)}(\varepsilon) \stackrel{(c)}\leq \max_{0\leq\delta_0,\ldots,\delta_{k-1}\leq 1} R_\varepsilon\left(\delta_0,\ldots,\delta_{k-1}\right) ,
\end{equation}
where
\begin{itemize}
  \item Inequality (a) follows from the coding scheme that is presented in Algorithm \ref{alg:Coding_Scheme}. Specifically, it is shown that $R_\epsilon(\delta_0,\ldots,\delta_{k-1})$ is achievable for any choice of $\delta_i \leq 0.5, \ i=0,\ldots,k-1$.
  \item Inequality (b) follows from the operational definitions of the code in Section \ref{sec:Notation&Problem_Definition}.
  \item Inequality (c) follows from standard converse techniques for the non-causal setting.
\end{itemize}

The next lemma shows that the maximal value of $R_\varepsilon(\delta_0,\ldots,\delta_{k-1})$ remains the same whether the maximization domain is $0\leq\delta_0,\ldots,\delta_{k-1}\leq \frac{1}{2}$ or $0\leq\delta_0,\ldots,\delta_{k-1}\leq 1$. Thus, the chain of inequalities is actually a chain of equalities.
\begin{lemma}\label{lemma:main_lemma}
For all $\varepsilon\in [0,1]$ and $k\geq1$,
  \begin{equation*}
    \max_{0\leq\delta_0,\ldots,\delta_{k-1}\leq \frac{1}{2}} R_\varepsilon\left(\delta_0, \ldots ,\delta_{k-1}\right) =
    \max_{0\leq\delta_0,\ldots,\delta_{k-1}\leq 1} R_\varepsilon\left(\delta_0,\ldots,\delta_{k-1}\right).
  \end{equation*}
  Moreover, the k-tuple $\argmax_{0\leq\delta_0,\ldots,\delta_{k-1}\leq \frac{1}{2}} R_\varepsilon\left(\delta_0, \ldots ,\delta_{k-1}\right)$ satisfies Eq. \eqref{eq:delta's}.
\end{lemma}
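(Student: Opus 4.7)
My plan is a first-order analysis of $R_\varepsilon$ on the compact box $[0,1]^k$: locate the interior critical points, show they coincide with the recursion \eqref{eq:delta's}, and verify that every interior critical point in fact lies in $(0,1/2]^k$, which forces the two maxima to agree.

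First, continuity and compactness ensure that the maximum is attained at some $\delta^{\ast}\in[0,1]^k$ with optimal rate $R^{\ast}$. To rule out the boundary I would exploit the vertical tangents of $H_2$ at $0$ and $1$: when $\delta_j\to 0^+$ with all earlier coordinates positive (so $P_j>0$), the $H_2'(\delta_j)P_j$ term drives $\partial R_\varepsilon/\partial\delta_j$ to $+\infty$, and symmetrically $\partial R_\varepsilon/\partial\delta_j\to-\infty$ as $\delta_j\to 1^-$. The degenerate case $\delta_0=0$ collapses $R_\varepsilon$ to $0$ and is trivially not optimal. Hence every maximizer lies in $(0,1)^k$.

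Second, at an interior maximizer the first-order condition $\partial N/\partial\delta_j=R^{\ast}\,\partial D/\partial\delta_j$, with $N=\sum_{i=0}^{k-1}\overline{\varepsilon}^{i+1}H_2(\delta_i)P_i$, $D=1+\sum_{i=0}^{k-1}\overline{\varepsilon}^{i+1}Q_i$, $P_i=\prod_{m=0}^{i-1}\delta_m$, and $Q_i=\prod_{m=0}^{i}\delta_m$, simplifies (multiplying by $\delta_j$ and using $Q_j=\delta_jP_j$) to
\[
R^{\ast}\sum_{i\geq j}\overline{\varepsilon}^{i+1}Q_i=\overline{\varepsilon}^{j+1}H_2'(\delta_j^{\ast})Q_j+\sum_{i>j}\overline{\varepsilon}^{i+1}H_2(\delta_i^{\ast})P_i.
\]
Subtracting this identity at index $j+1$ from the one at index $j$, using $Q_j=P_{j+1}$ and $Q_{j+1}=\delta_{j+1}^{\ast}P_{j+1}$, and invoking the elementary identity $\delta H_2'(\delta)-H_2(\delta)=\log_2\overline{\delta}$, the system collapses (with $\overline{\delta}_k^{\ast}\coloneqq 1$) to
\[
H_2'(\delta_j^{\ast})=R^{\ast}+\overline{\varepsilon}\,\log_2\overline{\delta}_{j+1}^{\ast},\qquad j=0,1,\dots,k-1.
\]
Exponentiating and solving for $\delta_j^{\ast}$ yields exactly \eqref{eq:delta's}. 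At the endpoint $j=k-1$ this reads $H_2'(\delta_{k-1}^{\ast})=R^{\ast}\geq 0$, which already gives $\delta_{k-1}^{\ast}\leq 1/2$.

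Third, I would extend $\delta_j^{\ast}\leq 1/2$ to all $j$ by reverse induction in $j$. The inductive step is equivalent to showing $R^{\ast}\geq\overline{\varepsilon}\log_2(1/\overline{\delta}_{j+1}^{\ast})$, which I would close by combining the BEC-type bound $R^{\ast}\leq\overline{\varepsilon}$ with the explicit recursive formula for $\overline{\delta}_{j+1}^{\ast}$ obtained by iterating \eqref{eq:delta's} from the base $\overline{\delta}_{k-1}^{\ast}=2^{R^{\ast}}/(1+2^{R^{\ast}})$. Once $\delta^{\ast}\in(0,1/2]^k$ is secured, the maximum on the larger box $[0,1]^k$ is already attained inside $[0,1/2]^k$, the two maxima coincide, and the maximizer on the smaller box satisfies \eqref{eq:delta's}. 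The main obstacle is this reverse induction: the inequality $R^{\ast}\geq\overline{\varepsilon}\log_2(1/\overline{\delta}_{j+1}^{\ast})$ becomes tight in the unconstrained $k\to\infty$ limit, where $R^{\ast}\to\overline{\varepsilon}$ and every $\delta_\ell^{\ast}\to 1/2$, so the argument cannot afford a crude bound on $R^{\ast}$ and must rely on the exact self-consistency enforced by the recursion.
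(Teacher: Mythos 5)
Your first two steps track the paper's argument closely: you rule out boundary maxima via the vertical tangents of $H_2$ (the paper does the same thing packaged as a KKT argument in its Lemma~\ref{lemma:KKT}), and your stationarity condition
\[
H_2'(\delta_j^\ast)=R^\ast+\overline{\varepsilon}\log_2\overline{\delta}_{j+1}^\ast,\qquad j=0,\dots,k-1,\ \ \overline{\delta}_k^\ast:=1,
\]
is a clean and correct reformulation of what the paper derives in Lemma~\ref{lemma:delta's_relations}; exponentiating it does give Eq.~\eqref{eq:delta's}. It also yields, as you note, $\delta_{k-1}^\ast\leq 1/2$ for free since $R^\ast\geq 0$.

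The gap is exactly where you flag it, and it is not closable by the route you sketch. The recursion \eqref{eq:delta's} forces $\delta_0^\ast\geq\delta_1^\ast\geq\cdots\geq\delta_{k-1}^\ast$ (the paper proves this as part of Lemma~\ref{lemma:delta's_relations}), so $\delta_{k-1}^\ast$ is the \emph{smallest} coordinate and your easy bound on it gives nothing about the others. Your reverse induction from $j+1$ to $j$ needs $R^\ast\geq\overline{\varepsilon}\log_2(1/\overline{\delta}_{j+1}^\ast)$; plugging in only the inductive hypothesis $\overline{\delta}_{j+1}^\ast\geq 1/2$ bounds the right side by $\overline{\varepsilon}$, which is also the upper bound on $R^\ast$, so the step does not close — you acknowledge this, but merely promising to "rely on the exact self-consistency of the recursion" is not a proof. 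The paper circumvents this entirely: in its Lemma~\ref{lemma:delta_0<1/2} it holds $(\delta_1,\dots,\delta_{k-1})$ arbitrary in $[0,1]^{k-1}$ and studies $\partial R_\varepsilon/\partial\delta_0$ alone as a function of $\delta_0$, showing it is monotone non-increasing, diverges to $+\infty$ as $\delta_0\to 0^+$, and — this is the substantive computation — is strictly negative at $\delta_0=1/2$, because $\big(\tfrac{\partial N}{\partial\delta_0}D-N\tfrac{\partial D}{\partial\delta_0}\big)\big|_{\delta_0=1/2}$ collapses to a sum of nonpositive terms with one strictly negative (cf.\ \eqref{eq:delta_1/2 2}). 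Hence the unique zero is in $(0,1/2)$, so $\delta_0^\ast<1/2$, and then the monotone ordering pulls all the other coordinates into $[0,1/2)$. You should replace your reverse induction by this direct sign analysis at $\delta_0=1/2$; the tightness you noticed in the $k\to\infty$ limit is genuine, and it is precisely why the bound must come from a pointwise algebraic identity rather than from chaining $R^\ast\leq\overline{\varepsilon}$.
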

Theorems \ref{thm:capacity} and \ref{thm:fb_capacity=nc_capacity} are concluded from the inequalities chain \eqref{eq:inequalities} and Lemma \ref{lemma:main_lemma}, which is proved in Appendix \ref{sec:equality_of_bounds}

 \section{Optimal Coding Scheme for the Input-Constrained BEC with Feedback}\label{sec:coding_scheme}
 In this section, we introduce the idea behind the optimal coding scheme, as well as the coding itself, which is presented in Algorithm \ref{alg:Coding_Scheme}. We then prove that the scheme is feasible, meaning that the generated input sequence does not violate the input constraint, and, finally, calculate its rate.
 \subsection{Coding Scheme}
 Before presenting the coding scheme itself, we discuss the basic ideas in accordance with which the scheme operates. The coding scheme is a mechanism that allows the encoder to transmit a message $m\in\mathcal{M}=\{1,2,\ldots,2^{nR}\}$ to the decoder without violating the channel's input constraint. The main feature of the scheme is a dynamic set of possible messages that is known both to the encoder and to the decoder at all times. Both parties will systematically reduce the size of the set of possible messages from $2^{nR}$ in the beginning of the transmission process to a single message that will then be announced as the correct message.

 \begin{figure}
  \centering
  \psfrag{A}[c][][1.2]{$l_j$}
  \psfrag{B}[c][][.8]{$\delta_j$}
  \psfrag{C}[c][][.8]{$0$}
  \psfrag{D}[c][][.8]{$1$}
  \psfrag{E}[c][][1.2]{$\tilde{l}_0$}
  \psfrag{F}[c][][.8]{$\overline{\delta}_0$}
  \includegraphics[scale=0.8]{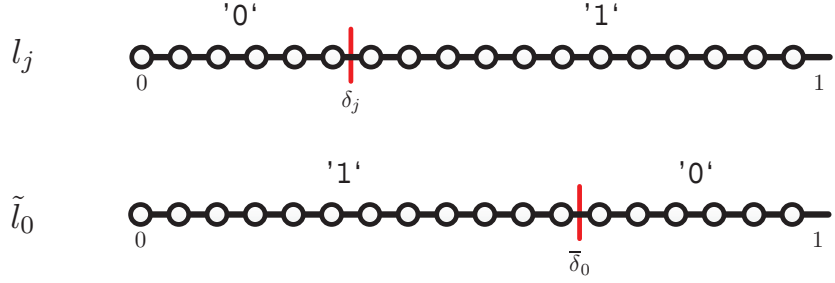}
  \caption{Labelings used in the coding scheme, with $j=1,\ldots,k$. Each subsection of $[0,1]$ is labeled with `$0$' or `$1$'.}\label{figures:Labeligns}
  \end{figure}

 Initially, the messages are mapped uniformly to message points in the unit interval by applying $m \mapsto \frac{m-1}{2^{nR}}$. As long as  transmission proceeds, the set of possible messages is represented by uniform points on the unit interval with proper scaling.

 Channel inputs are determined by $k+2$ \textit{labeling} functions, which map the unit interval into $\mathcal{X}$. Given a labelling, $l_j$, with a corresponding parameter $\delta_j$ the encoder assigns the label `$0$' to a subsection of $[0,1]$ of length $\delta_j$ and the label `$1$' to the rest of $[0,1]$. Fig. \ref{figures:Labeligns} depicts the various labelings. Define the following function:
 \begin{equation}\label{eq:X_function}
   X(m,L) = \begin{cases}
                  0, & (L=\tilde{l}_0 \text{ and } m>\overline{\delta}_0) \text{ or } (L=l_j \text{ and } m<\delta_j, \ j=0,\ldots,k)  \\
                  1, & \mbox{otherwise}.
                \end{cases}
 \end{equation}
 The channel input is $X_i = X(m,L_i)$, where $m$ is the correct message point and $L_i$ is the labeling being used at time $i$.

The labelling at each time is a function of all channel outpouts and can be computed recursively from the previous channel input and the previous labelling. Therefore, the instantaneous labelling is available both to the encoder and the decoder. Transition between the various labelings is controlled by a finite-state machine (FSM), which is illustrated in Fig. \ref{figures:coding_scheme_graph}. Define the following function:
 \begin{equation}\label{eq:Label_function}
   G(L,Y) = \begin{cases}
              l_0, &  (L=l_k) \text{ or }(Y=1) \text{ or } (Y=? \text{ and }L=\tilde{l}_0) \\
              \tilde{l}_0, &  Y=? \text{ and } L\neq\tilde{l}_0 \\
              l_{j+1}, &  Y=0
            \end{cases},
 \end{equation}
 and thus, $L_{i+1} = G(L_i,Y_i)$.

 \begin{figure}
  \centering
  \psfrag{A}[c][][.8]{$l_0$}
  \psfrag{B}[c][][.8]{$l_1$}
  \psfrag{C}[c][][.8]{$0$}
  \psfrag{D}[c][][.8]{$1$}
  \psfrag{E}[c][][.8]{$\tilde{l}_0$}
  \psfrag{F}[c][][.8]{$?$}
  \psfrag{G}[c][r][.8]{$1/?$}
  \psfrag{H}[c][][.8]{$l_2$}
  \psfrag{I}[c][][.8]{$l_{k-1}$}
  \psfrag{J}[c][][.8]{$l_k$}
  \includegraphics[scale=0.8]{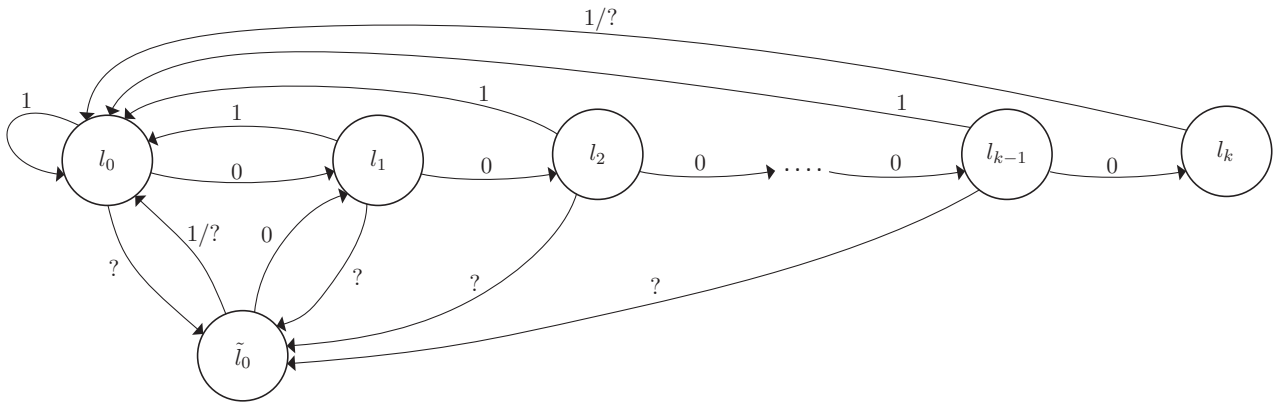}
  \caption{Finite state machine for the labelings transition. The nodes describe the instantaneous labelling that is used by the encoder. Edges correspond to channel outputs. Each node in the diagram corresponds to a labelling that can be calculated both at the encoder and at the decoder, since edges are a function of the outputs. }\label{figures:coding_scheme_graph}
  \end{figure}

A transmission at time $i$ is said to be successful if $Y_i\neq ?$. Due to the nature of the BEC, whenever $Y_i\neq ?$ the decoder can know with certainty the value of $X_i$. Denote by $\hat{M}_i^{(0)}$ and $\hat{M}_i^{(1)}$ the subsets of messages which are labeled at time $i$ with `$0$' and `$1$', respectively. Define $\hat{M}_0=\mathcal{M}$ and for $i \geq 1$:
\begin{equation}\label{eq:set_of_messages}
  \hat{M}_i = \begin{cases}
          \hat{M}_{i-1}, & Y_i=? \\
          \hat{M}_{i-1}^{(0)}, & Y_i=0 \\
          \hat{M}_{i-1}^{(1)}, & Y_i=1.
        \end{cases}
\end{equation}
Thus, a successful transmission reduces the size of the set of possible messages. Following a successful transmission, the remaining messages in the set of possible messages are uniformly mapped again to $[0,1)$. Fig. \ref{figures:successgul_bit} depicts a successful transmission and the subsequent reduction of the number of possible messages. The process continues until such a time that the set of possible messages contains only one message, at which point the decoder declares it to be $\hat{m}$.

 \begin{figure}
  \centering
  \psfrag{A}[c][][.8]{$\delta_0$}
  \psfrag{B}[c][][.8]{$\delta_1$}
  \psfrag{C}[c][][.8]{$0$}
  \psfrag{D}[c][][.8]{$1$}
  \psfrag{E}[c][][.8]{$i=1$}
  \psfrag{F}[c][][.8]{$Y_1=0$}
  \psfrag{G}[c][r][.8]{$i=2$}
  \psfrag{H}[c][r][.8]{$L=l_0$}
  \psfrag{I}[c][r][.8]{$L=l_1$}
  \psfrag{J}[c][r][.8]{$X_1=0$}
  \psfrag{K}[c][r][.8]{$X_2=1$}
  \includegraphics[scale=0.65]{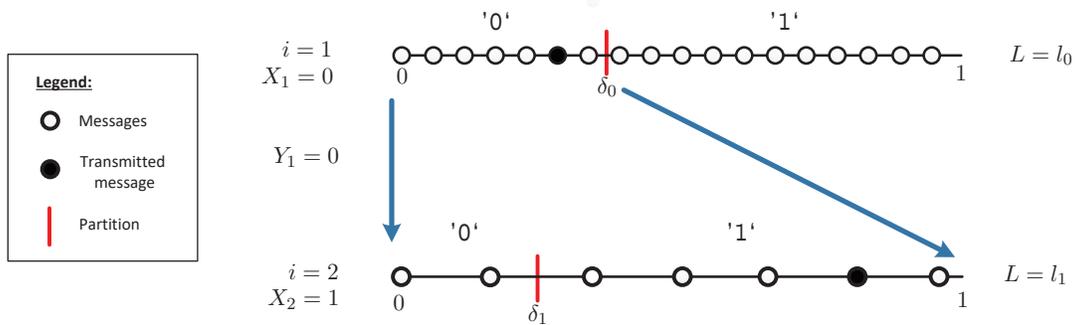}
  \caption{Example of a successful transmission. The black dot is the correct message point. At time instance $i=1$, the labeling is $L=l_0$ and the message point is labeled with `$0$' since it lies within $[0,\delta_0)$; thus, the encoder transmits $X_1=0$. Assume that $Y_1=0$. Consequently, the messages that were labeled with `$1$' are discarded, and the remaining messages are repositioned uniformly across $[0,1]$. These messages are $\hat{M}_1$. In accordance with the FSM in Fig \ref{figures:coding_scheme_graph}, the next label is $L=l_1$. At $i=2$, the message point is labeled with `$1$' since it now lies within $[\delta_1,1)$; thus, the encoder will transmit $X_2=1$.} \label{figures:successgul_bit}
  \end{figure}

Algorithm \ref{alg:Coding_Scheme} presents the coding scheme in pseudo code form. The functions $X(\cdot,\cdot)$ and $G(\cdot,\cdot)$ mentioned in the algorithm are defined in Eq. \eqref{eq:X_function} and Eq. \eqref{eq:Label_function}, respectively.

 \begin{algorithm}
\caption{Coding Scheme}\label{alg:Coding_Scheme}
\begin{algorithmic}
\State Inputs: $m$ - correct message
\State $\hat{\mathcal{M}} = \mathcal{M}$
\State $\text{Label} = l_0$
\While {$|\hat{M}|>1$}
    \State Transmit $X(\text{Label},m)$ \Comment{Encoder operation}
    \If {$Y=0$}
    \State $\hat{M}=\hat{M}^{(0)}$
\ElsIf {$Y=1$}
    \State $\hat{M}=\hat{M}^{(1)}$
\EndIf
\State $\text{Label} = G(\text{Label},Y)$
\EndWhile
\State $\hat{m} = \hat{\mathcal{M}}$ \Comment{Decoder operation}
\end{algorithmic}
\end{algorithm}

\subsection{Feasibility of the Proposed Scheme}
First, we show that the coding scheme satisfies the $(0,k)$-RLL constraint, that is, no message is mapped by the scheme into a sequence with more than $k$ consecutive `$0$'s. The following lemma shows that the constraint is satisfied when restricting the scheme parameters $\delta_j$.

 \begin{lemma}\label{lemma:feasability}
   If $\delta_j\leq\frac{1}{2}$ for $j=0,\ldots,k-1$, then any channel input sequence generated by the proposed coding scheme satisfies the $(0,k)$-RLL constraint.
 \end{lemma}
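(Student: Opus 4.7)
The strategy is to track, inside any maximal run of consecutive `$0$'s in the input sequence, which labelings the encoder could possibly be using, and then to use the hypothesis $\delta_j\leq\frac{1}{2}$ to rule out runs longer than $k$.

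First, I would extract a geometric observation from \eqref{eq:X_function}: the set of message points $m$ for which $X=0$ (the ``$0$-region'') is $[0,\delta_j)$ under $l_j$ and $(\overline{\delta}_0,1]$ under $\tilde{l}_0$, while the $0$-region under $l_k$ is empty because $\overline{\delta}_k:=1$ forces $\delta_k=0$. Under the hypothesis $\delta_j\leq\frac{1}{2}$, the $0$-regions of $l_0,\ldots,l_{k-1}$ all lie in $[0,\tfrac{1}{2}]$, while that of $\tilde{l}_0$ lies in $[\tfrac{1}{2},1]$, so the two families have disjoint $0$-regions. Since $m$ is a single fixed point in $[0,1]$ throughout the transmission, every label used within a single run of `$0$'s must come from just one of these two families.

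Next, I would pin down where a run of `$0$'s can begin. From \eqref{eq:Label_function}, every successor label following an input with $X_{i-1}=1$ (so $Y_{i-1}\in\{1,?\}$) is either $l_0$ or $\tilde{l}_0$, irrespective of $L_{i-1}$. Combined with the initialization $L_1=l_0$, each maximal run of `$0$'s therefore starts at some $L_i\in\{l_0,\tilde{l}_0\}$. I would then split into two cases. If $L_i=\tilde{l}_0$, then $m>\overline{\delta}_0\geq\tfrac{1}{2}$, and \eqref{eq:Label_function} gives $L_{i+1}\in\{l_0,l_1\}$ (for $Y_i=?$ and $Y_i=0$, respectively); but then $X_{i+1}=0$ would require $m<\delta_{j}\leq\tfrac{1}{2}$ for some $j\in\{0,1\}$, contradicting $m>\tfrac{1}{2}$, so the run has length $1$. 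If $L_i=l_0$, then by disjointness the labels used throughout the run must remain in $\{l_0,\ldots,l_{k-1}\}$, never entering $\tilde{l}_0$. Inspecting \eqref{eq:Label_function}, the only within-run transition with $X=0$ that stays in this family is $l_j\to l_{j+1}$ via $Y=0$ (any erasure would immediately push us to $\tilde{l}_0$). Hence the labels in the run are $l_0,l_1,l_2,\ldots$ in order; since $l_k$ has empty $0$-region, the run must stop by its $k$-th symbol, giving length at most $k$.

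The main obstacle I expect is precisely to rule out a ``sneaky'' run that uses the erasure-induced state $\tilde{l}_0$ to extend a stretch of `$0$'s past $k$, which is a priori conceivable because erasures repeatedly send the state toward $\tilde{l}_0$ and $\tilde{l}_0$ itself admits $X=0$ on part of $[0,1]$. The hypothesis $\delta_j\leq\frac{1}{2}$ is exactly what forces the $0$-regions of the $l_j$-family and the $\tilde{l}_0$-family apart, and this is what collapses the whole argument into the two clean cases above.
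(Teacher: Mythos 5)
Your geometric reading of \eqref{eq:X_function} is the right starting point (the $0$-regions of $l_0,\ldots,l_{k-1}$ sit in $[0,\tfrac12]$, the $0$-region of $\tilde l_0$ sits in $[\tfrac12,1]$, and $l_k$ has empty $0$-region), but the sentence ``since $m$ is a single fixed point in $[0,1]$ throughout the transmission'' is false, and the rest of your argument leans on it. After every \emph{successful} transmission the scheme remaps the surviving set $\hat M^{(0)}$ or $\hat M^{(1)}$ uniformly back onto $[0,1)$, so the position of the correct message changes (only after an erasure is it unchanged). Consequently a run of `$0$'s is not confined to one of your two ``families.'' Your case $L_i=\tilde l_0$ then reaches a wrong conclusion: if $X_i=0$ (so $m>\overline\delta_0$) and $Y_i=0$, the interval $(\overline\delta_0,1]$ is rescaled to $[0,1)$, the new position of $m$ can fall below $\delta_1$, and the run continues as $\tilde l_0\to l_1\to l_2\to\cdots$, reaching length up to $k$ (terminating only at $l_k$). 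So the claim ``the run has length $1$'' is simply incorrect; the bound $k$ still holds, but not for the reason you gave. The same flaw also underlies the ``by disjointness the labels must remain in $\{l_0,\ldots,l_{k-1}\}$'' step.

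The way the hypothesis $\delta_j\le\tfrac12$ actually gets used (and this is what the paper's proof does) is local in time, exploiting that no rescaling occurs at an erasure: if the encoder is at $l_j$ with $X=0$, then the \emph{current} position of $m$ lies in $[0,\delta_j)\subseteq[0,\tfrac12)$; if the output is `$?$', $m$ does not move, the next label is $\tilde l_0$, and because $\overline\delta_0\ge\tfrac12>m$ the next input is `$1$', so the run of `$0$'s is cut off immediately. Combined with the fact that a run fed only by `$0$' outputs walks $l_0\to l_1\to\cdots\to l_k$ (or $\tilde l_0\to l_1\to\cdots\to l_k$) and is forced to emit `$1$' at $l_k$, this bounds every run by $k$. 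Your write-up should replace the global ``$m$ is fixed'' claim by this per-step statement about $m$'s position at the moment of an erasure.
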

 \begin{proof}
 We show that if $\delta_j\leq\frac{1}{2}$ for $j=0,\ldots,k-1$, then no message is labeled `$0$' more than $k$ times in a row. From Eqs. \eqref{eq:X_function} and \eqref{eq:Label_function}, the channel input, $X_i$ is a function of $m$ and $y^{i-1}_{i-k-1}$. Therefore, we divide the proof into three disjoint cases based on the $k$ last outputs $y^{i-1}_{i-k-1}$. For each case, we show that the subsequent sequence of channel inputs cannot contain more than $k$ consecutive `$0$'s. Assume that transmission begins at the node associated with labeling $l_0$.
 \begin{enumerate}
   \item Any output sequence (of length $k$) that contains a `$1$' cannot cause a violation.
   \item An output sequence of $k$ consecutive `$0$'s ends at $l_k$ (Fig. \ref{figures:coding_scheme_graph}). Thus, the next channel input is $X=1$ (Fig \ref{figures:Labeligns}).
   \item Lastly, consider a sequence of $k$ outputs that contains both `$0$'s and erasures. Assume the first erasure occurred at time instance $i$, meaning that the erasure took place while the encoder was using labeling $l_i$. This means that all messages between $0$ and $\delta_i$ in $[0,1)$ were labeled with `$0$' $i+1$ times in a row. The next labeling that will be used is $\tilde{l}_0$. In this labeling, all messages between $0$ and $\overline{\delta}_0$ are labeled with `$1$'. Since $\delta_0 \leq \frac{1}{2}$, we have that $\overline{\delta}_0 \geq \frac{1}{2}$, so all messages that were labeled `$0$' $i+1$ times in a row will be labeled `$1$' in $\tilde{l}_0$. This analysis holds for any $i=0,\ldots,k-1$.
 \end{enumerate}
 In summary, setting $\delta_i \leq \frac{1}{2}$, $i=0,\ldots,k-1$ ensures that the scheme does not violate the $(0,k)$-RLL constraint.
\end{proof}

 \subsection{Rate Analysis}
 The achieved rate $R$ is measured by $\frac{\text{expected number of information bits}}{\text{expected number of channel uses}}$. Define $Q$ as the number of information bits gained in a single channel use, i.e., the quotient of the size of the set of possible messages before and after the transmission. Additionally, Let $L$ be the random variable which corresponds to the labeling and takes values in $\mathcal{L} = \{\tilde{l}_0,l_0,\ldots,l_k\}$.

 In the following lemma we calculate the expectation of $Q \mid L=l$.

 \begin{lemma}\label{lemma:expectation_Q|L=l}
   For all $l\in\mathcal{L}$, we have that $\mathbb{E}[Q \mid L=l] = \overline{\varepsilon}H_2(\delta_l)$, where $\delta_l$ is the $\delta$ relevant to the labeling $l$.
 \end{lemma}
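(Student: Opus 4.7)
The plan is to condition on $Y \in \{0, 1, ?\}$, read off $Q$ and its probability in each branch, and add them up. The first task is to pin down the conditional distribution of $Y$ given $L = l$. My key observation would be that by the scheme's construction---the initial uniform mapping $m \mapsto (m-1)/2^{nR}$ followed by uniform repositioning of the survivors after every non-erased output---the true message point inside the current rescaled unit interval is uniform and independent of the current erasure variable $\theta_i$. Inspecting \eqref{eq:X_function}, I would verify that in every labeling $l \in \mathcal{L}$---including $\tilde l_0$, whose `$0$' region is the sub-interval $(\overline{\delta}_0, 1]$ of length $\delta_0$---exactly a $\delta_l$-fraction of $[0,1)$ carries the label `$0$'. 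Combining this uniformity with the BEC transitions then gives $\Pr(Y=0\mid L=l) = \overline{\varepsilon}\delta_l$, $\Pr(Y=1 \mid L=l) = \overline{\varepsilon}\overline{\delta}_l$, and $\Pr(Y=? \mid L=l) = \varepsilon$.

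Next, I would read off $Q = \log_2(|\hat M_{i-1}|/|\hat M_i|)$ in each branch using \eqref{eq:set_of_messages}: the branches $Y=0$ and $Y=1$ shrink $\hat M$ by factors $\delta_l$ and $\overline{\delta}_l$ respectively, so that $Q = -\log_2\delta_l$ and $Q = -\log_2\overline{\delta}_l$, while $Y=?$ does not shrink $\hat M$ at all, giving $Q=0$. Multiplying the three conditional values of $Q$ by the probabilities above and summing is a one-line binary-entropy calculation that collapses to $\overline{\varepsilon}H_2(\delta_l)$. I would also spend a sentence on the degenerate labeling $l = l_k$: the FSM forces $X = 1$ there, corresponding to $\delta_{l_k}=0$, and the formula correctly returns $\overline{\varepsilon}H_2(0) = 0$, consistent with the fact that a pre-known outcome conveys no information.

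The only non-mechanical step---and the one I would treat most carefully---is the uniformity claim above. I would prove it by induction on $i$: the base case is immediate from the initial uniform mapping; an erasure leaves $\hat M$ and its uniform parameterisation unchanged; and a non-erased output $y \in \{0,1\}$ replaces $\hat M$ by the corresponding sub-interval, which is then repositioned uniformly over $[0,1)$, so uniformity is preserved at every step and independently of the past $\theta$'s that are not yet used. Once this bookkeeping is in place, the rest of the lemma is the single observation that a Bernoulli$(\delta_l)$ symbol passed through a BEC$(\varepsilon)$ reveals $\overline{\varepsilon}H_2(\delta_l)$ bits on average.
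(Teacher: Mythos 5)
Your proposal is correct and follows essentially the same route as the paper's proof: decompose by erasure versus non-erasure, use the uniformity of the posterior message distribution to read off the Bernoulli$(\overline{\delta}_l)$ input law, and recognize the resulting weighted sum as $\overline{\varepsilon}H_2(\delta_l)$. The only cosmetic difference is that you condition directly on $Y\in\{0,1,?\}$ while the paper first splits on $\theta$ and then computes the gain as $\delta_l\log_2(\delta_l a)+\overline{\delta}_l\log_2(\overline{\delta}_l a)=\log_2(a)-H_2(\delta_l)$; this is the same computation reorganized. Your extra attention to the induction establishing uniformity corresponds to what the paper relegates to a footnote (the discrete-message correction factor $e_i$ with a pointer to the companion reference), and your remark about $l_k$ giving $H_2(0)=0$ is a harmless sanity check the paper does not spell out.
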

 \begin{proof}
 Consider:
   \begin{align}\label{eq:expectation_Q|L=l}
     \mathbb{E}[Q \mid L=l] &= \varepsilon\mathbb{E}[Q \mid L=l,\theta= \text{\xmark}]+\overline{\varepsilon}\mathbb{E}[Q \mid L=l,\theta= \checkmark] \nonumber \\
      &\stackrel{(a)}{=} \overline{\varepsilon}\mathbb{E}[Q \mid L=l,\theta= \checkmark] ,
   \end{align}
   where (a) holds since if $\theta=\text{\xmark}$, then the transmitted symbol is erased by the channel and the set of possible messages is unchanged.

   In the proposed coding scheme, labeling $l_j$ assigns a portion of $[0,1]$ of size $\delta_j$ to the label `$0$' and the rest to label `$1$' for all $j=0,\ldots,k$. The labeling $\tilde{l}_0$ also assigns $\delta_0$ of the unit interval to `$0$'. If the labeling $l_j$ is employed, then the channel input is distributed according to $Ber(\overline{\delta}_j)$\footnote{For labelings $\tilde{l}_0,l_0,l_1,\ldots,l_{k-1}$, the encoder transmits $X=1$ if the correct message falls within a sub-interval of $[0,1)$ that has length $\overline{\delta}_0,\overline{\delta}_0,\overline{\delta}_1,\ldots,\overline{\delta}_{k-1},$ respectively. Note that the messages are discrete points in $[0,1)$ and it is possible for the partition to occur between two messages. This implies that the transmitted bit is distributed $Ber(\overline{\delta}_i+e_i)$, where $e_i$ is a correction factor. From the continuity of the entropy function, the contribution of this correction factor can be bounded with arbitrary constant by taking the block length $n$ be large enough. The precise details are omitted and follow parallel argument to \cite[Appendix C]{Sabag_BEC}.}

   Assume that $|\hat{\mathcal{M}}|=a$. If the successfully received bit was `$1$', then the new set of possible messages has size $\overline{\delta}_la$, and if it was `$0$', then the new set of possible messages has size $\delta_la$. The expected number of bits required to describe the new set of possible messages is $\overline{\delta}_l\log_2(\overline{\delta}_la)+\delta_l\log_2(\delta_la)=\log_2(a)-H_2(\delta_l)$. Thus, given that $L=l$, following a successful transmission the decoder gains $H_2(\delta_l)$ bits of information. Substituting into \eqref{eq:expectation_Q|L=l} we get:
   \begin{equation*}
     \mathbb{E}[Q \mid L=l] = \overline{\varepsilon}H_2(\delta_l).
   \end{equation*}
 \end{proof}

  The next lemma calculates the rate achieved by the proposed coding scheme.
  \begin{lemma}\label{lemma:rate}
    For any $\varepsilon\in[0,1]$, $k\geq1$ and $0\leq\delta_0,\ldots,\delta_{k-1}\leq\frac{1}{2}$, the proposed coding scheme achieves the following rate $R$:
    \begin{equation}\label{eq:rate_in_lemma}
      R = \frac{\overline{\varepsilon}H_2(\delta_0)+\displaystyle\sum_{i=1}^{k-1}\left(\overline{\varepsilon}^{i+1}H_2(\delta_i)\prod_{m=0}^{i-1}\delta_m\right)} {1+\displaystyle\sum_{i=0}^{k-1}\left(\overline{\varepsilon}^{i+1}\prod_{m=0}^{i}\delta_m\right)}.
    \end{equation}
  \end{lemma}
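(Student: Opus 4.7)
The plan is to recognize that the coding scheme induces a Markov chain on the labeling state space $\mathcal{L} = \{\tilde{l}_0, l_0, l_1, \ldots, l_k\}$, driven jointly by the i.i.d. erasure process and the (essentially) $\mathrm{Ber}(\overline{\delta}_l)$ distribution of the transmitted bit in state $l$. Lemma~\ref{lemma:expectation_Q|L=l} gives the per-step reward $\mathbb{E}[Q\mid L=l]=\overline{\varepsilon}H_2(\delta_l)$, so by ergodicity I would conclude
\[
R \;=\; \sum_{l\in\mathcal{L}} \pi(l)\,\mathbb{E}[Q\mid L=l] \;=\; \overline{\varepsilon}\sum_{l\in\mathcal{L}} \pi(l)\,H_2(\delta_l),
\]
where $\pi$ is the unique stationary distribution, $\delta_{\tilde{l}_0}:=\delta_0$, and $H_2(\delta_{l_k}):=0$ since $l_k$ forces a deterministic `$1$' and carries no information. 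Computing $\pi$ and simplifying is then all that remains.

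First I would read the transition probabilities off the FSM of Fig.~\ref{figures:coding_scheme_graph} combined with the input rule \eqref{eq:X_function}. Three structural observations make $\pi$ essentially explicit: (i) states $l_1,\ldots,l_k$ are entered only via $Y=0$, so $\pi(l_{j+1}) = \overline{\varepsilon}\delta_j \pi(l_j)$ for $j=1,\ldots,k-1$; (ii) both $l_0$ and $\tilde{l}_0$ feed $l_1$ through a received `$0$', yielding $\pi(l_1) = \overline{\varepsilon}\delta_0\bigl(\pi(l_0)+\pi(\tilde{l}_0)\bigr)$; and (iii) $\tilde{l}_0$ is entered only by erasures from $l_0,\ldots,l_{k-1}$, so $\pi(\tilde{l}_0) = \varepsilon\sum_{j=0}^{k-1}\pi(l_j)$. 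Setting $A:=\pi(l_0)+\pi(\tilde{l}_0)$ and unfolding (i)--(ii) gives
\[
\pi(l_j) \;=\; A\,\overline{\varepsilon}^{\,j}\prod_{m=0}^{j-1}\delta_m,\qquad j=1,\ldots,k,
\]
so $\sum_{l}\pi(l) = A\bigl(1+\sum_{i=0}^{k-1}\overline{\varepsilon}^{\,i+1}\prod_{m=0}^{i}\delta_m\bigr)$. Substituting into $\overline{\varepsilon}\sum_l \pi(l)H_2(\delta_l)$, the contributions of $l_0$ and $\tilde{l}_0$ combine into $A\,\overline{\varepsilon}H_2(\delta_0)$, the middle states $l_1,\ldots,l_{k-1}$ give $A\sum_{i=1}^{k-1}\overline{\varepsilon}^{\,i+1}H_2(\delta_i)\prod_{m=0}^{i-1}\delta_m$, and $l_k$ contributes zero; the common factor $A$ cancels in the ratio $R$, yielding exactly \eqref{eq:rate_in_lemma}. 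Irreducibility and aperiodicity are visible from the FSM for $0<\varepsilon<1$, which makes the remaining balance equation at $l_0$ automatic; the endpoints $\varepsilon\in\{0,1\}$ are trivial.

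The step I expect to be most delicate is translating the information-theoretic reward $Q$ of Lemma~\ref{lemma:expectation_Q|L=l} into the operational rate $R=\log_2|\mathcal{M}|/n$. Since $|\hat{\mathcal{M}}|$ shrinks by a factor of $2^{Q_i}$ at each successful transmission and is unchanged on erasures, the strong law of large numbers applied to the ergodic chain gives $\tfrac{1}{n}\log_2\bigl(|\mathcal{M}|/|\hat{\mathcal{M}}_n|\bigr)\to \overline{\varepsilon}\,\mathbb{E}_\pi[H_2(\delta_L)]$ almost surely, so decoding terminates once $|\hat{\mathcal{M}}_n|=1$ at $n\approx \log_2|\mathcal{M}|/R$ steps. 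The discretization gap between the continuous partition by $\delta_l$ and the actual split of the finite uniform message set is absorbed by the continuity-of-entropy argument of \cite[Appendix~C]{Sabag_BEC} flagged in the footnote to Lemma~\ref{lemma:expectation_Q|L=l}; feasibility for $\delta_j\le 1/2$ is already provided by Lemma~\ref{lemma:feasability}, so no further argument is needed there.
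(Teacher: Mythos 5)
Your proposal is correct and takes essentially the same approach as the paper: apply Lemma~\ref{lemma:expectation_Q|L=l}, invoke ergodicity of the labeling Markov chain, and compute the stationary distribution from the FSM of Fig.~\ref{figures:coding_scheme_graph}. The paper states step~(d) tersely; you simply carry out the stationary-distribution computation explicitly (cleverly lumping $\pi(l_0)+\pi(\tilde{l}_0)$ since only their sum enters the rate) and spell out the SLLN argument connecting $Q$ to the operational rate, but the decomposition and key steps are the same.
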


  \begin{proof}
  Consider the averaged gain of information divided by the amount of time:
  \begin{align*}
    R &= \lim_{n\to\infty}\frac{1}{n}\sum_{j=1}^{n}\mathbb{E}[Q_j] \\
     &\stackrel{(a)}{=}  \lim_{n\to\infty}\frac{1}{n}\sum_{j=1}^{n}\sum_{l\in\mathcal{L}}P(L_j=l)\mathbb{E}[Q_j \mid L_j=l] \\
     &\stackrel{(b)}{=} \sum_{l\in\mathcal{L}}\overline{\varepsilon}H_2(\delta_l)\lim_{n\to\infty}\frac{1}{n}\sum_{j=1}^{n}P(L_j=l) \\
     &\stackrel{(c)}{=} \sum_{l\in\mathcal{L}}\overline{\varepsilon}H_2(\delta_l)\pi(l) \\
     &= \overline{\varepsilon}H_2(\delta_0)\left( \pi(\tilde{l}_0)+\pi(l_0) \right)+\sum_{i=1}^{k-1}\overline{\varepsilon}H_2(\delta_i)\pi(l_i) \\
     &\stackrel{(d)}{=} \frac{\overline{\varepsilon}H_2(\delta_0)+\displaystyle\sum_{i=1}^{k-1}\left(\overline{\varepsilon}^{i+1}H_2(\delta_i)\prod_{m=0}^{i-1}\delta_m\right)}{1+\displaystyle\sum_{i=0}^{k-1}\left(\overline{\varepsilon}^{i+1}
   \prod_{m=0}^{i}\delta_m\right)},
  \end{align*}
  where
  \begin{itemize}
    \item [(a)] Follows from the law of total expectation.
    \item [(b)] Follows from Lemma \ref{lemma:expectation_Q|L=l} and exchanging the finite sums' order.
    \item [(c)] Follows from the definition of stationary probability. $\pi(l_i)$ is the stationary probability of labeling $l_i$. There exists a stationary probability because the random process $\{L_j\}$ is a positive recurrent, irreducible and aperiodic Markov chain, as can be seen from Fig. \ref{figures:coding_scheme_graph}.
    \item [(d)] Follows from calculation of the stationary probability of the Markov chain described in Fig. \ref{figures:coding_scheme_graph} and is parameterized with $\delta_j$, $j=0,\ldots,k-1$. Calculating the conditional probability of each edge is simple, using the law of total probability. For example, the conditional distribution of the edge beginning node $l_0$ and culminating in node $l_1$ is $\overline{\varepsilon}\delta_0$.
  \end{itemize}
   From Lemma \ref{lemma:feasability}, we conclude that $\max_{0\leq\delta_0,\ldots,\delta_{k-1}\leq\frac{1}{2}} R_\varepsilon(\delta_0,\ldots,\delta_{k-1})$ is an achievable rate.
 \end{proof}

\section{Non-Causal Upper Bound}\label{sec:upper_bound}
In this section, we present an upper bound of the non-causal capacity for the $(0,k)$-RLL input-constrained BEC (given in Eq. \eqref{eq:inequalities}). We begin with an observation: it is sufficient to look at a smaller family of codes, called  \emph{restricted codes}. We then proceed to calculate an upper bound on the achievable rate of such codes, using standard converse arguments, as well as the method of types and Markov theory results.

A code is said to be \textit{restricted} if
\begin{align}\label{eq:def_restricted}
  &g_i(m,\theta^{i-1},\theta_i=\text{\xmark})=1, \ \forall m,\theta^{i-1}, i=1,\ldots,n.
\end{align}
Condition \eqref{eq:def_restricted} states that if an erasure is about to occur, the encoder transmits $X=1$. The following lemma formalizes the fact that restricted codes can achieve the capacity.
\begin{lemma}\label{lemma:preliminary_lemma}
   For the $(0,k)$-RLL constrained non-causal BEC, if a rate $R$ is achievable, then $R$ can be achieved using a sequence of restricted codes.
   \begin{proof}
     Assume the rate $R$ is achieved using a sequence of codes: $C_n$. Define for each $n$ a new code $C'_n$ that is exactly the same as $C_n$ except that in $C'_n$, whenever $\theta_i=\text{\xmark}$ the encoder transmits $x_i=1$.

     The code $C'_n$ does not violate the input constraint since the original $C_n$ did not violate the constraint and transmitting $x_i=1$ is always permitted by the $(0,k)$-RLL input constraint. In addition, the channel outputs remain the same whether the code is $C_n$ or $C'_n$. This means that $P_e^{(n)}(C'_n)=P_e^{(n)}(C_n)$, and so the rate $R$ is also achieved by the sequence of $C'_n$.
   \end{proof}
 \end{lemma}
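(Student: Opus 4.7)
The plan is a direct modification argument: take any sequence of codes $C_n = \{g_i, \Psi\}$ that achieves rate $R$ on the non-causal $(0,k)$-RLL BEC, and build a parallel sequence of restricted codes $C'_n = \{g'_i, \Psi\}$ (same decoder) by locally overwriting the encoder output whenever the channel is about to erase. Concretely, I would define
\begin{equation*}
  g'_i(m,\theta^i) \;=\; \begin{cases} 1, & \theta_i = \text{\xmark}, \\ g_i(m,\theta^i), & \theta_i = \checkmark.\end{cases}
\end{equation*}
By construction $C'_n$ satisfies the restriction \eqref{eq:def_restricted}, so the whole task reduces to two verifications: (i) $g'_i$ still obeys the $(0,k)$-RLL encoder constraint in \eqref{eq:nc_enc}, and (ii) the output distribution, and hence the error probability, is unchanged.

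For (i), I would argue by contradiction / direct inspection. Suppose, at some time $i$, the preceding $k$ produced symbols $g'_{i-1}(m,\theta^{i-1}),\ldots,g'_{i-k}(m,\theta^{i-k})$ are all $0$. Since the override always outputs a $1$ when $\theta_j = \text{\xmark}$, none of the indices $j\in\{i-k,\ldots,i-1\}$ can have had an erasure; hence $\theta_j = \checkmark$ throughout that window, and on that window $g'_j = g_j$. Thus the past $k$ outputs of the original encoder $g$ are also $(0,\ldots,0)$, so by the constraint on $g$ we have $g_i(m,\theta^i)=1$. Therefore $g'_i(m,\theta^i)=1$ as well, regardless of whether $\theta_i$ is $\checkmark$ or $\text{\xmark}$, which is exactly what the $(0,k)$-RLL constraint demands. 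So $C'_n$ is a legal constrained code.

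For (ii), I would compare the channel outputs realization-by-realization for a fixed $(m,\theta^n)$. At indices with $\theta_i=\text{\xmark}$ both codes produce $Y_i=?$. At indices with $\theta_i=\checkmark$ we have $g'_i=g_i$ by definition, so $Y_i$ is identical for the two codes. Consequently the joint law of $Y^n$ under the two codes coincides, and since the two codes share the decoder $\Psi$, $P_e^{(n)}(C'_n)=P_e^{(n)}(C_n)\to 0$. Hence $R$ is achieved by the restricted sequence $\{C'_n\}$.

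The only real subtlety—what I would flag as the single point requiring care rather than a genuine obstacle—is the constraint check in step (i): one must not simply assert that the override is ``always allowed because $1$ is always allowed,'' because modifying past symbols can in principle change which future symbols are \emph{forced} by the constraint. The short inductive observation above, that a run of zeros in the modified code must have been a run of zeros in the original code over the same window, is what makes the argument go through cleanly.
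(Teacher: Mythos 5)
Your proposal is the same construction as the paper's proof: override the encoder output to $1$ whenever $\theta_i=\text{\xmark}$, observe that channel outputs (hence error probability) are unchanged, and check that the modified encoder still respects the constraint. The only difference is that you spell out the constraint check carefully — showing that a run of $k$ zeros in $C'_n$ forces the same run in $C_n$ over that window, because the override can only turn $0$s into $1$s — whereas the paper dismisses this in one line (``transmitting $x_i=1$ is always permitted''); your added rigor is warranted and correct, but the approach is identical.
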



\subsection{Upper Bound Calculation}
The following technical lemma is needed for the converse proof:
 \begin{lemma}\label{lemma:n-tuple_entropy}
 For any n-tuple constrained distribution $\tilde{P}_{Y^n}(y^n)=\mathbbm{1}_{\{y_1=1\}}\prod_{i=2}^{n}\tilde{P}^{(i)}_{Y_i \mid Y_{i-1}}(y_i \mid y_{i-1})$, where $\tilde{P}^{(i)}_{Y_i \mid Y_{i-1}}(y_i=0 \mid y_{i-1}=0)=0 \quad \forall i=2,\ldots,n$, there exists a time invariant constrained Markov distribution $\tilde{Q}_{Y^n}(y^n)=\prod_{i=1}^{n}\tilde{P}_{Y_i \mid Y_{i-1}}(y_i \mid y_{i-1})$ such that:
    \begin{equation}\label{eq:n-tuple_entropy}
      H_{\tilde{P}}(Y^n)\leq H_{\tilde{Q}}(Y^n)+\zeta_n,
    \end{equation}
    where $\lim_{n\to\infty}\zeta_n=0$.
 \end{lemma}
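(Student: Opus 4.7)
My plan is to average the time-varying Markov kernels of $\tilde P$ into a single constrained kernel, apply concavity of the entropy function to upper bound $H_{\tilde P}(Y^n)$ in terms of this averaged kernel, and then compare it to the entropy of the time-invariant Markov chain $\tilde Q$ generated by the averaged kernel via Markov-chain mixing arguments.

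Concretely, set $\hat\pi_1(y) \coloneqq \tfrac{1}{n-1}\sum_{i=2}^{n}\tilde P(Y_{i-1}=y)$ and define the weight-averaged kernel
\[
\hat P(y' \mid y) \;=\; \frac{1}{(n-1)\hat\pi_1(y)}\sum_{i=2}^{n}\tilde P(Y_{i-1}=y)\,\tilde P^{(i)}(y' \mid y)
\]
whenever $\hat\pi_1(y)>0$, extending arbitrarily (subject to $\hat P(0 \mid 0)=0$) otherwise. The hypothesis $\tilde P^{(i)}(0\mid 0)=0$ forces every summand with $(y,y')=(0,0)$ to vanish, so $\hat P$ is itself a constrained kernel. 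The Markov chain rule for $\tilde P$ together with $H_{\tilde P}(Y_1)=0$ gives $H_{\tilde P}(Y^n)=\sum_{i=2}^{n}\sum_{y}\tilde P(Y_{i-1}=y)\,H\bigl(\tilde P^{(i)}(\cdot \mid y)\bigr)$, and applying concavity of the entropy function to the weighted mixture over $i$ at each fixed $y$ yields the clean bound
\[
H_{\tilde P}(Y^n) \;\leq\; (n-1)\sum_{y}\hat\pi_1(y)\,H\bigl(\hat P(\cdot \mid y)\bigr).
\]
I then take $\tilde Q$ to be the time-invariant Markov chain with kernel $\hat P$ started from its stationary distribution $\pi$, so that $H_{\tilde Q}(Y^n)=H(\pi)+(n-1)\sum_{y}\pi(y)\,H\bigl(\hat P(\cdot \mid y)\bigr)$, and the remaining slack reduces to $\zeta_n \;=\; (n-1)\sum_{y}\bigl(\hat\pi_1(y)-\pi(y)\bigr)H\bigl(\hat P(\cdot \mid y)\bigr) - H(\pi)$.

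To show $\zeta_n \to 0$, I would exploit the near-stationarity of $\hat\pi_1$: a direct telescoping calculation gives $\hat\pi_1 \hat P - \hat\pi_1 = \tfrac{1}{n-1}\bigl(\tilde P_{Y_n} - \tilde P_{Y_1}\bigr)$, whose total variation is at most $2/(n-1)$, so $\hat\pi_1$ is almost $\hat P$-invariant. Composing this estimate with a spectral-gap bound for $\hat P$ turns it into $\|\hat\pi_1-\pi\|_1 = O(1/n)$, and together with the uniform bound $H(\hat P(\cdot \mid y)) \leq \log 2$ the slack becomes $O(1)$; to sharpen this to $o(1)$ I would route the final step through the method of types for Markov sources, decomposing $\tilde P$ into pair-type classes and dominating each by its stationary Markov surrogate, using that the number of Markov types is polynomial in $n$ while each class satisfies the usual $2^{nH+o(n)}$ bound. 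The main obstacle is the uniformity of these estimates in $n$: the kernel $\hat P$ itself depends on $n$, and its spectral gap could in principle shrink along the sequence. I would overcome this by a compactness argument on the set of feasible $2\times 2$ kernels with $\hat P(0 \mid 0)=0$: along a convergent subsequence the limit kernel is either ergodic, giving a uniform gap along the tail, or collapses to a deterministic regime in which both $\hat\pi_1$ and $\pi$ concentrate on the same state so the slack vanishes automatically.
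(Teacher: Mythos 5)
Your averaged kernel $\hat P$ is exactly the conditional of the single-letter pair distribution $\tilde P_{X,Y}$ that the paper constructs in Eq.~\eqref{eq:sigle_letter_dist}, and $\hat\pi_1$ is its $X$-marginal. So you and the paper reach the same object, but by different means: the paper derives $H_{\tilde P}(Y^n)\le n\,H_{\tilde P_{X,Y}}(Y\mid X)+\zeta_n$ by the method of second-order types on the super-alphabet $\mathcal X^n$, comparing first-order typical sets for $\tilde P_{Y^n}$ with second-order typical sets for $\tilde P_{X,Y}$, whereas your one line of Jensen applied to the Markov chain rule gives the cleaner bound $H_{\tilde P}(Y^n)\le(n-1)\sum_y\hat\pi_1(y)H(\hat P(\cdot\mid y))$ directly, with no $\varepsilon$--$\delta$ bookkeeping. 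That part of your argument is correct, and it is a genuinely more elementary route to the core inequality than the paper's.

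The gap is in the last step, where you must compare against $H_{\tilde Q}(Y^n)$ and show $\zeta_n\to0$. You take $\tilde Q$ to be the chain with kernel $\hat P$ started from its stationary $\pi$, so the slack reduces to $(n-1)\sum_y(\hat\pi_1(y)-\pi(y))H(\hat P(\cdot\mid y))-H(\pi)$. Your telescoping identity $\hat\pi_1\hat P-\hat\pi_1=\tfrac{1}{n-1}(\tilde P_{Y_n}-\tilde P_{Y_1})$ is correct, but turning near-invariance of $\hat\pi_1$ into a bound on $\|\hat\pi_1-\pi\|_1$ costs a factor $1/\gamma_n$, and $\hat P=\hat P_n$ varies with $n$. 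If, say, $\hat P_n(0\mid1)=p_n$ with $1-p_n=1/\sqrt n$, the gap is $\gamma_n\approx1/\sqrt n$, so your route gives $(n-1)\|\hat\pi_1-\pi\|_1=O(\sqrt n)$ while $\max_y H(\hat P_n(\cdot\mid y))=H_2(p_n)=\Theta(\log n/\sqrt n)$; their product is $\Theta(\log n)$, not $o(1)$. (A more careful signed/geometric accounting shows the true slack is bounded in that example, but the triangle-inequality route you describe does not see that cancellation.) The compactness argument does not close this: you would need to control every rate at which $\gamma_n$ can decay, not just the dichotomy ``ergodic limit'' versus ``deterministic limit,'' and in a periodic degenerate limit such as $\hat P(0\mid1)=\hat P(1\mid0)=1$ the stationary $\pi$ is uniform while $\hat\pi_1$ depends on the time-varying chain, so it is false that both ``concentrate on the same state.'' Finally, you acknowledge that even under a uniform gap your estimate only gives $\zeta_n=O(1)$, and the proposed fix---``route the final step through the method of types''---does not patch the mixing argument so much as replace it with the paper's own proof. (It is true that $\zeta_n=o(n)$ is all the converse actually uses, since $\zeta_n$ is later divided by $n$, but the lemma as stated claims $\zeta_n\to0$, and your argument does not establish that, nor even $O(1)$ without assuming a uniform gap.)
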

 The proof is available in Appendix \ref{sec:Lemma}. This result can readily be generalized to any $(d,k)$-RLL constraint imposed on the n-tuple distribution.

In the constraint graph, Fig. \ref{figures:(0,k)_RLL}, a node $S_i$ can be calculated from an any-length tuple $X^{i-1}$ by walking along the edges labelled with $X^{i-1}$, since we assume that the initial state is $s_0=0$. The notation $\tilde{P}$ will be used in various forms for distributions on $\mathcal{Y}$ to signify that the probability for $?$ is $\varepsilon$ and that the probability for a constrained word is $0$.
\begin{proof}[Proof of the upper bound]
Let $R$ be an achievable rate using a restricted code, and consider the following chain of inequalities:
 \begin{align}\label{1}
   nR &= H(M) \nonumber \\
    &\stackrel{(a)}{\leq} I(Y^n;M) + \varepsilon_n \nonumber \\
    &\stackrel{(b)}{=} H(Y^n) - \sum_{i=1}^{n}H(Y_i \mid M,Y^{i-1}) +\varepsilon_n \nonumber \\
    &\stackrel{(c)}{\leq} H(Y^n) - nH_2(\varepsilon) +\varepsilon_n \nonumber \\
    &\stackrel{(d)}{\leq} \sum_{i=1}^{n}H(Y_i \mid Y_{i-k}^{i-1}) - nH_2(\varepsilon) +\varepsilon_n \nonumber \\
    &\stackrel{(e)}{\leq} \max_{\{\tilde{P}_i(y_i \mid y^{i-1}) \}_{i=1}^n} \sum_{i=1}^{n}H(Y_i \mid Y_{i-k}^{i-1}) - nH_2(\varepsilon) +\varepsilon_n \nonumber \\
    &\stackrel{(f)}{=} \max_{\{\tilde{P}_i(y_i \mid y_{i-k}^{i-1}) \}_{i=1}^n} \sum_{i=1}^{n}H(Y_i \mid Y_{i-k}^{i-1}) - nH_2(\varepsilon) +\varepsilon_n \nonumber \\
    &\stackrel{(g)}{\leq} \max_{\{\tilde{P}(y_i \mid y_{i-k}^{i-1}) \}_{i=1}^n} \sum_{i=1}^{n}H(Y_i \mid Y_{i-k}^{i-1}) - nH_2(\varepsilon) +\varepsilon'_n \nonumber \\
    &\stackrel{(h)}= \max_{\{\tilde{P}(y_i \mid y_{i-k}^{i-1},s_{i-1})\}_{i=1}^n}\sum_{i=1}^{n}H(Y_i \mid Y_{i-k}^{i-1},S_{i-1}) - nH_2(\varepsilon) +\varepsilon'_n \nonumber \\
    &\stackrel{(i)}{\leq} \max_{\{\tilde{P}(y_i \mid y_{i-k}^{i-1},s_{i-1})\}_{i=1}^n}\sum_{i=1}^{n}H(Y_i \mid S_{i-1}) - nH_2(\varepsilon) +\varepsilon'_n \nonumber \\
    &\stackrel{(j)}{=} \max_{\{\tilde{P}(y_i \mid s_{i-1})\}_{i=1}^n}\sum_{i=1}^{n}H(Y_i \mid S_{i-1}) - nH_2(\varepsilon) +\varepsilon'_n \nonumber \\
    &\stackrel{(k)}{=} \max_{0\leq\delta_0,\ldots,\delta_{k-1}\leq1}\sum_{i=1}^{n}\sum_{j=0}^{k-1} \Pr(S_{i-1}=j)H_3(\overline{\varepsilon}\delta_j,\varepsilon,\overline{\varepsilon}\overline{\delta}_j) - nH_2(\varepsilon) +\varepsilon'_n \nonumber \\
    &\stackrel{(l)}{=} \max_{0\leq\delta_0,\ldots,\delta_{k-1}\leq1}\sum_{i=1}^{n}\sum_{j=0}^{k-1}\Pr(S_{i-1}=j)\left[ H_2(\varepsilon)+\overline{\varepsilon}H_2(\delta_j) \right] - nH_2(\varepsilon) +\varepsilon'_n \nonumber \\
    &\stackrel{} = \max_{0\leq\delta_0,\ldots,\delta_{k-1}\leq1}\overline{\varepsilon}\sum_{j=0}^{k-1}H_2(\delta_j)\sum_{i=1}^{n}\Pr(S_{i-1}=j) +\varepsilon'_n.
 \end{align}
 where
 \begin{enumerate}
   \item[(a)] Follows from Fano's inequality.
   \item[(b)] Follows from the chain rule.
   \item[(c)] Follows from the fact that conditioning reduces entropy, so: $H(Y_i \mid M,Y^{i-1}) \geq H(Y_i \mid X_i,M,Y^{i-1}) = H_2(\varepsilon)$.
   \item[(d)] Follows from the fact that conditioning reduces entropy.
   \item[(e)] The maximization domain is the set of all n-tuple distributions $\tilde{P}(y^n)$ which induce $\tilde{P}(y_i=?  \mid  y^{i-1})=\varepsilon$ and $\tilde{P}(y_i=0  \mid  y_{i-k}^{i-1}=0^k)=0$, for all $i=1,\ldots,n$ and $i=k+1,\ldots,n$, respectively.
   \item[(f)] We want to show that it is possible to maximize over a smaller domain and maintain an equality. It Suffices to prove by induction that if we have two distributions $\{\tilde{P}^{(1)}(y_i \mid y^{i-1})\}_{i\geq1}$ and $\{\tilde{P}^{(2)}(y_i \mid y^{i-1})\}_{i\geq1}$, which induce the same marginal distributions $\{\tilde{P}(y_i \mid y_{i-k}^{i-1})\}_{i\geq1}$, then $\{\tilde{P}^{(1)}(y_{i-k}^i)\}_{i\geq1}$ and $\{\tilde{P}^{(2)}(y_{i-k}^i)\}_{i\geq1}$ coincide. For $i=1$ the proof is trivial. Assume by induction that $\tilde{P}^{(1)}(y_{i-1-k}^{i-1})=\tilde{P}^{(2)}(y_{i-1-k}^{i-1})$ and we need to prove that $\tilde{P}^{(1)}(y_{i-k}^i)=\tilde{P}^{(2)}(y_{i-k}^i)$. Indeed we have:
       \begin{equation*}
         \tilde{P}^{(1)}(y_{i-k}^i)=\tilde{P}^{(1)}(y_{i-k}^{i-1})\tilde{P}(y_i \mid y_{i-k}^{i-1})=\tilde{P}^{(2)}(y_{i-k}^{i-1})\tilde{P}(y_i \mid y_{i-k}^{i-1})=\tilde{P}^{(2)}(y_{i-k}^i),
       \end{equation*}
       since $\tilde{P}(y_i \mid y_{i-k}^{i-1})$ is the same for both distributions by assumption, and the induction assumption tells us that $\tilde{P}^{(1)}(y_{i-1-k}^{i-1})=\tilde{P}^{(2)}(y_{i-1-k}^{i-1})$, and thus we have $\tilde{P}^{(1)}(y_{i-k}^{i-1})=\tilde{P}^{(2)}(y_{i-k}^{i-1})$ as well. Additionally, it can easily be shown that $\tilde{P}(y_i=?  \mid  y_{i-k}^{i-1})=\varepsilon$ and $\tilde{P}(y_i=0  \mid  y_{i-k}^{i-1}=0^k)=0$, for all $i=1,\ldots,n$ and $i=k+1,\ldots,n$, respectively.
   \item[(g)] Follows from Lemma \ref{lemma:n-tuple_entropy}. Notice that the distributions in the maximization domain are now time-invariant.
   \item[(h)] Follows from the fact that $S_{i-1}$ is a function of $Y_{i-k}^{i-1}$: since the code is restricted, $X_{i-k}^{i-1}$ is a function of $Y_{i-k}^{i-1}$ and, by its definition, $S_{i-1}$ is a function of $X_{i-k}^{i-1}$.
   \item[(i)] Follows from the fact that conditioning reduces entropy.
   \item[(j)] Similarly to step (f), it suffices to prove by induction that if we have two distributions $\{\tilde{P}^{(1)}(y_i \mid y_{i-k}^{i-1},s_{i-1})\}_{i=1}^n$ and $\{\tilde{P}^{(2)}(y_i \mid y_{i-k}^{i-1},s_{i-1})\}_{i=1}^n$ which induce the same marginal distributions $\{\tilde{P}(y_i \mid s_{i-1})\}_{i=1}^n$, then $\{\tilde{P}^{(1)}(y_{i-k}^i,s_{i-1}^{i})\}_{i=1}^n$ and $\{\tilde{P}^{(2)}(y_{i-k}^i,s_{i-1}^{i})\}_{i=1}^n$ coincide. Recall that since the code is restricted, $s_i$ is a function of $(s_{i-1},y_i)$. denote this function by $s_i=h(s_{i-1},y_i)$. For $i=1$ we have:
       \begin{align*}
         \tilde{P}^{(1)}(y_1,s_0^1) &= \mathbbm{1}_{\{s_0=0\}}\tilde{P}(y_1 \mid s_0)\tilde{P}^{(1)}(s_1 \mid y_1,s_0) \\
         &= \mathbbm{1}_{\{s_0=0\}}\tilde{P}(y_1 \mid s_0)\mathbbm{1}_{s_1=h(s_0,y_1)}=\tilde{P}^{(2)}(y_1,s_0^1).
       \end{align*}
        Now, assume by induction that $\tilde{P}^{(1)}(y_{i-1-k}^{i-1},s_{i-2}^{i-1})$ and $\tilde{P}^{(2)}(y_{i-1-k}^{i-1},s_{i-2}^{i-1})$ and we need to prove that $\tilde{P}^{(1)}(y_{i-k}^i,s_{i-1}^{i})=\tilde{P}^{(2)}(y_{i-k}^i,s_{i-1}^{i})$:
       \begin{align*}
         \tilde{P}^{(1)}(y_{i-k}^i,s_{i-1}^{i}) &= \tilde{P}^{(1)}(y_{i-k}^{i-1},s_{i-1}) \tilde{P}^{(1)}(y_i \mid y_{i-k}^{i-1},s_{i-1}) \tilde{P}^{(1)}(s_i \mid y_{i-k}^i,s_{i-1}) \\
          &\stackrel{(1)}{=} \tilde{P}^{(2)}(y_{i-k}^{i-1},s_{i-1})P(y_i \mid s_{i-1})\mathbbm{1}_{\{s_i=h(s_{i-1},y_i)\}} \\
          &= \tilde{P}^{(2)}(y_{i-k}^i,s_{i-1}^{i}),
       \end{align*}
       where $(1)$ follows from the induction assumption, the Markov chain $y_i-s_{i-1}-y_{i-k}^{i-1}$ and the notation defined above.
   \item[(k)] Follows by defining a conditional distribution, $\delta_j\triangleq p(X=0 \mid S=j,\theta=\checkmark)$.
   \item[(l)] Follows from a simple identity.
 \end{enumerate}

 For each instance of the tuple $(\delta_{0},\dots,\delta_{k-1})$, the random process $\{S_i\}_{i=1}^n$ is first-order Markov. Additionally, for all tuples, there is a single closed communicating class for this process, so there exists a stationary distribution and the value of $\sum_{i=1}^{n}  \Pr(S_i=j)$ can be made
 arbitrarily close to $n\pi_S(j)$, where $\pi_S(j)$ denotes the stationary distribution that is induced by the Markov chain in Fig. \ref{figures:(0,k)_RLL}. Using the transitions matrix of the Markov process $\{S_i\}_{i=1}^n$:
  \begin{center}
   \begin{tabular}{c|ccccccc}

     \backslashbox{$S_{i-1}$}{$S_i$} & $S=0$ & $S=1$ & $S=2$ & $S=3$ & \dots & $S={k-1}$ & $S=k$ \\
     \hline
     $S=0$ & $\varepsilon+\bar{\varepsilon}\delta_0$ & $\bar{\varepsilon}\bar{\delta}_0$ & 0 & 0 & \dots & 0 & 0 \\
     $S=1$ & $\varepsilon+\bar{\varepsilon}\delta_1$ & 0 & $\bar{\varepsilon}\bar{\delta}_1$ & 0 & \dots & 0 & 0 \\
     $S=2$ & $\varepsilon+\bar{\varepsilon}\delta_2$ & 0 & 0 & $\bar{\varepsilon}\bar{\delta}_2$ & \dots & 0 & 0 \\
     $\vdots$ & $\vdots$ & $\vdots$ & $\vdots$ & $\vdots$ & $\ddots$ & $\vdots$ & $\vdots$ \\
     $S={k-1}$ & $\varepsilon+\bar{\varepsilon}\delta_{k-1}$ & 0 & 0 & 0 & \dots & 0 & $\bar{\varepsilon}\bar{\delta}_{k-1}$ \\
     $S=k$ & 1 & 0 & 0 & 0 & \dots & 0 & 0 \\

   \end{tabular}
 \end{center}
 we can show that:
\begin{equation*}
  \pi_S(j) = \frac{\overline{\varepsilon}^{j}\prod_{m=0}^{j-1}\delta_m}{1+\sum_{j=0}^{k-1}\left(\overline{\varepsilon}^{j+1}\prod_{m=0}^{j}\delta_m\right)}, \ j=0,\ldots,k-1 ,
\end{equation*}
where $\prod_{m=0}^{-1}\delta_m\triangleq 1$. Therefore, we have that
 \begin{align*}
   R &\leq (1-\varepsilon) \max_{\{p_i(x \mid s,\theta=\checkmark)\}}  \sum_{j=0}^{k-1} H_2(\delta_j) [\pi_S(j)+\varepsilon''_n] + \frac{\epsilon'_n}{n},
 \end{align*}
where $\varepsilon''_n$ is the correcting factor from the stationary distribution and satisfies $\varepsilon''_n \to 0$. By taking the limit $n\to\infty$, and substituting the stationary distribution, we conclude that an achievable rate is upper bounded by

 \begin{equation}\label{eq:upper_bound}
    C^{\mathrm{nc}}_{(0,k)} \leq \max_{0\leq\delta_0,\ldots,\delta_{k-1}\leq 1} R_\varepsilon(\delta_0,\ldots,\delta_{k-1}).
 \end{equation}
\end{proof}

\section{Does $C^{\mathrm{fb}}=C^{\mathrm{nc}}$ for any input constraint?}\label{sec:(2,infty)-RLL}
In previous sections it was shown that $C^{\mathrm{fb}}_{(0,k)}(\varepsilon) = C^{\mathrm{nc}}_{(0,k)}(\varepsilon)$. This section concerns the ensuing question: is it true that non-causal knowledge of the upcoming erasure does not increase the feedback capacity for any input constraint?

It turns out that the non-causal capacity of the $(d,\infty)$-RLL case can be easily solved using the same arguments as in previous sections. Therefore, we investigated this family with a hope to prove its feedback capacity as well. For $d=1$, it has been proven in \cite{Sabag_BEC} that $C^{\mathrm{fb}}_{(1,\infty)}(\varepsilon)=C^{\mathrm{cb}}_{(1,\infty)}(\varepsilon)$. This result coincides with Theorem \ref{thm:fb_capacity=nc_capacity} since the $(1,\infty)$ constraint is equivalent to the $(0,1)$ constraint by swapping `$1$'s and `$0$'s. However, for d=2, we are able to show that  $C_{(2,\infty)}^{\mathrm{fb}}(\varepsilon)<C_{(2,\infty)}^{\mathrm{nc}}(\varepsilon)$. Thus, the answer to the aforementioned question is no.

The first result is the non-causal capacity of the BEC with a $(d,\infty)$-RLL input constraint, for any $d\geq 1$.
\begin{lemma}\label{lemma:(d,inf)_RLL}
  For any $d\in\mathbb{N}$, the non-causal capacity of the $(d,\infty)$-RLL input constrained BEC is given by:
  \begin{equation}\label{eq:(d_inf)_nc_capacity}
    C^{\mathrm{nc}}_{(d,\infty)}(\varepsilon) = \max_{0\leq\delta\leq\frac{1}{2}}\frac{H_2(\delta)}{\frac{1}{1-\varepsilon}+d\delta}.
  \end{equation}
\end{lemma}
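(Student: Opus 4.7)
The plan is to mirror the two-part program of Sections~\ref{sec:coding_scheme} and~\ref{sec:upper_bound}: derive an upper bound on $C^{\mathrm{nc}}_{(d,\infty)}(\varepsilon)$ via the chain-of-inequalities converse, and match it with an explicit non-causal coding scheme of posterior-matching type. The crucial adaptation is that for the $(d,\infty)$-RLL constraint the always-admissible input is $X=0$ (whereas $X=1$ is forbidden unless the state of the constraint graph is $d$). Accordingly, I would redefine a restricted code by $g_i(m,\theta^{i-1},\theta_i=\text{\xmark})=0$; the proof of Lemma~\ref{lemma:preliminary_lemma} carries over verbatim, because replacing an erased transmission by a `$0$' never violates the constraint and leaves the channel outputs unchanged.

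For the converse, I would replay steps (a)--(l) of Section~\ref{sec:upper_bound} with $k$ replaced by $d$ and with the state diagram of the $(d,\infty)$-RLL constraint on $\{0,1,\ldots,d\}$ (deterministic transition $j\to j+1$ for $j<d$; from $d$, a self-loop on $X=0$ and $d\to 0$ on $X=1$). Since the restricted convention makes $S_{i-1}$ a function of $Y_{i-d}^{i-1}$, step (h) goes through. Parameterizing $\delta = P(X_i=1 \mid S_{i-1}=d,\theta_i=\checkmark)$, in states $j<d$ the input is forced to $0$ so $H(Y_i\mid S_{i-1}=j)=H_2(\varepsilon)$, while $H(Y_i\mid S_{i-1}=d) = H_2(\varepsilon)+\overline{\varepsilon}H_2(\delta)$. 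A standard flow-balance calculation on the Markov chain yields the stationary probabilities $\pi(0)=\cdots=\pi(d-1)=\overline{\varepsilon}\delta\,\pi(d)$ and $\pi(d)=1/(1+d\overline{\varepsilon}\delta)$. After cancelling the $nH_2(\varepsilon)$ terms as in step (l), the bound collapses to
\[
R \;\leq\; \pi(d)\,\overline{\varepsilon}\,H_2(\delta) \;=\; \frac{H_2(\delta)}{\tfrac{1}{\overline{\varepsilon}}+d\delta},
\]
maximized over $\delta\in[0,1]$. A quick symmetry argument (the numerator is invariant under $\delta\mapsto 1-\delta$, while the denominator is strictly smaller at $\delta\leq 1/2$ than at $1-\delta\geq 1/2$) reduces the range to $\delta\in[0,1/2]$, as claimed.

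For achievability I would specialize Algorithm~\ref{alg:Coding_Scheme} to an FSM with a single productive labeling, active only in state $d$ when $\theta_i=\checkmark$: the rescaled message interval is partitioned into subintervals of lengths $1-\delta$ and $\delta$ labeled `$0$' and `$1$' respectively, and the encoder transmits the label covering the true message point. In every other configuration (state $j<d$, or state $d$ with $\theta_i=\text{\xmark}$) the encoder transmits $X=0$ and the candidate message set is unchanged. Because the code is restricted, the decoder recovers $X^i$ from $Y^i$ and therefore tracks the state; feasibility is immediate since $X=1$ is emitted only from state $d$. The rate computation is a simplified version of Lemmas~\ref{lemma:expectation_Q|L=l}--\ref{lemma:rate}: information is conveyed only in the productive configuration, at an average rate of $\pi(d)\overline{\varepsilon}H_2(\delta)$ bits per channel use, matching the converse.

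The main subtlety is in the achievability: by pre-emptively sending $X=0$ whenever it foresees an erasure, the encoder avoids wasting a potential-$1$ transmission on an erased slot, a device that relies critically on knowing $\theta_i$ \emph{before} transmission. This is precisely why the analogous feedback-only strategy fails for $d\geq 2$, producing the strict gap $C^{\mathrm{fb}}_{(2,\infty)}<C^{\mathrm{nc}}_{(2,\infty)}$ exhibited in Section~\ref{sec:(2,infty)-RLL}. Additional routine care is needed to verify that Lemma~\ref{lemma:n-tuple_entropy} generalizes to $(d,\infty)$-constrained $n$-tuple distributions (as the paper notes), and that the Markov chain on $\{0,\ldots,d\}$ is irreducible and aperiodic for every $\delta\in(0,1)$, so that the empirical state frequencies indeed converge to $\pi$.
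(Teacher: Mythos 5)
Your proposal is correct and follows essentially the same route as the paper's (very terse) proof: define the restricted encoder by $g_i(m,\theta^{i-1},\theta_i=\text{\xmark})=0$, replay the converse of Section~\ref{sec:upper_bound} on the $(d,\infty)$-RLL state diagram with $\delta=P(X=1\mid S=d,\theta=\checkmark)$, and match it with a posterior-matching scheme that is productive only in state $d$ on clean slots. The only cosmetic difference is that you compute the lower-bound rate via the stationary distribution of the state chain, whereas the paper phrases it as a renewal-reward count (expected $\tfrac{1}{1-\varepsilon}$ uses per successful bit plus $d\delta$ forced `$0$'s); these are the same calculation, and your symmetry reduction of the domain to $[0,\tfrac12]$ is a correct and necessary clarification.
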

\begin{proof}
  The upper bound of $C^{\mathrm{nc}}_{(d,\infty)}(\varepsilon)$ is derived following the same steps presented in Section \ref{sec:upper_bound}. In this case, a restricted encoder that transmits $X=0$ whenever $\theta=$ \xmark. The rest of the proof mirrors that of Section \ref{sec:upper_bound}, and we are able to show that $C^{\mathrm{nc}}_{(d,\infty)}(\varepsilon) \leq \max_{0\leq\delta\leq\frac{1}{2}}\frac{H_2(\delta)}{\frac{1}{1-\varepsilon}+d\delta}$. This expression is also a lower bound. It is achieved by applying a restricted encoder which transmits $X\sim Ber(\delta)$ if an erasure does not occur. The expected number of information bits gained in a successful transmission is $H_2(\delta)$ and the expected number of channel uses to transmit successfully is $\frac{1}{1-\varepsilon}$ , plus another $d$ channel uses if the transmitted bit is a `$1$'.
\end{proof}

\begin{figure}
    \centering
    \psfrag{A}[c][][1]{$S=0$}
    \psfrag{B}[c][][1]{$S=1$}
    \psfrag{c}[c][][1]{$S=2$}
    \psfrag{E}[c][][1]{$0$}
    \psfrag{F}[c][][1]{$1$}
    \includegraphics[scale=1]{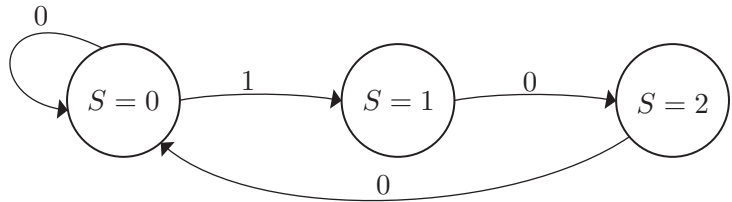}
    \caption{State diagram describing all sequences that can be generated while satisfying the $(2,\infty)$-RLL constraint: every `$1$' is followed by at least two `$0$'s.}\label{figures:(2,inf)_S_graph}
\end{figure}

Next we prove that $C^{\mathrm{fb}}_{(2,\infty)}(\varepsilon)$ is upper bounded by an expression which is strictly smaller than the RHS of Eq. \eqref{eq:(d_inf)_nc_capacity} for $d=2$. To discuss an upper bound for $C^{\mathrm{fb}}_{(2,\infty)}(\varepsilon)$, we must first introduce the concepts of the S-graph and the Q-graph. Fig. \ref{figures:(2,inf)_S_graph} contains an S-graph, which is simply a graphical representation of the $(2,\infty)$-RLL constraint. A Q-graph is an irreducible directed graph in which each node has $|\mathcal{Y}|$ distinct outgoing edges. The upper bound is derived using the method introduced in \cite{Q-graph}. This method involves a combined representation of both the S-graph and the Q-graph in a coupled $(S,Q)$-graph, which has a stationary distribution denoted $\pi(s,q)$. The main result in \cite{Q-graph} states the following:
\begin{theorem}[Theorem 2, \cite{Q-graph}]\label{thm:Oron_Q_graph}
  For every $Q$-graph, the feedback capacity is bounded by
    \begin{equation*}
        C^{\mathrm{fb}}\leq \sup_{p(x \mid s,q)}I(X;Y \mid Q),
    \end{equation*}
where $S$ represents the input constraint state. The joint distribution is $\pi(s,q)p(x \mid s,q)p(y \mid x,s)$.
\end{theorem}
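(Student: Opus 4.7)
The plan is to establish this upper bound through a standard converse argument augmented with an auxiliary random variable $Q_i$ derived from the $Q$-graph. Starting from Fano's inequality I would write $nR \leq I(M;Y^n) + n\epsilon_n$, and then use the chain rule to decompose
\begin{equation*}
I(M;Y^n) = \sum_{i=1}^{n} I(M;Y_i \mid Y^{i-1}).
\end{equation*}
Since the $Q$-graph is deterministic with a fixed initial node and each edge label corresponds to a channel output, $Q_i$ is a function of $Y^{i-1}$. Thus conditioning on $Y^{i-1}$ implicitly conditions on $Q_i$, and I would bound $I(M;Y_i \mid Y^{i-1}) \leq I(X_i;Y_i \mid Q_i)$ using that the feedback encoder makes $X_i$ a function of $(M,Y^{i-1})$ and that the channel is memoryless (so $Y_i \perp (M,Y^{i-1}) \mid X_i$).

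The next step is to convert the sum over $i$ into an expectation under a single time-invariant distribution. I would write
\begin{equation*}
\frac{1}{n}\sum_{i=1}^{n} I(X_i;Y_i \mid Q_i) = \sum_{s,q} \bar{\pi}_n(s,q)\, I(X;Y \mid S=s,Q=q),
\end{equation*}
where $\bar{\pi}_n(s,q)$ is the empirical joint distribution of $(S_i,Q_i)$ over the block, computed from the induced dynamics on the coupled $(S,Q)$-graph under whatever policy the encoder uses. The fact that $H(Y_i \mid X_i,S_i,Q_i) = H(Y_i \mid X_i,S_i)$ follows because the BEC output depends only on $X_i$ (and the channel state), and the state $S_i$ is determined by the $(0,k)$-RLL graph's trajectory. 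Introducing the supremum over $p(x \mid s,q)$ upper-bounds each per-letter term by the value at its maximizer.

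The main obstacle is justifying that the empirical $\bar{\pi}_n(s,q)$ can, without loss, be replaced by the stationary distribution $\pi(s,q)$ induced by an irreducible time-invariant policy. This is precisely the kind of step handled by Lemma \ref{lemma:n-tuple_entropy} in Section \ref{sec:upper_bound}: by the concavity of the conditional mutual information $I(X;Y \mid S=s,Q=q)$ in $p(x \mid s,q)$, one can average time-varying policies into a single stationary policy without decreasing the objective, and then invoke irreducibility of the $(S,Q)$-graph to guarantee convergence of the empirical distribution of visited states to $\pi(s,q)$ with a vanishing correction factor $\varepsilon''_n \to 0$.

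Passing to the limit $n \to \infty$ and using $\epsilon_n \to 0$ concludes the argument, yielding
\begin{equation*}
C^{\mathrm{fb}} \leq \sup_{p(x \mid s,q)} \sum_{s,q}\pi(s,q)\, I(X;Y \mid S=s,Q=q) = \sup_{p(x \mid s,q)} I(X;Y \mid Q),
\end{equation*}
where the final identification uses that $\pi(s,q)\, p(x \mid s,q)\, p(y \mid x,s)$ is precisely the joint law asserted in the theorem statement.
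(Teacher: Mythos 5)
The paper does not prove this theorem. It is quoted verbatim as ``Theorem 2, \cite{Q-graph}'' and used as a black box in Sections \ref{sec:(2,infty)-RLL} and \ref{sec:(1,2)-RLL}; the actual proof lives in the cited reference. There is therefore no in-paper argument to compare your attempt against, and any evaluation has to be on the merits of the sketch alone.

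On those merits, the shape of your argument is the natural one (Fano, chain rule, the observation that $Q_i$ is a deterministic function of $Y^{i-1}$, then a time-averaging step), but there are two places that need more care. First, the step from $I(M;Y_i \mid Y^{i-1})$ to $I(X_i;Y_i \mid Q_i)$ should be traced through the state: since $S_i$ is a deterministic function of $(M,Y^{i-1})$, the converse naturally delivers $H(Y_i \mid Q_i)-H(Y_i \mid X_i,S_i,Q_i)=I(X_i,S_i;Y_i\mid Q_i)$, and identifying this with $I(X;Y\mid Q)$ under the stated law $\pi(s,q)p(x\mid s,q)p(y\mid x,s)$ requires $Y\perp S\mid(X,Q)$, which holds for the input-constrained BEC (where $p(y\mid x,s)=p(y\mid x)$) but is not automatic from the way the theorem is written; this point should be made explicit. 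Second, your appeal to Lemma \ref{lemma:n-tuple_entropy} to replace the empirical distribution of $(S_i,Q_i)$ by a stationary $\pi(s,q)$ is misplaced: that lemma is a second-order type-counting result that compares a time-varying constrained $n$-tuple law to a time-invariant Markov law, and it is used in Section \ref{sec:upper_bound} for the non-causal converse, not for anything involving the $Q$-graph. What you actually need here is a concavity/ergodicity argument for the coupled $(S,Q)$-chain under a time-invariant policy, which is a separate piece of machinery and should be supplied (or cited from \cite{Q-graph}) rather than attributed to Lemma \ref{lemma:n-tuple_entropy}.
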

We apply Theorem \ref{thm:Oron_Q_graph} with the Q-graph in Fig. \ref{figures:(2,inf)_Q_graph}. This graph was estimated from numerical evaluations of the associated DP problem.
\begin{figure}
    \centering
    \psfrag{A}[c][][1]{$q_1$}
    \psfrag{B}[c][][1]{$q_2$}
    \psfrag{C}[c][][1]{$q_3$}
    \psfrag{D}[c][][1]{$q_4$}
    \psfrag{E}[c][][1]{$q_5$}
    \psfrag{F}[c][][1]{$0$}
    \psfrag{G}[c][][1]{?}
    \psfrag{H}[c][][1]{$1$}
    \psfrag{I}[c][][1]{$0/?$}
    \psfrag{J}[c][][1]{$0/?/1$}
    \includegraphics[scale=1]{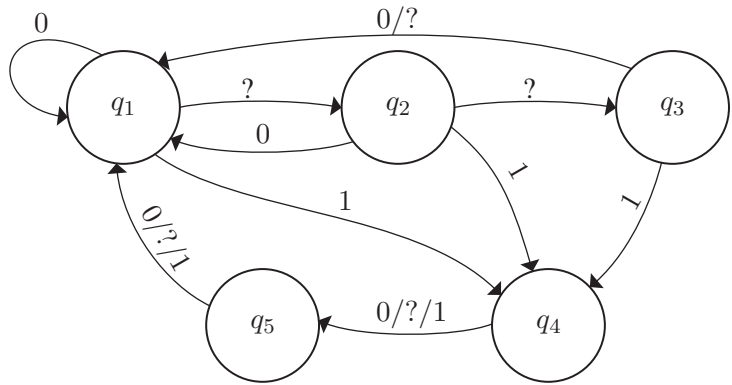}
    \caption{Q-graph for the $(2,\infty)$-RLL BEC}\label{figures:(2,inf)_Q_graph}
\end{figure}
Calculating $\sup_{p(x \mid s,q)}I(X;Y \mid Q)$ we get:
\begin{equation}\label{eq:(2,inf)_upper_bound}
  C^{\mathrm{fb}}_{(2,\infty)}(\varepsilon) \leq \max_{\substack{0\leq\delta_0,\delta_1,\delta_2\leq1 \\ \delta_0+\delta_1+\delta_2\leq1}}\frac{\overline{\varepsilon}\left(H_2(\delta_0)+\varepsilon H_2(\delta_1)+\varepsilon^2H_2(\delta_2)\right)}{1+\varepsilon+\varepsilon^2+2\overline{\varepsilon}(\delta_0+\varepsilon\delta_1+\varepsilon^2\delta_2)}.
\end{equation}

Fig. \ref{figures:nc_fb_comparison} contains graphs of the non-causal capacity and the feedback upper bound for $0\leq\varepsilon\leq1$.
\begin{figure}[h]
    \centering
    \psfrag{A}[][][0.8]{ }
    \psfrag{B}[][][0.8]{Erasure probability $\varepsilon$}
    \psfrag{C}[][][0.8]{Capacity and upper bound}
    \includegraphics[scale=0.5]{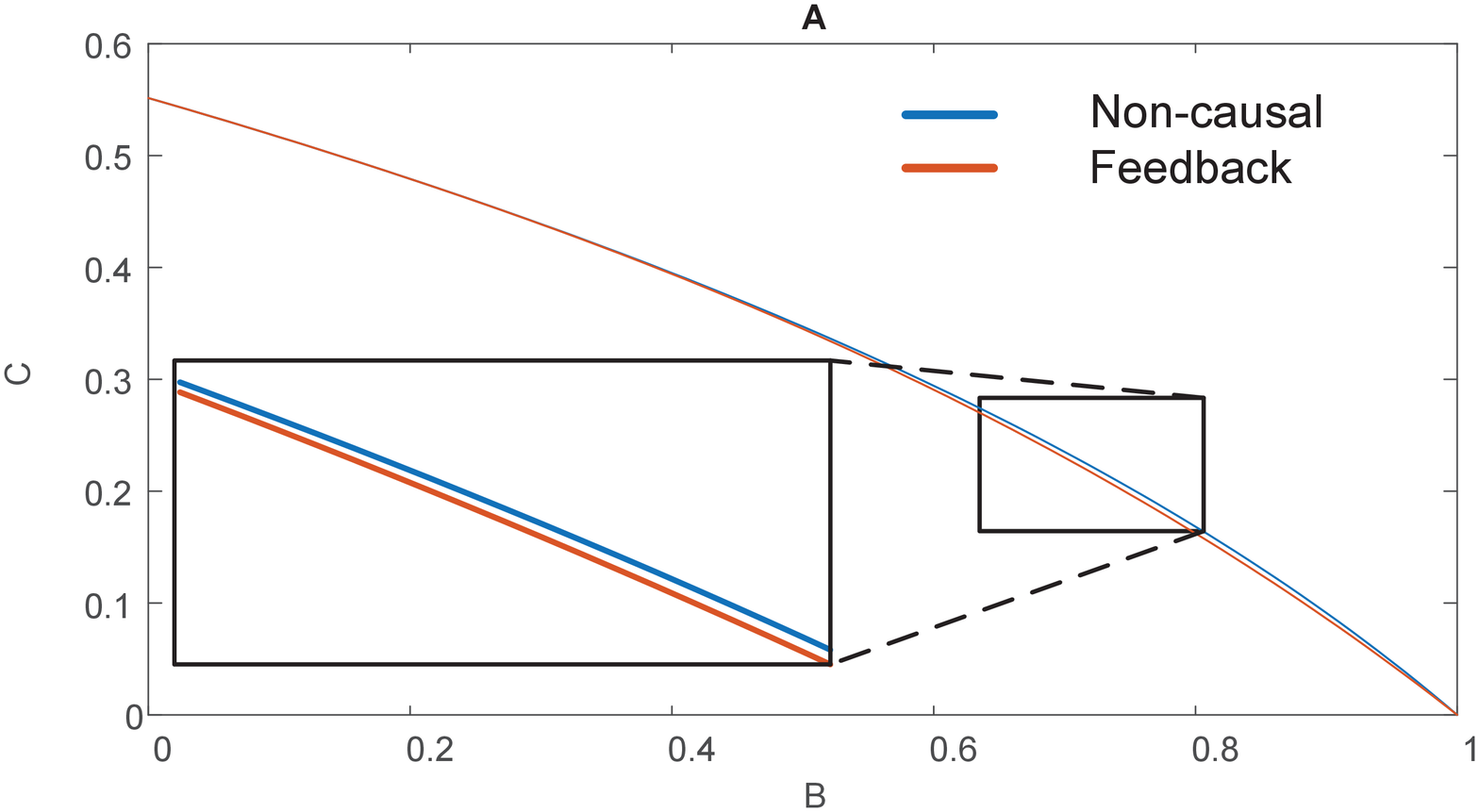}
    \caption{Non-causal capacity and feedback upper bound for the $(2,\infty)$-RLL input constrained BEC, as a function of $\varepsilon$. The non-causal capacity is greater than the upper bound of the feedback capacity. Note that $C^{\mathrm{nc}}_{(2,\infty)}(0)=C^{\mathrm{nc}}_{(2,\infty)}(0) \sim 0.551$, which is the $(2,\infty)$-RLL constraint capacity.} \label{figures:nc_fb_comparison}
\end{figure}
It is clear that the non-causal capacity is strictly greater than the feedback upper bound in the case of the $(2,\infty)$-RLL input constrained BEC. The following lemma states the strong inequality for a specific $\varepsilon$:
\begin{lemma}
  For $\varepsilon=\frac{1}{2}$, non-causal knowledge of the erasure does increase the feedback capacity, that is:
  \begin{equation}\label{eq:strong_inequality}
    C_{(2,\infty)}^{\mathrm{fb}}(\frac{1}{2}) < C_{(2,\infty)}^{\mathrm{nc}}(\frac{1}{2}).
  \end{equation}
\end{lemma}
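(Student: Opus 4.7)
The plan is to reduce the strict inequality to a Dinkelbach-style fractional-programming comparison between the feedback upper bound in Eq.~\eqref{eq:(2,inf)_upper_bound} and the closed-form non-causal capacity of Lemma~\ref{lemma:(d,inf)_RLL}, both specialized to $\varepsilon = 1/2$. The high-level idea is that if the simplex constraint $\delta_0+\delta_1+\delta_2 \le 1$ in Eq.~\eqref{eq:(2,inf)_upper_bound} were dropped, the resulting box-relaxed feedback upper bound would coincide with $C^{\mathrm{nc}}_{(2,\infty)}(1/2)$ \emph{exactly}; the strict gap then emerges from showing that this simplex constraint is actually binding at the would-be unconstrained optimum.

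I would begin with the structural observation that in Eq.~\eqref{eq:(2,inf)_upper_bound} the coefficient of $H_2(\delta_i)$ in the numerator is exactly half the coefficient of $\delta_i$ in the denominator, for every $i$ and every $\varepsilon$. Applying Dinkelbach's transformation at level $\lambda$, the auxiliary objective $F(\delta) - \lambda G(\delta)$ then separates into independent pieces in each $\delta_i$, each maximized uniquely at $\delta_i = 1/(1+2^{2\lambda})$. Consequently the box-relaxed maximizer lies on the diagonal, and substituting it back collapses Eq.~\eqref{eq:(2,inf)_upper_bound} into $\overline{\varepsilon}H_2(\delta)/(1+2\overline{\varepsilon}\delta)$, which is precisely the non-causal formula of Lemma~\ref{lemma:(d,inf)_RLL}. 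A short closed-form calculation using the golden-ratio identity $\phi^2 = \phi+1$ then yields $C^{\mathrm{nc}}_{(2,\infty)}(1/2) = \tfrac{1}{2}\log_2 \phi$, attained at $\delta^\star = 1/\phi^2$.

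Next I would verify that $\delta^\star > 1/3$, so that the diagonal point $(\delta^\star,\delta^\star,\delta^\star)$ is infeasible for the simplex-constrained problem. Since $\phi < 2$ gives $\phi^2 = \phi + 1 < 3$, we have $\delta^\star = 1/\phi^2 > 1/3$ and hence $3\delta^\star > 1$. Because strict concavity of $H_2$ makes the box-relaxed maximizer of the Dinkelbach auxiliary objective unique, the simplex-constrained optimum is attained at a different point and must therefore yield a strictly smaller value. Combining with Eq.~\eqref{eq:(2,inf)_upper_bound} yields
\[
C^{\mathrm{fb}}_{(2,\infty)}(1/2) \;\le\; (\text{simplex-constrained bound}) \;<\; (\text{box-relaxed bound}) \;=\; C^{\mathrm{nc}}_{(2,\infty)}(1/2).
\]

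The main subtlety is carrying out the Dinkelbach reduction carefully enough to confirm that the box-relaxed optimum is genuinely unique and identical across the three coordinates, so that mere infeasibility of this single diagonal point forces a strict drop in the constrained value; everything else amounts to elementary algebra with the golden-ratio identity and a one-line bound $\phi<2$.
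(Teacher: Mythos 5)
Your proposal is correct and follows essentially the same route as the paper: both recognize that the interior critical point of the box-relaxed feedback bound lies on the diagonal $\delta_0=\delta_1=\delta_2$, that substituting the diagonal recovers the non-causal expression of Lemma~\ref{lemma:(d,inf)_RLL}, and that at $\varepsilon=\tfrac12$ the optimizer $\delta^\star=1/\phi^2 > 1/3$ renders the simplex constraint binding. Your Dinkelbach framing, together with the strict-concavity-of-$H_2$ uniqueness argument, gives a cleaner and more complete justification of the final strict inequality than the paper's brief appeal to ``additional tedious calculations'' on the boundary, and the golden-ratio identities $\delta^\star = 1/\phi^2$ and $C^{\mathrm{nc}}_{(2,\infty)}(\tfrac12)=\tfrac12\log_2\phi$ are a nice explicit addition.
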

\begin{proof}
  By partially deriving the RHS of \eqref{eq:(2,inf)_upper_bound}, the only critical point in the compact domain $\{(\delta_0,\delta_1,\delta_2)\in\mathbb{R}^3|0\leq\delta_0,\delta_1,\delta_2\leq1\}$ is $\delta\triangleq \delta_0 = \delta_1 = \delta_2$. Substituting $\delta$ into \eqref{eq:(2,inf)_upper_bound} gives the objective of \eqref{eq:(d_inf)_nc_capacity}, so all that is left to show is that the argument which achieves the maximum in \eqref{eq:(d_inf)_nc_capacity} is greater than $\frac{1}{3}$. For $\varepsilon=\frac{1}{2}$, one can show that the maximum of \eqref{eq:(d_inf)_nc_capacity} is obtained at $\frac{1}{3}<\delta<\frac{1}{2}$. This means that the local maximum of \eqref{eq:(d_inf)_nc_capacity} is located outside the maximization domain of \eqref{eq:(2,inf)_upper_bound}. Additional tedious calculations also reveal that \eqref{eq:(d_inf)_nc_capacity} on its boundaries is strictly smaller than its local maximum.
\end{proof}

\section{Feedback Capacity of (1,2)-RLL BEC and Future Research}\label{sec:(1,2)-RLL}

    In this section we present the feedback capacity of a BEC with a $(1,2)$-RLL input constraint, $C_{(1,2)}^{\mathrm{fb}}(\varepsilon)$. This is the first example we see in which both $d$ and $k$ constraints are active. Additionally, we discuss possible avenues for future research on this topic.

    \subsection{Feedback Capacity of (1,2)-RLL BEC}\label{subsec:(1,2)-RLL}
    A binary sequence satisfies the $(1,2)$-RLL constraint if every `$1$' is followed by at least one `$0$', but no more than two consecutive `$0$'s are allowed. Graphical representation of the constraint is provided in Fig. \ref{figures:(1,2)_S_graph}. We present a capacity achieving coding scheme and an upper bound based on the Q-graph approach.

    \begin{figure}
        \centering
        \psfrag{A}[c][][1]{$S=0$}
        \psfrag{B}[c][][1]{$S=1$}
        \psfrag{C}[c][][1]{$S=2$}
        \psfrag{D}[c][][1]{$1$}
        \psfrag{E}[c][][1]{$0$}
        \includegraphics[scale=1]{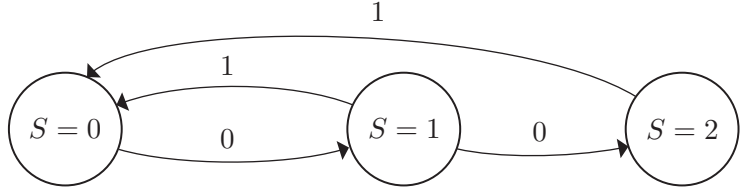}
        \caption{State diagram describing all sequences that can be generated while satisfying the (1,2)-RLL constraint: every `$1$' is followed by a `$0$', and two consecutive `$0$'s are followed by a `$1$'.}\label{figures:(1,2)_S_graph}
    \end{figure}

        The construction of this coding scheme follows closely that of the scheme presented in Section \ref{sec:coding_scheme}. Fig. \ref{figures:(1,2)_coding_scheme_graph} contains a finite state machine we use in this case. The scheme is defined by the FSM in Fig. \ref{figures:(1,2)_coding_scheme_graph} and the following channel input distributions:
        \begin{itemize}
          \item $\Pr(X=0 \mid L=l_1)=\overline{\delta}$.
          \item $\Pr(X=0 \mid L=l_2)=\delta$.
          \item $\Pr(X=0 \mid L=l_3)=0$.
          \item $\Pr(X=0 \mid L=l_4)=1$.
        \end{itemize}
        The partitions of $[0,1)$, i.e., labeling, are not presented because the amount of different labelings increases with time. The next lemma shows that there exists a coding scheme that is defined by the FSM in Fig. \ref{figures:(1,2)_coding_scheme_graph} and the aforementioned input distributions. It also states the conditions under which this scheme does not violate the input constraint.

        \begin{figure}
            \centering
            \psfrag{A}[c][][1]{$l_1$}
            \psfrag{B}[c][][1]{$l_2$}
            \psfrag{C}[c][][1]{$l_3$}
            \psfrag{D}[c][][1]{$l_4$}
            \psfrag{E}[c][][1]{$0$}
            \psfrag{F}[c][][1]{$1$}
            \psfrag{G}[c][][1]{$?$}
            \psfrag{H}[c][][1]{$0/?/1$}
            \psfrag{I}[c][][1]{$0/?/1$}
            \includegraphics[scale=1]{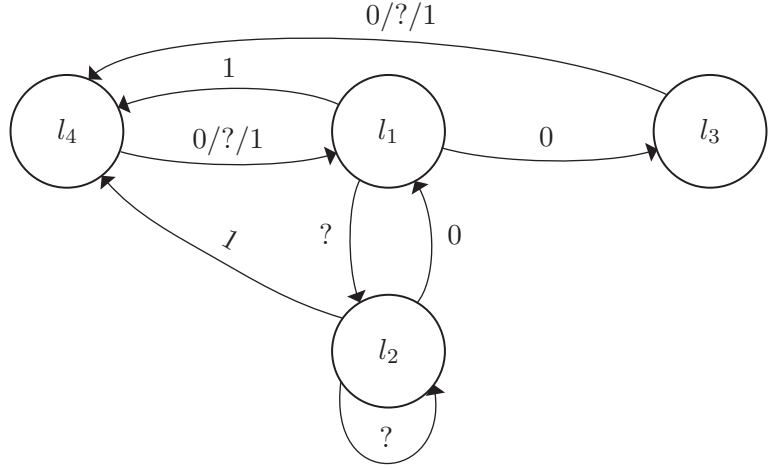}
            \caption{FSM which defines the coding scheme. The nodes describe the instantaneous labelling that is used by the encoder. Edges correspond to channel outputs. In node $1$ $\Pr(X=0)=\overline{\delta}$, in node $2$ $\Pr(X=0)=\delta$, in node $3$ $\Pr(X=0)=0$ and in node $4$ $\Pr(X=0)=1$.}\label{figures:(1,2)_coding_scheme_graph}
        \end{figure}

        \begin{lemma}\label{lemma:(1,2)_feasibility}
            For $\frac{1}{2}\leq\delta\leq\frac{2}{3}$, the coding scheme in Fig \ref{figures:(1,2)_coding_scheme_graph} does not violate the $(1,2)$-RLL input constraint and achieves:
            \begin{equation}\label{eq:(1,2)_lower_bound}
              R=\frac{H_2(\delta)} {\frac{1}{1-\varepsilon}+\overline{\varepsilon}+\overline{\delta}}.
            \end{equation}
        \end{lemma}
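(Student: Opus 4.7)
The plan is to follow the two-part template of Lemmas~\ref{lemma:feasability} and~\ref{lemma:rate}: first check that the scheme never produces a sequence violating the $(1,2)$-RLL constraint, and then compute its asymptotic rate from the stationary distribution of the labeling Markov chain in Fig.~\ref{figures:(1,2)_coding_scheme_graph}.

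For feasibility, the deterministic labelings take care of the two constraint boundaries by design: $l_4$ transmits $X=0$ and is entered only immediately after a transmitted $1$, which enforces the $d=1$ side; $l_3$ transmits $X=1$ and is entered only after two consecutive transmitted $0$'s, which enforces the $k=2$ side. As in Lemma~\ref{lemma:feasability}, the delicate case is an erasure during $l_1$ or $l_2$, since the encoder and decoder must resynchronize the labeling from the channel output alone. I would do a short case analysis on the last two channel outputs and verify, using the bounds $\tfrac{1}{2}\le\delta\le\tfrac{2}{3}$, that every message point is labeled $0$ at most twice in a row and labeled $1$ at most once in a row. The lower bound $\delta\ge\tfrac{1}{2}$ plays a role analogous to that in Lemma~\ref{lemma:feasability} and controls the $k=2$ side; the upper bound $\delta\le\tfrac{2}{3}$ is new to this scheme and is what keeps the $d=1$ side intact after erasures.

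For the rate, I would apply Lemma~\ref{lemma:expectation_Q|L=l} labeling-by-labeling. On the deterministic labelings $l_3,l_4$ every message carries the same label, so $\mathbb{E}[Q\mid L=l_3]=\mathbb{E}[Q\mid L=l_4]=0$. On $l_1,l_2$ the channel input is Bernoulli, and using the symmetry $H_2(\delta)=H_2(\overline{\delta})$ we obtain $\mathbb{E}[Q\mid L=l_1]=\mathbb{E}[Q\mid L=l_2]=\overline{\varepsilon}H_2(\delta)$, so
\begin{equation*}
R=\overline{\varepsilon}H_2(\delta)\bigl(\pi(l_1)+\pi(l_2)\bigr).
\end{equation*}
Writing down the balance equations for the FSM in Fig.~\ref{figures:(1,2)_coding_scheme_graph}, with transition probabilities $\overline{\varepsilon}\delta,\overline{\varepsilon}\overline{\delta},\varepsilon$ at $l_1$ and $l_2$ and deterministic transitions at $l_3$ and $l_4$, and solving for the stationary mass on $\{l_1,l_2\}$ should yield $\pi(l_1)+\pi(l_2)=\bigl(1+\overline{\varepsilon}^2+\overline{\varepsilon}\overline{\delta}\bigr)^{-1}$. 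Substituting and simplifying then gives the claimed rate $R=H_2(\delta)/\bigl(\tfrac{1}{1-\varepsilon}+\overline{\varepsilon}+\overline{\delta}\bigr)$.

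The main obstacle is the feasibility check. Because both the $d=1$ and $k=2$ constraints are active simultaneously, and the effective set of labelings proliferates over erasure-heavy runs, the case analysis is richer than in Lemma~\ref{lemma:feasability} and genuinely needs the two-sided bound $\tfrac{1}{2}\le\delta\le\tfrac{2}{3}$. The cleanest way to handle this will probably be to maintain an inductive invariant describing how $[0,1)$ is partitioned under each labeling reachable from $l_1$, and to check that every admissible outgoing transition in Fig.~\ref{figures:(1,2)_coding_scheme_graph} preserves the invariant.
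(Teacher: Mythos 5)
Your proposal is correct and follows essentially the same two-part structure as the paper: a feasibility argument tracking the evolving partition of $[0,1)$ under erasures, and a rate computation from the stationary distribution of the labeling Markov chain. Your stationary-mass claim $\pi(l_1)+\pi(l_2)=\bigl(1+\overline{\varepsilon}^2+\overline{\varepsilon}\overline{\delta}\bigr)^{-1}$ checks out (it requires, e.g., the self-loop $l_2\to l_2$ on $Y=?$ and the deterministic cycle $l_3\to l_4\to l_1$), and the rate derivation via $\mathbb{E}[Q\mid L=l_j]$ is more explicit than the paper's one-line appeal to the method of Section~\ref{sec:coding_scheme}. The only substantive difference is in the feasibility step: where you propose maintaining an inductive invariant on the partition over all reachable labelings, the paper inducts on the number of \emph{consecutive erasures}, with an explicit $n=3$ base case and, for $n\ge 4$, the observation that the subinterval labeled `$1$' one step ago (length $1-\delta$, now forced to `$0$') and the subinterval labeled `$1$' two steps ago (length $1-\delta$, now unconstrained) are disjoint, so that $\delta\le\tfrac{2}{3}$ lets one reassign the deficit $2\delta-1$ from the unconstrained part; your invariant-based plan would need to carry essentially this same bookkeeping, so the approaches are interchangeable.
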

        The proof of Lemma \ref{lemma:(1,2)_feasibility} is presented in Appendix \ref{sec:proofs_for_(1,2)-RLL}. Thus, a lower bound on the feedback capacity is:
        \begin{equation*}\label{}
            C_{(1,2)}^{\mathrm{fb}}(\varepsilon)\geq\max_{\frac{1}{3}\leq\delta\leq\frac{1}{2}} \frac{H_2(\delta)} {\frac{1}{1-\varepsilon}+\overline{\varepsilon}+\delta}.
        \end{equation*}

        For the upper bound, we use the same Q-graph technique from Section \ref{sec:(2,infty)-RLL} in Theorem \ref{thm:Oron_Q_graph}. This time, the coding scheme graph presented in Fig. \ref{figures:(1,2)_coding_scheme_graph} is chosen as our Q-graph. Calculating $\sup_{p(x \mid s,q)}I(X;Y \mid Q)$ we get:
        \begin{equation}\label{(eq:1,2)_1st_upper_bound}
            C_{(1,2)}^{\mathrm{fb}}\leq \max_{0\leq\delta_1,\delta_2\leq1} \frac{\overline{\varepsilon}^2H_2(\delta_1)+\varepsilon\overline{\varepsilon}H_2(\delta_2)} {1+\overline{\varepsilon}+\overline{\varepsilon}\delta_1+\varepsilon\overline{\varepsilon}\delta_2}.
        \end{equation}
        The following lemma shows that the upper and lower bounds coincide:
        \begin{lemma}\label{lemma:(1,2)_bounds_coincide}
        The feedback capacity of the $(1,2)$-RLL input constrained BEC is upper bounded by:
            \begin{equation*}
                C_{(1,2)}^{\mathrm{fb}}\leq\max_{\frac{1}{3}\leq\delta\leq\frac{1}{2}} \frac{H_2(\delta)} {\frac{1}{1-\varepsilon}+\overline{\varepsilon}+\delta}.
            \end{equation*}
        \end{lemma}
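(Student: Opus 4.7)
The plan is to bound the two-parameter upper bound on $C_{(1,2)}^{\mathrm{fb}}$ by the lower-bound expression, via Jensen's inequality applied to the concave entropy function combined with a direct denominator comparison. Let
$$g(\delta)\;:=\;\frac{H_2(\delta)}{\frac{1}{1-\varepsilon}+\overline{\varepsilon}+\delta}\;=\;\frac{\overline{\varepsilon}H_2(\delta)}{1+\overline{\varepsilon}^2+\overline{\varepsilon}\delta},$$
which is the lower-bound expression after clearing the $\overline{\varepsilon}$ in the denominator. For any $(\delta_1,\delta_2)\in[0,1]^2$, define the convex combination $\delta^*:=\overline{\varepsilon}\delta_1+\varepsilon\delta_2\in[0,1]$. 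Using the concavity of $H_2$ and $\overline{\varepsilon}+\varepsilon=1$, Jensen's inequality yields
$$\overline{\varepsilon}^2 H_2(\delta_1)+\varepsilon\overline{\varepsilon}H_2(\delta_2)=\overline{\varepsilon}\bigl[\overline{\varepsilon}H_2(\delta_1)+\varepsilon H_2(\delta_2)\bigr]\leq\overline{\varepsilon}H_2(\delta^*),$$
which upper bounds the numerator of the UB in \eqref{(eq:1,2)_1st_upper_bound}. A direct computation then gives
$$\bigl(1+\overline{\varepsilon}+\overline{\varepsilon}\delta_1+\varepsilon\overline{\varepsilon}\delta_2\bigr)-\bigl(1+\overline{\varepsilon}^2+\overline{\varepsilon}\delta^*\bigr)=\varepsilon\overline{\varepsilon}(1+\delta_1)\geq 0,$$
so the denominator of the UB dominates the denominator of $g(\delta^*)$. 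Combining these two estimates, the value of the UB at $(\delta_1,\delta_2)$ is at most $g(\delta^*)$ for every choice of $(\delta_1,\delta_2)\in[0,1]^2$.

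Taking the maximum of both sides gives $\max_{\delta_1,\delta_2\in[0,1]}(\text{UB})\leq\max_{\delta\in[0,1]}g(\delta)$. To match the statement of Lemma \ref{lemma:(1,2)_bounds_coincide}, I would then show that the unconstrained maximizer of $g$ on $[0,1]$ already lies in $(\tfrac{1}{3},\tfrac{1}{2})$, so the optimization domain can be freely restricted. A short calculation shows $g'(\delta)>0$ iff $H_2'(\delta)>g(\delta)$. For $\delta\in[0,\tfrac{1}{3}]$, monotonicity of $H_2'$ gives $H_2'(\delta)\geq H_2'(\tfrac{1}{3})=1$, while $g(\delta)<\overline{\varepsilon}\leq 1$ whenever $\varepsilon<1$ because the denominator of $g$ strictly exceeds the numerator $\overline{\varepsilon}H_2(\delta)\leq\overline{\varepsilon}$; hence $g$ is strictly increasing on $[0,\tfrac{1}{3}]$. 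For $\delta\in[\tfrac{1}{2},1]$, $H_2'(\delta)\leq 0<g(\delta)$, so $g$ is strictly decreasing there. Thus the global maximum of $g$ is attained inside $(\tfrac{1}{3},\tfrac{1}{2})$, completing the chain $\max(\text{UB})\leq\max_{\delta\in[\tfrac{1}{3},\tfrac{1}{2}]}g(\delta)$.

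The main obstacle is discovering the correct weighted average $\delta^*=\overline{\varepsilon}\delta_1+\varepsilon\delta_2$ so that the Jensen bound on the numerator is \emph{simultaneously} compatible with the denominator inequality. The choice is forced by the identity $\overline{\varepsilon}^2+\varepsilon\overline{\varepsilon}=\overline{\varepsilon}$, which makes $(\overline{\varepsilon},\varepsilon)$ the natural weights in Jensen's inequality and makes the bound on the numerator equal to $\overline{\varepsilon}H_2(\delta^*)$, exactly the numerator of $g$. Once this is identified, the denominator comparison is a one-line algebraic manipulation and the endpoint analysis of $g$ is routine calculus; I do not expect any further subtleties.
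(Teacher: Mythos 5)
Your proof is correct, and it takes a genuinely different route from the paper. The paper's argument is to differentiate the two-variable objective in \eqref{(eq:1,2)_1st_upper_bound}, assert (without detail) that the maximizer satisfies $\delta_2 = 1-\delta_1$, substitute to obtain a one-variable expression, and then show via the inequality $2^{\frac{1}{1-\varepsilon}+\overline{\varepsilon}+1}\geq 3$ that the optimal $\delta$ lies in $[\tfrac{1}{3},\tfrac{1}{2}]$. You instead bypass the critical-point analysis entirely: Jensen's inequality with the weights $(\overline{\varepsilon},\varepsilon)$ bounds the numerator by $\overline{\varepsilon}H_2(\delta^*)$ with $\delta^*=\overline{\varepsilon}\delta_1+\varepsilon\delta_2$, and the denominator difference $\varepsilon\overline{\varepsilon}(1+\delta_1)\geq 0$ (which I verified) shows the UB denominator dominates the denominator of $g(\delta^*)$, yielding a pointwise bound $\text{UB}(\delta_1,\delta_2)\leq g(\delta^*)$ valid for all $(\delta_1,\delta_2)\in[0,1]^2$. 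Your endpoint restriction via the monotonicity of $H_2'$ and the crude bound $g<\overline{\varepsilon}\leq 1$ is also sound. Your approach buys considerable robustness: it requires no stationarity analysis of the two-variable ratio, produces only the inequality you actually need (rather than an exact simplification), and sidesteps the fact that the paper's claim that the maximizer lies on $\delta_2=1-\delta_1$ is merely asserted and, on close inspection of the first-order conditions, is not obviously correct. The one tiny gap is the parenthetical ``whenever $\varepsilon<1$''; you should note that the case $\varepsilon=1$ is trivial since both sides vanish, so the inequality holds there as well.
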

        This proof also appears in Appendix \ref{sec:proofs_for_(1,2)-RLL}. This completes the derivation of the capacity of the $(1,2)$-RLL input constrained BEC.

    \subsection{Future Research}

        As indicated by the $(1,2)$-RLL example, the most logical course of future research is to study the feedback capacity of the general $(d,k)$-RLL input constrained BEC for any natural $d<k$. Our method of tackling the various input constraints discussed in this paper consisted of first running numerical evaluations of the equivalent DP problems, and then trying to draw conclusions as to the capacity achieving coding scheme. However, it is important to notice that the amount of variables we need to numerically evaluate grows linearly with the parameters $d$ and $k$. Thus, this somewhat naive approach will probably not suffice to find the capacity expression for the general case. The feedback capacity of the general $(d,k)$-RLL input constrained BEC is still open, in particular for $d=1,k\geq3$ and for $2\leq d<k$.

        We have also invested efforts in solving the second famous family of the $(d,k)$-RLL constraints, the $(d,\infty)$-RLL constraints. As illustrated in the previous section, the non causal capacity is not a tight upper bound on the capacity, so alternative methods to the ones presented in this paper should be applied to tackle these constraints. Based on numerical experiments, we tend to believe that the capacity in this case will be an optimization over more than one parameter.

\bibliography{Paper3}
\bibliographystyle{IEEEtran}

\begin{appendices}

\section{Equality of the Bounds}\label{sec:equality_of_bounds}
 In this appendix, we prove Lemma \ref{lemma:main_lemma}. It states that the lower and upper bounds, calculated in Sections \ref{sec:coding_scheme} and \ref{sec:upper_bound}, respectively, are equal. In addition, it shows that $(\delta_0,\ldots,\delta_{k-1})$ that maximize $R_\varepsilon\left(\delta_0, \ldots ,\delta_{k-1}\right)$ are connected to each other in a series of equations that allow us to compute $(\delta_0,\ldots,\delta_{k-2})$ once the maximizing $\delta_{k-1}$ is known.

 Denote:
\begin{equation}\label{eq:domain1}
  D_1 = \left\{\left(\delta_0, \ldots ,\delta_{k-1}\right) \in \mathbb{R}^k \, | \, 0 \leq \delta_0 ,\ldots, \delta_{k-1} \leq 1 \right\} ,
\end{equation}
\begin{equation}\label{eq:domain2}
  D_2 = \left\{\left(\delta_0, \ldots ,\delta_{k-1}\right) \in \mathbb{R}^k \, | \, 0 \leq \delta_0 ,\ldots, \delta_{k-1} \leq \frac{1}{2} \right\}.
\end{equation}
 Define $\vec{\delta}^\ast = (\delta_0^\ast,\ldots,\delta_{k-1}^\ast) \stackrel{\text{def}}{=} \argmax_{D_1} R_\varepsilon\left(\delta_0, \ldots ,\delta_{k-1}\right) $. We aim to show that $\vec{\delta}^\ast\in D_2$. The proof is spread across several lemmas, which show the following:
 \begin{itemize}
   \item In Lemma \ref{lemma:delta's_relations} we prove that $\nabla R_\varepsilon(\vec{\delta})=0\iff\vec{\delta}$ satisfies Eqs. \eqref{eq:delta's}. We also show that Eqs. \eqref{eq:delta's} imply that $\delta_0 \geq \ldots \geq \delta_{k-1}$.
   \item Lemma \ref{lemma:delta_0<1/2} proves that for any $(\delta_1,\ldots,\delta_{k-1})\in D_1$ there exists a unique $0\leq\delta_0(\delta_1,\ldots,\delta_{k-1})\leq\frac{1}{2}$, which is denoted by $\delta_0^\ast$, such that $\frac{\partial R_\varepsilon\left(\delta_0, \ldots ,\delta_{k-1}\right)}{\partial \delta_0}\Big|_{\delta_0=\delta_0^\ast} = 0$. Lemmas \ref{lemma:delta's_relations} and \ref{lemma:delta_0<1/2} together show that there exists a unique $\vec{\delta}\in D_2$ such that $\nabla R_\varepsilon(\vec{\delta})=0$.
   \item Lemma \ref{lemma:KKT} proves that $R_\varepsilon(\delta_0,\ldots,\delta_{k-1})$ has no maximum on the boundary of $D_1$, and hence, $\vec{\delta}^\ast\in D_2$.
 \end{itemize}

 To simplify notation, for $k>l$ we define $\prod_{m=k}^{l}(\cdot) \stackrel{\text{def}}{=} 1$ and $\sum_{i=k}^{l}(\cdot) \stackrel{\text{def}}{=} 0$.

 \begin{lemma}\label{lemma:delta's_relations}
   A $k$-tuple $\vec{\delta}=(\delta_0,\ldots,\delta_{k-1})\in D_1$ satisfies $\nabla R_\varepsilon(\vec{\delta})=0$ if and only if
   \begin{equation*}
  \delta_j = \frac{\delta_{j+1}}{\delta_{j+1}+\overline{\delta}_{j+1}\left( \frac{\overline{\delta}_{j+1}}{\overline{\delta}_{j+2}} \right)^{\overline{\varepsilon}}} \quad j=0,1,\ldots,k-2 ,
  \end{equation*}
  where we define $\overline{\delta}_k=1$. In addition $\delta_0\geq\ldots\geq\delta_{k-1}$.
 \end{lemma}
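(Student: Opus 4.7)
The plan is to exploit the ratio form $R_\varepsilon = N/D$ together with the multiplicative structure shared by $N$ and $D$. Introducing the abbreviations $\beta_j := \overline{\varepsilon}^{j+1}\prod_{m=0}^{j}\delta_m$ (with $\beta_{-1}:=1$, so that $\beta_j = \overline{\varepsilon}\delta_j\beta_{j-1}$), $U_j := \sum_{i=j}^{k-1}\beta_i$, and $T_j := \sum_{i=j+1}^{k-1} H_2(\delta_i)\beta_{i-1}$, a direct computation at an interior point $\vec\delta\in(0,1)^k$ yields $\partial N/\partial\delta_j = \overline{\varepsilon} H_2'(\delta_j)\beta_{j-1} + \overline{\varepsilon}\, T_j/\delta_j$ and $\partial D/\partial\delta_j = U_j/\delta_j$. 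Setting $\partial R_\varepsilon/\partial\delta_j = 0$, clearing the common factor $\delta_j D$, and using $R_\varepsilon = N/D$ gives the coupled system
\[
\beta_j H_2'(\delta_j) + \overline{\varepsilon}\, T_j \;=\; R_\varepsilon\, U_j, \qquad j=0,\dots,k-1, \quad (\star)
\]
which completely encodes the first-order optimality condition.

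To decouple $(\star)$, I would subtract the instance at $j+1$ from the instance at $j$. The telescoping identities $U_j - U_{j+1} = \beta_j$ and $T_j - T_{j+1} = H_2(\delta_{j+1})\beta_j$, the ratio $\beta_{j+1}/\beta_j = \overline{\varepsilon}\delta_{j+1}$, and a division by $\beta_j>0$ then collapse $(\star)$ into the one-step recursion
\[
H_2'(\delta_j) - \overline{\varepsilon}\,\delta_{j+1}\, H_2'(\delta_{j+1}) \;=\; R_\varepsilon - \overline{\varepsilon}\, H_2(\delta_{j+1}), \qquad j=0,\dots,k-2,
\]
while the $j=k-1$ instance of $(\star)$ (where $T_{k-1}=0$ and $U_{k-1}=\beta_{k-1}$) supplies the anchor $H_2'(\delta_{k-1}) = R_\varepsilon$. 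A reverse induction on $j$ using the elementary identity $\delta H_2'(\delta) - H_2(\delta) = \log\overline{\delta}$ then produces the closed form
\[
H_2'(\delta_j) \;=\; R_\varepsilon + \overline{\varepsilon}\log\overline{\delta}_{j+1}, \qquad j=0,\dots,k-1,
\]
with the convention $\overline{\delta}_k := 1$.

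The last step is to translate this closed form into the stated recursion and deduce monotonicity. Subtracting the closed form at indices $j$ and $j+1$ and using $H_2'(\delta)=\log(\overline{\delta}/\delta)$ gives
\[
\frac{\overline{\delta}_j}{\delta_j} \;=\; \frac{\overline{\delta}_{j+1}}{\delta_{j+1}}\left(\frac{\overline{\delta}_{j+1}}{\overline{\delta}_{j+2}}\right)^{\overline{\varepsilon}},
\]
and solving for $\delta_j$ reproduces exactly \eqref{eq:delta's}; the reverse direction is automatic because every manipulation above is an equivalence. From the explicit formula one sees that $\delta_j\geq \delta_{j+1}$ holds iff $\overline{\delta}_{j+1}\bigl(\overline{\delta}_{j+1}/\overline{\delta}_{j+2}\bigr)^{\overline{\varepsilon}} \leq \overline{\delta}_{j+1}$, i.e., iff $\overline{\delta}_{j+1}\leq\overline{\delta}_{j+2}$, i.e., iff $\delta_{j+1}\geq\delta_{j+2}$, so a reverse induction anchored at the trivial inequality $\delta_{k-1}\geq \delta_k := 0$ delivers the chain $\delta_0\geq\dots\geq\delta_{k-1}$.

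The main obstacle is spotting the telescoping that makes $(\star)$ tractable: the abbreviations $\beta_j$, $T_j$, and $U_j$ are engineered precisely so that both $U_j-U_{j+1}$ and $T_j-T_{j+1}$ collapse to a single term, after which subtracting consecutive instances converts a coupled $k$-equation system into a scalar recursion. Boundary configurations in which some $\delta_m\in\{0,1\}$ (where $H_2'$ is undefined or $\beta_j$ vanishes) are not in the scope of this lemma; they are handled separately by Lemmas \ref{lemma:delta_0<1/2} and \ref{lemma:KKT}, so the interior analysis above is exactly what is needed here.
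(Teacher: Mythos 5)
Your derivation of the ``only if'' direction and of the monotonicity chain is correct and, in my view, genuinely cleaner than the paper's argument, even though it lands on the same intermediate identities. The paper computes $\partial R_\varepsilon/\partial\delta_j$ from the raw $N/D$ form, extracts $N=D\log(\overline{\delta}_{k-1}/\delta_{k-1})$ from the $j=k-1$ condition, and then runs a laborious backward induction in which long strings of terms are substituted and shown to cancel term-by-term (the displays with many $(\ast)$-tagged cancellations). Your abbreviations $\beta_j$, $U_j$, $T_j$ are engineered so that the same structure appears as a telescoping of the single system $(\star)$: subtracting consecutive instances collapses everything in one line, and the elementary identity $\delta H_2'(\delta)-H_2(\delta)=\log\overline{\delta}$ immediately yields the closed form $H_2'(\delta_j)=R_\varepsilon+\overline{\varepsilon}\log\overline{\delta}_{j+1}$, whose pairwise differences are exactly \eqref{eq:delta's}. (Incidentally, you label this step ``reverse induction,'' but no induction is needed: once you have the one-step recursion, the identity turns it directly into the closed form for each $j$ separately.) Your monotonicity argument, anchored at $\overline{\delta}_k=1$ and propagated by the equivalence ``$\delta_j\ge\delta_{j+1}\iff\delta_{j+1}\ge\delta_{j+2}$,'' is also correct and is essentially what the paper does.

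One point deserves care. You assert that the reverse implication (\eqref{eq:delta's} $\Rightarrow\nabla R_\varepsilon=0$) is ``automatic because every manipulation above is an equivalence.'' That step is not an equivalence: passing from the $k$ conditions in $(\star)$ to the $k-1$ pairwise differences discards the anchor $H_2'(\delta_{k-1})=R_\varepsilon(\vec\delta)$, and the recursion \eqref{eq:delta's} alone only fixes $H_2'(\delta_j)=H_2'(\delta_{k-1})+\overline{\varepsilon}\log\overline{\delta}_{j+1}$, with the constant $H_2'(\delta_{k-1})$ free. Without separately verifying that this constant equals $R_\varepsilon(\vec\delta)$, you cannot conclude $\nabla R_\varepsilon=0$; for $k=1$ the recursion is vacuous, so the ``if'' direction would wrongly assert that every $\delta_0\in[0,1]$ is a critical point. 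To be fair, the paper's own proof exhibits exactly the same gap (it announces an ``if and only if'' but only derives the recursion from the stationarity conditions, never the converse), and the converse is not used anywhere downstream; only the forward implication together with the monotonicity and the existence and interiority results of Lemmas~\ref{lemma:delta_0<1/2} and~\ref{lemma:KKT} are needed. So this is a flaw you have inherited from the paper rather than introduced, but it is worth flagging in your write-up rather than waving it away as automatic.
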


 \begin{proof}
 First prove that $\nabla R_\varepsilon(\vec{\delta})=0$ if and only if the following relation holds:
 \begin{equation*}
     \log\left( \frac{\overline{\delta}_j}{\delta_j} \right) = \log\left( \frac{\overline{\delta}_{j+1}}{\delta_{j+1}} \right) +\overline{\varepsilon}\log\left( \frac{\overline{\delta}_{j+1}}{\overline{\delta}_{j+2}} \right)
     \quad\quad j=0,1,\ldots,k-2.
   \end{equation*}
   Denote:
   \begin{equation}\label{eq:N_and_D_notation}
     N = \displaystyle\sum_{i=0}^{k-1}\left(\overline{\varepsilon}^{i+1}H_2(\delta_i)\prod_{m=0}^{i-1}\delta_m\right) \quad,\quad
     D = 1+\displaystyle\sum_{i=0}^{k-1}\left(\overline{\varepsilon}^{i+1}\prod_{m=0}^{i}\delta_m\right).
   \end{equation}
   So
   \begin{equation*}
     R_\varepsilon\left(\delta_0, \ldots ,\delta_{k-1}\right) = \frac{N}{D}
   \end{equation*}
   and,
   \begin{equation*}
     \frac{\partial R_\varepsilon(\delta_0, \ldots ,\delta_{k-1})}{\partial \delta_0} = \frac{\frac{\partial N}{\partial \delta_0} D - N\frac{\partial D}{\partial\delta_0}}{D^2}.
   \end{equation*}
   We write the partial derivative $\frac{\partial R_\varepsilon(\delta_0\ldots\delta_{k-1})}{\partial\delta_j}$ for $j=0,1,\ldots,k-1$ using the notations introduced in \eqref{eq:N_and_D_notation}:
   \begin{equation}\label{eq:general_partial_derivative}
     \frac{\partial R_\varepsilon(\delta_0\ldots\delta_{k-1})}{\partial\delta_j} = \frac{\left( \overline{\varepsilon}^{j+1}\displaystyle\prod_{m=0}^{j-1}\delta_m\log\left( \frac{\overline{\delta}_j}{\delta_j} \right)
     + \displaystyle\sum_{i=j+1}^{k-1}\overline{\varepsilon}^{i+1}H_2(\delta_i)\displaystyle\prod_{\substack{m=0 \\ m\neq j}}^{i-1}\delta_m \right)D
     - N\displaystyle\sum_{i=j}^{k-1}\overline{\varepsilon}^{i+1}\displaystyle\prod_{\substack{m=0 \\ m\neq j}}^i\delta_m}{D^2}.
   \end{equation}
   We will prove the lemma using an inductive argument starting from $\delta_{k-1}$ and working our way back to $\delta_0$.\\
   \underline{Base case}: by simplifying the equation $\frac{\partial R_\varepsilon}{\delta_{k-1}}=0$ we immediately get:
   \begin{equation}\label{eq:N_{k-1}}
     N = D\log\left( \frac{\overline{\delta}_{k-1}}{\delta_{k-1}} \right).
   \end{equation}
   Note that we arrive at \eqref{eq:N_{k-1}} by dividing both sides of the equation by $\prod_{m=0}^{k-2}\delta_m$. We know that this is allowed since for any $j=0,\ldots,k-1$ we have that $\lim_{\delta_j\to 0^+}\frac{\partial R_\varepsilon(\delta_0\ldots\delta_{k-1})}{\partial\delta_j} = \infty$. This means that if $\nabla R_\varepsilon(\delta_0\ldots\delta_{k-1}) = 0$ then $\delta_j \neq 0$ for all $j=0,\ldots,k-1$.

   Next we write the equation $\frac{\partial R_\varepsilon}{\partial\delta_{k-2}}=0$ and substitute $N$ using \eqref{eq:N_{k-1}}:
   \begin{align*}
     0 &= \frac{\left( \overline{\varepsilon}^{k-1}\displaystyle\prod_{m=0}^{k-3}\delta_m\log\left( \frac{\overline{\delta}_{k-2}}{\delta_{k-2}} \right)
     + \overline{\varepsilon}^{k} H_2(\delta_{k-1})\displaystyle\prod_{m=0}^{k-3}\delta_m \right)D
     - D\log\left( \frac{\overline{\delta}_{k-1}}{\delta_{k-1}}\right)\left[ \overline{\varepsilon}^{k-1}\displaystyle\prod_{m=0}^{k-3}\delta_m+\overline{\varepsilon}^{k}
     \displaystyle\prod_{m=0}^{k-3}\delta_m\delta_{k-1} \right]}{D^2}  \\
      &= \log\left( \frac{\overline{\delta}_{k-2}}{\delta_{k-2}} \right) - \overline{\varepsilon}\delta_{k-1}\log(\delta_{k-1})-\overline{\varepsilon}\overline{\delta}_{k-1}\log(\overline{\delta}_{k-1})
     - \log\left( \frac{\overline{\delta}_{k-1}}{\delta_{k-1}}\right) - \overline{\varepsilon}\delta_{k-1}\log(\overline{\delta}_{k-1}) + \overline{\varepsilon}\delta_{k-1}\log(\delta_{k-1}). \\
   \end{align*}
   So
   \begin{equation}\label{eq:base_case}
     \log\left( \frac{\overline{\delta}_{k-2}}{\delta_{k-2}} \right) = \log\left( \frac{\overline{\delta}_{k-1}}{\delta_{k-1}}\right) + \overline{\varepsilon}\log\left( \frac{\overline{\delta}_{k-1}}{1} \right)
   \end{equation}
   and the base case is proven.\\
   \underline{Inductive step}: we assume that the claim holds for $\delta_{k-2},\delta_{k-3},\ldots,\delta_{j+1}$ and we will now prove it for $\delta_j$. Substituting \eqref{eq:base_case} into \eqref{eq:N_{k-1}} we get:
   \begin{equation}\label{eq:N_2}
     N = D\left( \log\left( \frac{\overline{\delta}_{k-2}}{\delta_{k-2}} \right) - \overline{\varepsilon}\log(\overline{\delta}_{k-1}) \right).
   \end{equation}
   We start by writing the equation
   \begin{align*}
     0 &= \frac{\partial R_\varepsilon(\delta_0,\ldots,\delta_{k-1})}{\partial\delta_j} \\
      &= \frac{\left( \overline{\varepsilon}^{j+1}\displaystyle\prod_{m=0}^{j-1}\delta_m\log\left( \frac{\overline{\delta}_j}{\delta_j} \right)
     + \displaystyle\sum_{i=j+1}^{k-1}\overline{\varepsilon}^{i+1}H_2(\delta_i)\displaystyle\prod_{\substack{m=0 \\ m\neq j}}^{i-1}\delta_m \right)D
     - N\displaystyle\sum_{i=j}^{k-1}\overline{\varepsilon}^{i+1}\displaystyle\prod_{\substack{m=0 \\ m\neq j}}^i\delta_m}{D^2}.
   \end{align*}
   We can divide by $\varepsilon^{j+1}\prod_{m=0}^{j-1}\delta_m$ and use \eqref{eq:N_2} to get:
   \begin{equation*}
     0 = \left( \log\left( \frac{\overline{\delta}_j}{\delta_j} \right)
     + \displaystyle\sum_{i=j+1}^{k-1}\overline{\varepsilon}^{i-j}H_2(\delta_i)\displaystyle\prod_{m=j+1}^{i-1}\delta_m \right)D
     - D\left( \log\left( \frac{\overline{\delta}_{k-2}}{\delta_{k-2}} \right) - \overline{\varepsilon}\log(\overline{\delta}_{k-1}) \right)
     \displaystyle\sum_{i=j}^{k-1}\overline{\varepsilon}^{i-j}\displaystyle\prod_{m=j+1}^i\delta_m.
   \end{equation*}
   Next we use the definition of the binary entropy function to replace $H_2(\delta_{k-2}),H_2(\delta_{k-1})$ with an explicit expression:
   \begin{align*}
     0 &= \log\left( \frac{\overline{\delta}_j}{\delta_j} \right) + \displaystyle\sum_{i=j+1}^{k-3}\overline{\varepsilon}^{i-j}H_2(\delta_i)\displaystyle\prod_{m=j+1}^{i-1}\delta_m
     - \overline{\varepsilon}^{k-2-j}\displaystyle\prod_{m=j+1}^{k-3}\delta_m\delta_{k-2}\log(\delta_{k-2}) \\
     &- \overline{\varepsilon}^{k-2-j}\displaystyle\prod_{m=j+1}^{k-3}\delta_m\overline{\delta}_{k-2}\log(\overline{\delta}_{k-2})
     - \overline{\varepsilon}^{k-1-j}\displaystyle\prod_{m=j+1}^{k-2}\delta_m\delta_{k-1}\log(\delta_{k-1}) \\
     &- \overline{\varepsilon}^{k-1-j}\displaystyle\prod_{m=j+1}^{k-2}\delta_m\overline{\delta}_{k-1}\log(\overline{\delta}_{k-1})
     - \log\left( \frac{\overline{\delta}_{k-2}}{\delta_{k-2}} \right) \displaystyle\sum_{i=j}^{k-3}\overline{\varepsilon}^{i-j}\displaystyle\prod_{m=j+1}^{i}\delta_m \\
     &- \overline{\varepsilon}^{k-2-j}\displaystyle\prod_{m=j+1}^{k-2}\delta_m\log(\overline{\delta}_{k-2})
     + \overline{\varepsilon}^{k-2-j}\displaystyle\prod_{m=j+1}^{k-3}\delta_m\delta_{k-2}\log(\delta_{k-2}) \\
     &- \log\left( \frac{\overline{\delta}_{k-2}}{\delta_{k-2}} \right)\overline{\varepsilon}^{k-1-j}\displaystyle\prod_{m=j+1}^{k-1}\delta_m
     + \log(\overline{\delta}_{k-1})\displaystyle\sum_{i=j}^{k-3}\overline{\varepsilon}^{i-j+1}\displaystyle\prod_{m=j+1}^i\delta_m
     + \log(\overline{\delta}_{k-1})\overline{\varepsilon}^{k-1-j}\displaystyle\prod_{m=j+1}^{k-2}\delta_m \\
     &+ \log(\overline{\delta}_{k-1})\overline{\varepsilon}^{k-j}\displaystyle\prod_{m=j+1}^{k-1}\delta_m.
   \end{align*}
   Recall that $\overline{\delta} = 1- \delta$, so we can simplify this expression:
   \begin{align}\label{1}
     0 &= \log\left( \frac{\overline{\delta}_j}{\delta_j} \right)+ \displaystyle\sum_{i=j+1}^{k-3}\overline{\varepsilon}^{i-j}H_2(\delta_i)\displaystyle\prod_{m=j+1}^{i-1}\delta_m
     - \overline{\varepsilon}^{k-2-j}\displaystyle\prod_{m=j+1}^{k-3}\delta_m\log(\overline{\delta}_{k-2}) \nonumber \\
      &+ \overbrace{ \log\left( \frac{\overline{\delta}_{k-1}}{\delta_{k-1}} \right)\overline{\varepsilon}^{k-1-j}\displaystyle\prod_{m=j+1}^{k-1}\delta_m }^{(\ast)}
      - \log\left( \frac{\overline{\delta}_{k-2}}{\delta_{k-2}} \right) \displaystyle\sum_{i=j}^{k-3}\overline{\varepsilon}^{i-j}\displaystyle\prod_{m=j+1}^{i}\delta_m  \\
      &- \overbrace { \log\left( \frac{\overline{\delta}_{k-2}}{\delta_{k-2}} \right)\overline{\varepsilon}^{k-1-j}\displaystyle\prod_{m=j+1}^{k-1}\delta_m }^{(\ast)}
      + \log(\overline{\delta}_{k-1})\displaystyle\sum_{i=j}^{k-3}\overline{\varepsilon}^{i-j+1}\displaystyle\prod_{m=j+1}^i\delta_m
      + \overbrace{ \log(\delta_{k-1})\overline{\varepsilon}^{k-j}\displaystyle\prod_{m=j+1}^{k-1}\delta_m }^{(\ast)}. \nonumber
   \end{align}
   The three expression marked with $(\ast)$ cancel each other as a result of \eqref{eq:base_case}. Now we will use the induction assumption again by substituting
   \begin{equation*}
     \log\left( \frac{\overline{\delta}_{k-2}}{\delta_{k-2}} \right) = \log\left( \frac{\overline{\delta}_{k-3}}{\delta_{k-3}} \right)
     - \overline{\varepsilon}\log\left( \frac{\overline{\delta}_{k-2}}{\overline{\delta}_{k-1}} \right),
   \end{equation*}
   so
   \begin{align*}
     0 &= \log\left( \frac{\overline{\delta}_j}{\delta_j} \right)+ \displaystyle\sum_{i=j+1}^{k-4}\overline{\varepsilon}^{i-j}H_2(\delta_i)\displaystyle\prod_{m=j+1}^{i-1}\delta_m
     - \overbrace{ \overline{\varepsilon}^{k-3-j}\displaystyle\prod_{m=j+1}^{k-4}\delta_m\delta_{k-3}\log(\delta_{k-3}) }^{(\ast)}  \\
       &- \overline{\varepsilon}^{k-3-j}\displaystyle\prod_{m=j+1}^{k-4}\delta_m\overline{\delta}_{k-3}\log(\overline{\delta}_{k-3})
       - \overbrace { \overline{\varepsilon}^{k-2-j}\displaystyle\prod_{m=j+1}^{k-3}\delta_m\log(\overline{\delta}_{k-2}) }^{(\ast)}
       - \log\left( \frac{\overline{\delta}_{k-3}}{\delta_{k-3}} \right) + \overline{\varepsilon}\log(\overline{\delta}_{k-2})  \\
       &- \overbrace{ \overline{\varepsilon}\log(\overline{\delta}_{k-1}) }^{(\ast)} - \overline{\varepsilon}\delta_{j+1}\log\left( \frac{\overline{\delta}_{k-3}}{\delta_{k-3}} \right)
       + \overline{\varepsilon}^2\delta_{j+1}\log(\overline{\delta}_{k-2}) - \overbrace{ \overline{\varepsilon}^2\delta_{j+1}\log(\overline{\delta}_{k-1}) }^{(\ast)} \\
       &- \log\left( \frac{\overline{\delta}_{k-3}}{\delta_{k-3}} \right) \displaystyle\sum_{i=j+2}^{k-4}\overline{\varepsilon}^{i-j}\displaystyle\prod_{m=j+1}^{i}\delta_m
       + \log(\overline{\delta}_{k-2})\displaystyle\sum_{i=j+2}^{k-4}\overline{\varepsilon}^{i-j+1}\displaystyle\prod_{m=j+1}^{i}\delta_m
       - \overbrace{ \log(\overline{\delta}_{k-1})\displaystyle\sum_{i=j+2}^{k-4}\overline{\varepsilon}^{i-j+1}\displaystyle\prod_{m=j+1}^{i}\delta_m }^{(\ast)} \\
       &- \log(\overline{\delta}_{k-3})\overline{\varepsilon}^{k-3-j}\displaystyle\prod_{m=j+1}^{k-3}\delta_m
       + \overbrace { \log(\delta_{k-3})\overline{\varepsilon}^{k-3-j}\displaystyle\prod_{m=j+1}^{k-3}\delta_m }^{(\ast)}
       + \overbrace { \log(\overline{\delta}_{k-2})\overline{\varepsilon}^{k-2-j}\displaystyle\prod_{m=j+1}^{k-3}\delta_m }^{(\ast)} \\
       &- \overbrace { \log(\overline{\delta}_{k-1})\overline{\varepsilon}^{k-2-j}\displaystyle\prod_{m=j+1}^{k-3}\delta_m }^{(\ast)}
       + \overbrace { \log(\overline{\delta}_{k-1})\displaystyle\sum_{i=j}^{k-3}\overline{\varepsilon}^{i-j+1}\displaystyle\prod_{m=j+1}^i\delta_m }^{(\ast)}.
   \end{align*}
   All expressions marked with $(\ast)$ cancel each other out. Again, using $\overline{\delta} = 1-\delta$ we can simplify and arrive at:
   \begin{align}\label{2}
     0 &= \log\left( \frac{\overline{\delta}_j}{\delta_j} \right)+ \displaystyle\sum_{i=j+1}^{k-4}\overline{\varepsilon}^{i-j}H_2(\delta_i)\displaystyle\prod_{m=j+1}^{i-1}\delta_m
     - \overline{\varepsilon}^{k-3-j}\displaystyle\prod_{m=j+1}^{k-4}\delta_m\log(\overline{\delta}_{k-3}) \nonumber \\
     &- \log\left( \frac{\overline{\delta}_{k-3}}{\delta_{k-3}} \right) \displaystyle\sum_{i=j}^{k-4}\overline{\varepsilon}^{i-j}\displaystyle\prod_{m=j+1}^{i}\delta_m
      + \log(\overline{\delta}_{k-2})\displaystyle\sum_{i=j}^{k-4}\overline{\varepsilon}^{i-j+1}\displaystyle\prod_{m=j+1}^i\delta_m.
   \end{align}
   When we compare \eqref{2} to \eqref{1} we see a pattern emerging. Continuing to perform these substitutions we reach:
   \begin{align*}
     0 &= \log\left( \frac{\overline{\delta}_j}{\delta_j} \right) -\overline{\varepsilon}\delta_{j+1}\log(\delta_{j+1})
     - \overline{\varepsilon}\overline{\delta}_{j+1}\log(\overline{\delta}_{j+1}) - \overline{\varepsilon}^2\delta_{j+1}\log(\overline{\delta}_{j+2}) \\
      &-  \log\left( \frac{\overline{\delta}_{j+2}}{\delta_{j+2}} \right)(1+\overline{\varepsilon}\delta_{j+1})
      + \log(\overline{\delta}_{j+2})(\overline{\varepsilon}+\overline{\varepsilon}^2\delta_{j+1}).
   \end{align*}
   Performing the final substitution and simplifying further we get:
   \begin{align*}
     0 &= \log\left( \frac{\overline{\delta}_j}{\delta_j} \right) - \overline{\varepsilon}\delta_{j+1}\log(\delta_{j+1})
     - \overline{\varepsilon}\overline{\delta}_{j+1}\log(\overline{\delta}_{j+1}) - \overline{\varepsilon}^2\delta_{j+1}\log(\overline{\delta}_{j+2})
     -  \log\left( \frac{\overline{\delta}_{j+1}}{\delta_{j+1}} \right) \\
      &+ \overline{\varepsilon}\log(\overline{\delta}_{j+2}) -  \overline{\varepsilon}\log(\overline{\delta}_{j+3})
      -\overline{\varepsilon}\delta_{j+1}\log(\overline{\delta}_{j+1}) + \overline{\varepsilon}\delta_{j+1}\log(\delta_{j+1})
      + \overline{\varepsilon}^2\delta_{j+1}\log(\overline{\delta}_{j+2})  \\
      &- \overline{\varepsilon}^2\delta_{j+1}\log(\overline{\delta}_{j+3}) + \overline{\varepsilon}\log(\overline{\delta}_{j+3})
      + \overline{\varepsilon}^2\delta_{j+1}\log(\overline{\delta}_{j+3}),
   \end{align*}
   and, finally, we arrive at:
   \begin{equation}\label{eq:inductive_step}
     \log\left( \frac{\overline{\delta}_j}{\delta_j} \right) = \log\left( \frac{\overline{\delta}_{j+1}}{\delta_{j+1}} \right) +\overline{\varepsilon}\log\left( \frac{\overline{\delta}_{j+1}}{\delta_{j+2}} \right).
   \end{equation}

   Now we will use induction again to prove that
   \begin{equation*}
  \delta_j = \frac{\delta_{j+1}}{\delta_{j+1}+\overline{\delta}_{j+1}\left( \frac{\overline{\delta}_{j+1}}{\overline{\delta}_{j+2}} \right)^{\overline{\varepsilon}}} \quad j=0,1,\ldots,k-2
  \end{equation*}
  and that $\delta_0\geq\delta_1\geq\ldots\geq\delta_{k-1}$.

   \underline{Base case}: we will start by showing that $\delta_{k-2}\geq\delta_{k-1}$. In \eqref{eq:base_case} we have that
   \begin{equation*}
     \log\left( \frac{\overline{\delta}_{k-2}}{\delta_{k-2}} \right) = \log\left( \frac{\overline{\delta}_{k-1}}{\delta_{k-1}}\right) + \overline{\varepsilon}\log\left( \overline{\delta}_{k-1} \right).
   \end{equation*}
   By rearranging this equation we get:
   \begin{equation*}
     \frac{\overline{\delta}_{k-2}}{\delta_{k-2}}=\frac{\overline{\delta}_{k-1}^{1+\overline{\varepsilon}}}{\delta_{k-1}} ,
   \end{equation*}
   which means that:
   \begin{align*}
     \delta_{k-2} =& \frac{1}{1+\frac{\overline{\delta}_{k-1}^{1+\overline{\varepsilon}}}{\delta_{k-1}}} \\
      =& \frac{\delta_{k-1}}{\delta_{k-1}+\overline{\delta}_{k-1}^{1+\overline{\varepsilon}}}.
   \end{align*}
   Note that assuming $\delta_{k-1}>0$:
   \begin{equation*}
     \delta_{k-1}+\overline{\delta}_{k-1}^{1+\overline{\varepsilon}}<1 \iff \overline{\delta}_{k-1}^{\overline{\varepsilon}}<1
   \end{equation*}
   and the right hand side of this equivalence surely holds (under said assumption). We have proven the base case.\\
   \underline{Inductive step}: we assume that the claim holds for $\delta_{k-2},\delta_{k-3},\ldots,\delta_{j+1}$ and we will now prove it for $\delta_j$. In \eqref{eq:inductive_step} we have:
   \begin{equation*}
     \log\left( \frac{\overline{\delta}_j}{\delta_j} \right) = \log\left( \frac{\overline{\delta}_{j+1}}{\delta_{j+1}} \right) +\overline{\varepsilon}\log\left( \frac{\overline{\delta}_{j+1}}{\delta_{j+2}} \right).
   \end{equation*}
   Following the same steps as in the base case we arrive at:
   \begin{equation*}
     \delta_j = \frac{\delta_{j+1}}{\delta_{j+1}+\overline{\delta}_{j+1}\left( \frac{\overline{\delta}_{j+1}}{\overline{\delta}_{j+2}} \right)^{\overline{\varepsilon}} }.
   \end{equation*}
   Now,
   \begin{equation*}
     \delta_{j+1}+\overline{\delta}_{j+1}\left( \frac{\overline{\delta}_{j+1}}{\overline{\delta}_{j+2}} \right)^{\overline{\varepsilon}}<1 \iff \left( \frac{\overline{\delta}_{j+1}}{\overline{\delta}_{j+2}} \right)^{\overline{\varepsilon}}<1,
   \end{equation*}
   and we know that the right hand side of the equivalence holds thanks to the induction assumption (note that $\delta_{j+1}>\delta_{j+2} \implies \overline{\delta}_{j+1}<\overline{\delta}_{j+2}$).
 \end{proof}

 This lemma shows that for any $(\delta_1,\ldots,\delta_{k-1})$ satisfying $0\leq \delta_1,\ldots,\delta_{k-1} \leq 1$ there exists $0\leq\overline{\delta}_0\leq\frac{1}{2}$ for which $\frac{\partial R_\varepsilon(\delta_0,\ldots,\delta_{k-1})}{\partial\delta_0}\big|_{\delta_0=\overline{\delta}_0} = 0$ and that this $\overline{\delta}_0$ is unique.
 \begin{lemma}\label{lemma:delta_0<1/2}
 This lemma has two parts:
 \begin{enumerate}
   \item For any $(\delta_1,\ldots,\delta_{k-1})$ satisfying $0\leq \delta_1,\ldots,\delta_{k-1} \leq 1$ there exists $0\leq\delta_0(\delta_1,\ldots,\delta_{k-1})<\frac{1}{2}$, which we denote $\bar{\delta}_0$, such that:
       \begin{equation*}
         \frac{\partial R_\varepsilon\left(\delta_0, \ldots ,\delta_{k-1}\right)}{\partial \delta_0}\Big|_{\delta_0=\bar{\delta}_0} = 0.
       \end{equation*}
   \item The partial derivative $\frac{\partial R_\varepsilon\left(\delta_0, \ldots ,\delta_{k-1}\right)}{\partial \delta_0}$ is monotonic non-increasing in $\delta_0$.
 \end{enumerate}
 \end{lemma}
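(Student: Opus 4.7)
The plan is to isolate the $\delta_0$-dependence of $R_\varepsilon$ and exploit strict concavity of $H_2$. Every summand in the numerator and denominator of \eqref{eq:capacity_expression} other than $\overline{\varepsilon}H_2(\delta_0)$ and the constant $1$ carries exactly one factor of $\delta_0$, so I can write
\[
R_\varepsilon(\delta_0,\ldots,\delta_{k-1}) \;=\; \frac{\overline{\varepsilon}H_2(\delta_0) + A\delta_0}{1 + C\delta_0},
\]
where
\[
A = \sum_{i=1}^{k-1}\overline{\varepsilon}^{i+1}H_2(\delta_i)\prod_{m=1}^{i-1}\delta_m, \qquad C = \overline{\varepsilon} + \sum_{i=1}^{k-1}\overline{\varepsilon}^{i+1}\prod_{m=1}^{i}\delta_m,
\]
are nonnegative quantities depending only on $\varepsilon$ and $(\delta_1,\ldots,\delta_{k-1})$. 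The quotient rule then yields $\frac{\partial R_\varepsilon}{\partial \delta_0} = \frac{N^\star(\delta_0)}{(1+C\delta_0)^2}$ with
\[
N^\star(\delta_0) \;:=\; \overline{\varepsilon}H_2'(\delta_0)(1+C\delta_0) + A - \overline{\varepsilon}C\,H_2(\delta_0).
\]

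For part 2 I would differentiate $N^\star$ once more: the cross-terms involving $H_2'$ cancel, leaving $\frac{dN^\star}{d\delta_0} = \overline{\varepsilon}H_2''(\delta_0)(1+C\delta_0)$, which is strictly negative on $(0,1)$ by strict concavity of $H_2$. Hence $N^\star$ is strictly decreasing, and since the denominator $(1+C\delta_0)^2$ is positive, the partial derivative $\partial R_\varepsilon/\partial\delta_0$ inherits the sign pattern of $N^\star$: positive on an initial interval, then negative after a single sign change. This gives the monotonic behavior required to rule out multiple critical points.

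For part 1 I would examine the endpoint values of $N^\star$. Since $H_2'(\delta_0) = \log_2\frac{1-\delta_0}{\delta_0}\to+\infty$ as $\delta_0\to 0^+$, we have $N^\star(\delta_0)\to+\infty$. Substituting $H_2'(1/2)=0$ and $H_2(1/2)=1$ gives $N^\star(1/2) = A - \overline{\varepsilon}C$. The key algebraic step is to telescope the difference as
\[
\overline{\varepsilon}C - A \;=\; \sum_{j=1}^{k-1}\overline{\varepsilon}^{j+1}\Bigl(\prod_{m=1}^{j-1}\delta_m\Bigr)\bigl(1 - H_2(\delta_j)\bigr) \;+\; \overline{\varepsilon}^{k+1}\prod_{m=1}^{k-1}\delta_m,
\]
a sum of nonnegative terms because $H_2(\delta_j)\le 1$. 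A short case split---either some $\delta_j$ equals $0$, in which case the summand indexed by the smallest such $j$ is strictly positive since $1-H_2(0)=1$, or all $\delta_j>0$, in which case the final summand is strictly positive---establishes $\overline{\varepsilon}C - A > 0$ whenever $\varepsilon<1$; the degenerate case $\varepsilon=1$ is trivial because $R_\varepsilon\equiv 0$. The intermediate value theorem, applied to the continuous, strictly decreasing $N^\star$ from part 2, then yields a unique $\bar{\delta}_0\in(0,1/2)$ with $N^\star(\bar{\delta}_0)=0$. The main obstacle I anticipate is the telescoped identity and the case analysis establishing strict positivity; everything else is routine manipulation anchored on $H_2''<0$.
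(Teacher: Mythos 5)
Your proposal is correct and follows essentially the same route as the paper's proof: isolate the $\delta_0$-dependence, reduce to the sign of the numerator of $\partial R_\varepsilon/\partial\delta_0$, show it tends to $+\infty$ as $\delta_0\to 0^+$, show it is negative at $\delta_0=\tfrac12$ (your telescoped identity for $\overline{\varepsilon}C-A$ is exactly the paper's factored expression $\overline{\varepsilon}^2\bigl[(H_2(\delta_1)-1)+\cdots-\overline{\varepsilon}^{k-1}\prod_{m=1}^{k-1}\delta_m\bigr]$), and establish monotonicity by differentiating the numerator once more. Your $A$, $C$, $N^\star$ bookkeeping just makes the cancellation to $\overline{\varepsilon}H_2''(\delta_0)(1+C\delta_0)$ transparent where the paper grinds through the sums explicitly.
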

 \begin{proof}
 We calculate $\frac{\partial R_\varepsilon(\delta_0,\ldots,\delta_{k-1})}{\partial\delta_0}$ and show that:
 \begin{equation}\label{eq:delta_to_0}
     \lim_{\delta_0 \to 0^+} \frac{\partial R_\varepsilon\left(\delta_0, \ldots ,\delta_{k-1}\right)}{\partial \delta_0} > 0
   \end{equation}
    and
    \begin{equation}\label{eq:delta_1/2}
      \frac{\partial R_\varepsilon\left(\delta_0, \ldots ,\delta_{k-1}\right)}{\partial \delta_0}\Big|_{\delta_0=\frac{1}{2}} < 0.
    \end{equation}
    Since the partial derivative is a continuous function of $\delta_0$ we can use the intermediate value theorem to prove the first part of the lemma. Recall that:
   \begin{equation*}
     \frac{\partial R_\varepsilon(\delta_0, \ldots ,\delta_{k-1})}{\partial \delta_0} = \frac{\frac{\partial N}{\partial \delta_0} D - N\frac{\partial D}{\partial\delta_0}}{D^2}.
   \end{equation*}
   First note that $D^2>0$. This means that we only need to determine the sign of $\frac{\partial N}{\partial \delta_0} D - N\frac{\partial D}{\partial\delta_0}$ as $\delta_0\to0^+$ and for $\delta_0=\frac{1}{2}$ to prove that \eqref{eq:delta_to_0} and \eqref{eq:delta_1/2} hold. Since the expression $\frac{\partial R_\varepsilon}{\partial\delta_0}$ is a long one, we will divide it into two parts:
   \begin{equation}\label{notation:N'D}
     \frac{\partial N}{\partial \delta_0} D =
     \left(\overline{\varepsilon} \log\left( \frac{\overline{\delta}_0}{\delta_0}\right) + \displaystyle\sum_{i=1}^{k-1}\overline{\varepsilon}^{i+1}H_2(\delta_i)\displaystyle\prod_{m=1}^{i-1}\delta_m \right)
     \left( 1+\displaystyle\sum_{i=0}^{k-1}\overline{\varepsilon}^{i+1}\displaystyle\prod_{m=0}^{i}\delta_m \right) ,
   \end{equation}
   \begin{equation}\label{notation:ND'}
     N\frac{\partial D}{\partial\delta_0} = \left( \displaystyle\sum_{i=0}^{k-1}\overline{\varepsilon}^{i+1}H_2(\delta_i)\displaystyle\prod_{m=0}^{i-1}\delta_m \right)
     \left( \displaystyle\sum_{i=0}^{k-1}\overline{\varepsilon}^{i+1}\displaystyle\prod_{m=1}^{i}\delta_m \right) .
   \end{equation}
   Simplifying $\frac{\partial N}{\partial \delta_0} D - N\frac{\partial D}{\partial\delta_0}$ we get:
     \begin{align}
        \frac{\partial N}{\partial \delta_0} D - N\frac{\partial D}{\partial\delta_0} &= \overline{\varepsilon}\log\left( \frac{\overline{\delta}_0}{\delta_0} \right) + \log\left( \frac{\overline{\delta}_0}{\delta_0} \right)\displaystyle\sum_{i=0}^{k-1}\overline{\varepsilon}^{i+2}\displaystyle\prod_{m=0}^{i}\delta_m
        + \displaystyle\sum_{i=1}^{k-1}\overline{\varepsilon}^{i+1}H_2(\delta_i)\displaystyle\prod_{m=1}^{i-1}\delta_m \nonumber \\
          &+ {\left( \displaystyle\sum_{i=1}^{k-1}\overline{\varepsilon}^{i+1}H_2(\delta_i)\displaystyle\prod_{m=1}^{i-1}\delta_m \right)
          \left( \displaystyle\sum_{i=0}^{k-1}\overline{\varepsilon}^{i+1}\displaystyle\prod_{m=0}^{i}\delta_m \right) }
          -\overline{\varepsilon}H_2(\delta_0)\displaystyle\sum_{i=0}^{k-1}\overline{\varepsilon}^{i+1}\displaystyle\prod_{m=1}^{i}\delta_m \nonumber \\
          &- {\left( \displaystyle\sum_{i=1}^{k-1}\overline{\varepsilon}^{i+1}H_2(\delta_i)\displaystyle\prod_{m=1}^{i-1}\delta_m \right)
          \left( \displaystyle\sum_{i=0}^{k-1}\overline{\varepsilon}^{i+1}\displaystyle\prod_{m=0}^{i}\delta_m \right) } \nonumber \\
          &= \log(\overline{\delta}_0)\left[ \overline{\varepsilon}+\displaystyle\sum_{i=0}^{k-1}\overline{\varepsilon}^{i+2}\displaystyle\prod_{m=0}^{i}\delta_m
          + \overline{\delta}_0\displaystyle\sum_{i=0}^{k-1}\overline{\varepsilon}^{i+2}\displaystyle\prod_{m=1}^{i}\delta_m \right] \nonumber \\
          &+ \log(\delta_0)\left[ -\overline{\varepsilon} - {\displaystyle\sum_{i=0}^{k-1}\overline{\varepsilon}^{i+2}\displaystyle\prod_{m=0}^{i}\delta_m} + {\delta_0\displaystyle\sum_{i=0}^{k-1}\overline{\varepsilon}^{i+2}\displaystyle\prod_{m=1}^{i}\delta_m} \right] +
          \displaystyle\sum_{i=1}^{k-1}\overline{\varepsilon}^{i+1}H_2(\delta_i)\displaystyle\prod_{m=1}^{i-1}\delta_m \nonumber \\
          &= \log(\overline{\delta}_0)\left[ \overline{\varepsilon}+\displaystyle\sum_{i=0}^{k-1}\overline{\varepsilon}^{i+2}\displaystyle\prod_{m=1}^{i}\delta_m \right]
          - \overline{\varepsilon}\log(\delta_0)+\displaystyle\sum_{i=1}^{k-1}\overline{\varepsilon}^{i+1}H_2(\delta_i)\displaystyle\prod_{m=1}^{i-1}\delta_m. \label{eq:last_align}
     \end{align}
     It is clear from \eqref{eq:last_align} that $\lim_{\delta_0\to 0^+}\frac{\partial N}{\partial \delta_0} D - N\frac{\partial D}{\partial\delta_0}=\infty$.

     Next we evaluate $\left( \frac{\partial N}{\partial \delta_0} D - N\frac{\partial D}{\partial\delta_0} \right)\Big|_{\delta_0=\frac{1}{2}}$:

     \begin{align}
       &\left( \frac{\partial N}{\partial \delta_0} D - N\frac{\partial D}{\partial\delta_0} \right)\Big|_{\delta_0=\frac{1}{2}} =
       -\left[ \overline{\varepsilon}+\displaystyle\sum_{i=0}^{k-1}\overline{\varepsilon}^{i+2}\displaystyle\prod_{m=1}^{i}\delta_m \right]
          + \overline{\varepsilon}+\displaystyle\sum_{i=1}^{k-1}\overline{\varepsilon}^{i+1}H_2(\delta_i)\displaystyle\prod_{m=1}^{i-1}\delta_m \nonumber \\
        &= \overline{\varepsilon}^2\left[ \left(H_2(\delta_1)-1 \right) + \overline{\varepsilon}\delta_1\left( H_2(\delta_2)-1 \right) + \ldots + \overline\varepsilon^{k-2}\displaystyle\prod_{m=1}^{k-2}\delta_m\left( H_2(\delta_{k-1})-1 \right) - \overline{\varepsilon}^{k-1}\displaystyle\prod_{m=1}^{k-1}\delta_m \right] \label{eq:delta_1/2 2}.
     \end{align}
     Note that all the summands are non-positive. It follows that:
     \begin{equation*}
       \left(H_2(\delta_1)-1 \right) + \overline{\varepsilon}\delta_1\left( H_2(\delta_2)-1 \right) + \ldots + \overline\varepsilon^{k-2}\displaystyle\prod_{m=1}^{k-2}\delta_m\left( H_2(\delta_{k-1})-1 \right) = 0
     \end{equation*}
     if and only if we set $\delta_1=\ldots=\delta_{k-1}=\frac{1}{2}$. However setting $\delta_1=\ldots=\delta_{k-1}=\frac{1}{2}$ we get:\\
     $- \overline{\varepsilon}^{k-1}\prod_{m=1}^{k-1}\delta_m < 0$. Thus $\left( \frac{\partial N}{\partial \delta_0} D - N\frac{\partial D}{\partial\delta_0} \right)\Big|_{\delta_0=\frac{1}{2}} < 0$.
    We now use the intermediate value theorem to prove the first part of the lemma.

     In the second part we want to show that the partial derivative $\frac{\partial R_\varepsilon\left(\delta_0, \ldots ,\delta_{k-1}\right)}{\partial \delta_0}$ is monotonic non-increasing in $\delta_0$. It is clear that $D^2$ is monotonic increasing in $\delta_0$ so we must prove that $\frac{\partial N}{\partial \delta_0} D - N\frac{\partial D}{\partial\delta_0}$ is non-increasing in $\delta_0$ to complete the proof. In order to achieve this goal we derive $\frac{\partial N}{\partial \delta_0} D - N\frac{\partial D}{\partial\delta_0}$ again with respect to $\delta_0$:

   \begin{equation*}
       \frac{\partial\left( \frac{\partial N}{\partial \delta_0} D - N\frac{\partial D}{\partial\delta_0} \right)}{\partial\delta_0} =
       \frac{-\left[ \overline{\varepsilon} + \displaystyle\sum_{i=0}^{k-1}\overline{\varepsilon}^{i+2}\displaystyle\prod_{m=1}^{i}\delta_m \right]}{\overline{\delta}_0}
       -\frac{\overline{\varepsilon}}{\delta_0}
     \end{equation*}
     This expression is clearly non-positive and that proves the lemma.
 \end{proof}

 We have shown that there is a unique $\vec{\delta}\in D_2$ that satisfies $\nabla R_\varepsilon(\vec{\delta}) = \vec{0}$. Now, all that remains is to prove that the suspicious point we worked so hard to find is, in fact, a local maximum of $R_\varepsilon(\delta_0,\ldots,\delta_{k-1})$ in the domain $D_1$. We already know that it is the only suspicious point in the interior of the domain, so we can safely say that the function gets its maximum value in that point or somewhere on the boundary. The final lemma will show that the function does not get its maximal value on the edge of the domain.

 \begin{lemma}\label{lemma:KKT}
   The maximum of $R_\varepsilon(\delta_0,\ldots,\delta_{k-1})$ does not occur on the boundary of the domain $D_1$.
 \end{lemma}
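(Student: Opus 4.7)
The plan is to show that from any boundary point of $D_1$ one can move into the interior and strictly increase $R_\varepsilon$. Since the boundary of $D_1=[0,1]^k$ consists of points with at least one coordinate equal to $0$ or $1$, I split the analysis according to whether any coordinate vanishes, and in each case I exploit the singularity of the derivative of the binary entropy function at the endpoints of $[0,1]$.

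First, suppose some coordinate of $(\delta_0,\ldots,\delta_{k-1})$ equals $0$, and let $j^\ast$ be the smallest such index. If $j^\ast=0$, then $H_2(\delta_0)=0$ and every product $\prod_{m=0}^{i-1}\delta_m$ with $i\geq 1$ contains the factor $\delta_0=0$, so the numerator of $R_\varepsilon$ vanishes and $R_\varepsilon=0$; for $\varepsilon<1$ this is strictly smaller than the value of $R_\varepsilon$ at the interior critical point exhibited in Lemmas~\ref{lemma:delta's_relations}--\ref{lemma:delta_0<1/2}, and for $\varepsilon=1$ the conclusion is vacuous since $R_\varepsilon\equiv 0$. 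If $j^\ast\geq 1$, then by minimality $\prod_{m=0}^{j^\ast-1}\delta_m>0$. Viewing $R_\varepsilon$ as a function of $\delta_{j^\ast}$ alone with the remaining coordinates frozen, I would differentiate $N/D$ with respect to $\delta_{j^\ast}$: the only unbounded contribution comes from the term $\overline{\varepsilon}^{j^\ast+1}H_2(\delta_{j^\ast})\prod_{m=0}^{j^\ast-1}\delta_m$ in $N$, whose derivative contains $H_2'(\delta_{j^\ast})=\log(\overline{\delta}_{j^\ast}/\delta_{j^\ast})\to+\infty$ as $\delta_{j^\ast}\to 0^+$. All other summands in $\partial N/\partial\delta_{j^\ast}$ and every summand of $\partial D/\partial\delta_{j^\ast}$ remain bounded because they are polynomial in the $\delta_m$, so
\[
\lim_{\delta_{j^\ast}\to 0^+}\frac{\partial R_\varepsilon}{\partial\delta_{j^\ast}}=+\infty,
\]
which means $R_\varepsilon$ can be strictly increased by moving $\delta_{j^\ast}$ off the boundary.

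Second, suppose no coordinate equals $0$ but some coordinate equals $1$, and let $j^\ast$ be the smallest such index. Then again $\prod_{m=0}^{j^\ast-1}\delta_m>0$, and an identical calculation isolates the same entropy-singular term, but now with $H_2'(\delta_{j^\ast})\to -\infty$ as $\delta_{j^\ast}\to 1^-$. Consequently $\partial R_\varepsilon/\partial\delta_{j^\ast}\to -\infty$, and decreasing $\delta_{j^\ast}$ slightly strictly increases $R_\varepsilon$. Combining both cases, no boundary point of $D_1$ is a maximizer of $R_\varepsilon$; since $R_\varepsilon$ attains its maximum on the compact set $D_1$, the maximum lies in the interior and, by Lemmas~\ref{lemma:delta's_relations} and~\ref{lemma:delta_0<1/2}, coincides with the unique interior critical point, which lies in $D_2$ and satisfies \eqref{eq:delta's}.

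The main obstacle is the careful bookkeeping needed to isolate the entropy-derivative singularity from all the other (continuous, bounded) contributions to $\partial N/\partial\delta_{j^\ast}$ and $\partial D/\partial\delta_{j^\ast}$, and to verify that the minimality of $j^\ast$ ensures the coefficient $\prod_{m=0}^{j^\ast-1}\delta_m$ multiplying the divergent term is strictly positive, so that the sign of the limit is genuinely $\pm\infty$ and not an indeterminate form that a zero factor from a further-out coordinate could spoil.
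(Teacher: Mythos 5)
Your proof is correct and rests on exactly the same analytic fact as the paper's --- that the divergence of $H_2'(\delta_{j})$ at the endpoints forces $\partial R_\varepsilon/\partial\delta_{j}\to\pm\infty$ near $\delta_j\in\{0,1\}$ --- but you package it as a direct perturbation argument, whereas the paper routes it through the KKT conditions. The paper sets up the constrained maximization in standard form, writes $\nabla R_\varepsilon=\sum_i\mu_i\nabla g_i+\sum_i\tilde\mu_i\nabla\tilde g_i$ with nonnegative multipliers, and derives a contradiction by showing the required multiplier would be negative when $\delta_0=0$ or $\delta_0=1$; it then asserts ``in a similar way'' that the other boundary faces are excluded, without spelling out the issue that the coefficient $\prod_{m=0}^{j-1}\delta_m$ multiplying the singular term in $\partial R_\varepsilon/\partial\delta_j$ could itself vanish on the boundary. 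Your choice of $j^\ast$ as the \emph{smallest} index with an extremal coordinate neatly resolves precisely that subtlety and makes the argument uniform across all faces, which is arguably tighter than what the paper writes. Your approach also sidesteps a minor technical wrinkle in the paper's: the KKT conditions presuppose differentiability at the candidate point, whereas $R_\varepsilon$ has a one-sided infinite derivative at the boundary, so the paper's invocation of KKT is a little delicate; the perturbation formulation avoids that issue entirely. One small gap in your write-up: in the $j^\ast=0$, $\delta_0=0$ subcase you compare against ``the value of $R_\varepsilon$ at the interior critical point'' without verifying that this value is positive. It is (e.g.\ $R_\varepsilon(\tfrac12,\dots,\tfrac12)>0$ for $\varepsilon<1$), and in fact you could simply apply the same divergent-derivative argument to $j^\ast=0$ since the coefficient there is $\overline\varepsilon^{1}\prod_{m=0}^{-1}\delta_m=\overline\varepsilon>0$, which would make the treatment of all indices uniform.
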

 \begin{proof}
   To prove this we will use the KKT conditions. First we will write the maximization problem in its standard form. Define $\vec{\delta} = (\delta_0,\ldots,\delta_{k-1})$ and the following constraint functions:
   \begin{align*}
     g_0(\vec{\delta}) = -\delta_0 \: &, \: \tilde{g}_0(\vec{\delta}) = \delta_0-1 \\
      \vdots \quad \quad & \quad \quad \vdots \\
      g_{k-1}(\vec{\delta}) = -\delta_{k-1} \: &, \: \tilde{g}_{k-1}(\vec{\delta}) = \delta_{k-1}-1.
   \end{align*}
   We want to maximize
   \begin{equation*}
     R_\varepsilon(\delta_0,\ldots,\delta_{k-1})
   \end{equation*}
   subject to
   \begin{equation*}
     g_i(\vec{\delta})\leq 0 \quad , \quad \tilde{g}_i(\vec{\delta})\leq 0 \quad\quad i=0,\ldots,k-1.
   \end{equation*}
   The KKT conditions tell us that if a point $\vec{\delta}^\ast$ is a local maximum then there exist constants $\mu_i$ and $\tilde{\mu}_i$ ($i=0,\ldots,k-1$) such that:
   \begin{equation}\label{eq:KKT_conditions}
     \nabla R_\varepsilon(\vec{\delta}^\ast) = \sum_{i=0}^{k-1}\mu_i\nabla g_i(\vec{\delta}^\ast) + \sum_{i=0}^{k-1}\tilde{\mu}_i\nabla \tilde{g}_i(\vec{\delta}^\ast)
   \end{equation}
   and
   \begin{equation*}
     \mu_i \geq 0 \quad , \quad \tilde{\mu}_i \geq 0 \quad \quad i=0,\ldots,k-1.
   \end{equation*}
   Let us assume, by contradiction, that $R_\varepsilon$ has a local maximum on the boundary of the domain $D_1$, and, specifically, that the local maximum is obtained for $\delta_0=0$. Since we assume that $g_0(\vec{\delta})=0$ we know that $\tilde{g}_0(\vec{\delta})=-1$ and so there is no need to address the inequality condition $\tilde{g}_0(\vec{\delta})\leq 0$. Eq. \eqref{eq:KKT_conditions} above gives us $k-1$ equalities. In this case the equality that we get from differentiating with regard to $\delta_0$ is:
   \begin{equation}\label{eq:dont_know}
     \frac{\frac{\partial N}{\partial\delta_0}D-N\frac{\partial D}{\partial\delta_0}}{D^2} = -\mu_0,
   \end{equation}
   where
   \begin{align*}
     \frac{\partial N}{\partial\delta_0}D =& \left(\log\left(\frac{\overline{\delta}_0}{\delta_0}\right) +\displaystyle\sum_{i=1}^{k-1}{\overline{\varepsilon}^{i+1}H_2(\delta_i)}\displaystyle\prod_{m=1}^{i-1}\delta_m\right)
     \left( 1+\displaystyle\sum_{i=0}^{k-1}\overline{\varepsilon}^{i+1}\displaystyle\prod_{m=0}^{i}\delta_i \right) \\
     N\frac{\partial D}{\partial\delta_0} =& \left( \displaystyle\sum_{i=0}^{k-1}{\overline{\varepsilon}^{i+1}H_2(\delta_i)}\displaystyle\prod_{m=1}^{i-1}\delta_m \right)
     \left( \displaystyle\sum_{i=0}^{k-1}\overline{\varepsilon}^{i+1}\displaystyle\prod_{m=1}^{i}\delta_i \right) \\
     D^2 =& \left(1+\displaystyle\sum_{i=0}^{k-1}\overline{\varepsilon}^{i+1}\displaystyle\prod_{m=0}^{i}\delta_i\right)^2.
   \end{align*}
   We have already shown in a previous lemma that the left hand side of Eq. \eqref{eq:dont_know} tends to $+\infty$ as $\delta_0\to 0^+$ and so we get a negative $\mu_0$ in violation of the KKT conditions. If we assume that a local maximum is obtained for $\delta_0=1$ we will arrive at a similar equation:
   \begin{equation}\label{eq:dont_know_2}
     \frac{\frac{\partial N}{\partial\delta_0}D-N\frac{\partial D}{\partial\delta_0}}{D^2} = \tilde{\mu}_0,
   \end{equation}
   and it is easy to see that the left hand side of Eq. \eqref{eq:dont_know_2} tends to $-\infty$ as $\delta_0\to 1^-$ so, again, we get a negative $\tilde{\mu}_0$. We can conclude that there is no local maximum of $R_{\varepsilon}$ on the boundary of $D_1$ where $\delta_0=0$ or $\delta_0=1$. In a similar way we can show that there is, in fact, no local maximum on any part of the boundary of $D_1$.
 \end{proof}

 We have proven that $R_\varepsilon(\delta_0,\ldots,\delta_{k-1})$ has one local maximum in the domain $D_1$ and that it satisfies $\frac{1}{2}>\delta_0\geq\delta_1\geq\ldots\geq\delta_{k-1}$. This proves that we can substitute $D_1$ for $D_2$ as the maximization domain and the maximal value will not change as a result.

\section{Lemma for the Converse}\label{sec:Lemma}
Prior to the proof we present some standard definitions of second order types. A second order type of a sequence $x^n\in\mathcal{X}^n$ is a probability distribution $\hat{P}_{x^n}^{(2)}\in\mathcal{P}_{n-1}(\mathcal{X}^2)$ defined as:
        \begin{equation*}
            \hat{P}_{x^n}^{(2)}(a,b)=\frac{N\left((a,b) \mid x^n\right)}{n-1},
        \end{equation*}
        for all $(a,b)\in(\mathcal{X}^2)$. Denote by $\mathcal{P}_n^{(2)}(\mathcal{X},c)$ the set of all possible second order types of sequences $x^n\in\mathcal{X}^n$ with $x_1=c$. The second order type of a sequence $x^n$ can be viewed as the joint empirical distribution of $x_1,x_2,\ldots,x_{n-1}$ and $x_2,x_3,\ldots,x_n$. For dummy random variables $X,Y$ representing such a second order type (i.e., $P_{X,Y}\in\mathcal{P}_n^{(2)}(\mathcal{X},c)$), we define a second order type class:
        \begin{equation*}
            T^{n,(2)}(P_{X,Y},c)=\{ x^n\in\mathcal{X}^n : \hat{P}_{x^n}^{(2)}=P_{X,Y} , x_1=c  \}.
        \end{equation*}
        We also define a second order $\varepsilon$-typical set with respect to a joint distribution $P_{X,Y}$ and $c$:
        \begin{align*}
            T_\varepsilon^{n,(2)}(P_{X,Y},c) = \Big\{ x^n\in\mathcal{X}^n : x_1=c, \; \forall (a,b)\in\mathcal{X}^2 \left| \hat{P}_{x^n}^{(2)}(a,b)-P_{X,Y}(a,b) \right|\leq \varepsilon \\
            \text{ and } P_{X,Y}(a,b) = 0 \Rightarrow \hat{P}_{x^n}^{(2)}(a,b)=0 \Big\}.
        \end{align*}
        A known result from \cite{method_of_types} is that for large enough $n$ there exists $\tau(\varepsilon)>0$ such that:
        \begin{equation*}
          2^{n\left(H(Y \mid X)-\tau(\varepsilon)\right)} \leq \left| T_\varepsilon^{n,(2)}(P_{X,Y},c) \right| \leq 2^{n\left(H(Y \mid X)+\tau(\varepsilon)\right)},
        \end{equation*}
        where $\lim_{\varepsilon \to 0} \tau(\varepsilon) = 0$.
    \begin{proof}[Proof of Lemma \ref{lemma:n-tuple_entropy}]
        We aim to show that there exists a single letter distribution such that the typical set induced by this distribution contains the typical set induced by the given n-tuple distribution. Since for a given distribution the size of its typical set is closely related to its entropy, that will suffice to prove the lemma.

      We emphasize that $n$ is fixed throughout the proof, so defined quantities are implicit functions of $n$. Consider an n-tuple constrained distribution $\tilde{P}_{Y^n}(y^n)=\mathbbm{1}_{\{y_1=1\}}\prod_{i=2}^{n}\tilde{P}^{(i)}_{Y_i \mid Y_{i-1}}(y_i \mid y_{i-1})$, as stated in the lemma. Set $\varepsilon=\frac{1}{(n-1)|\mathcal{X}|^n}$. Let $x^{nk} \in T^k_\varepsilon(\tilde{P}_{Y_n})$, meaning that the sequence $x^{nk}$ contains $k$ "letters" of the n-fold alphabet $\mathcal{X}^n$ and it is first order $\varepsilon$-typical with respect to the distribution $\tilde{P}_{Y^n}$. Since $x^{nk} \in T^k_\varepsilon(\tilde{P}_{Y^n})$ we have that $\forall x^n\in\mathcal{X}^n \; |\hat{P}_{x^{nk}}(x^n) - \tilde{P}_{Y^n}(x^n)|\leq\varepsilon$. Equivalently:
      \begin{equation}\label{eq:aux_lemma}
        \left|N(x^n \mid x^{nk}) - k\tilde{P}_{Y^n}(x^n)\right|\leq k\varepsilon.
      \end{equation}

%

      Define a single letter joint distribution:
      \begin{equation}\label{eq:sigle_letter_dist}
        \tilde{P}_{X,Y}(a,b) = \sum_{x^n\in\mathcal{X}^n}\tilde{P}_{Y^n}(x^n)\frac{N\left( (a,b) \mid x^n \right)}{n-1}.
      \end{equation}
      We want to show that there exists $\delta>0$ such that $x^{nk}\in T_\delta^{nk,(2)}(\tilde{P}_{X,Y},1)$. In order to do that we need to calculate the empirical distribution of pairs of letters in $x^{nk}$. Each n-tuple $x^{n,i}$ contains $n-1$ pairs. The sequence $x^{nk}$ is made up of $k$ n-tuples, so there are an additional $k-1$ pairs that are not contained in a single n-tuple. In total there are $k(n-1)+k-1 = nk-1$ pairs of letters. For $(a,b)\in\mathcal{X}^2$ denote by $\eta(a,b)$ the number of times the pair $(a,b)$ appears in $x^{nk}$ and is not contained in a single n-tuple. Clearly $0\leq\eta(a,b)\leq k-1$. Now:
      \begin{align}
        \left| \hat{P}_{x^{nk}}^{(2)}(a,b) - \tilde{P}_{X,Y}(a,b) \right| &= \left| \frac{N\left( (a,b) \mid x^{nk} \right)}{nk-1} - \tilde{P}_{X,Y}(a,b) \right| \nonumber \\
         &= \left| \frac{1}{nk-1}\left[ \sum_{x^n\in\mathcal{X}^n}N(x^n \mid x^{nk})N\left( (a,b) \mid x^n \right) + \eta(a,b) \right] - \tilde{P}_{X,Y}(a,b) \right| \nonumber \\
         &= \Bigg| \frac{1}{nk-1}\Bigg[ \sum_{x^n\in\mathcal{X}^n} \left( N(x^n \mid x^{nk}) - k\tilde{P}_{Y^n}(x^n) \right)N\left( (a,b) \mid x^n \right) \nonumber \\
         &\quad \quad \quad \quad + \sum_{x^n\in\mathcal{X}^n} k\tilde{P}_{Y^n}(x^n) N\left( (a.b) \mid x^n \right) + N\left( (a,b) \mid c^n \right) + \eta(a,b) \Bigg] - \tilde{P}_{X,Y}(a,b) \Bigg| \nonumber \\
         &\stackrel{(a)}{=} \Bigg| \frac{1}{nk-1} \sum_{x^n\in\mathcal{X}^n} \left( N(x^n \mid x^{nk}) - k\tilde{P}_{Y^n}(x^n) \right)N\left( (a,b) \mid x^n \right) \nonumber \\
         &\quad \quad \quad + \frac{k(n-1)\tilde{P}_{X,Y}(a,b)}{nk-1} + \frac{\eta(a,b)}{nk-1} - \frac{(nk-1)\tilde{P}_{X,Y}(a,b)}{nk-1} \Bigg| \nonumber \\
         &\stackrel{(b)}{\leq} \left| \frac{|\mathcal{X}|^nk\varepsilon N\left( (a,b) \mid x^n \right)}{nk-1} \right| + \left| \frac{(1-k)\tilde{P}_{X,Y}(a,b)}{nk-1} \right| + \left| \frac{k-1}{nk-1} \right| \nonumber \\
         &\stackrel{(c)}{\leq} \left| \frac{k}{nk-1} \right| + \left| \frac{k-1}{nk-1} \right| + \left| \frac{k-1}{nk-1} \right|, \label{eq:last}
      \end{align}

%

      where
      \begin{enumerate}
        \item[(a)] Follows from Eq. \eqref{eq:sigle_letter_dist}: $\sum_{x^n\in\mathcal{X}^n} \tilde{P}_{Y^n}(x^n) N\left( (a.b) \mid x^n \right)=(n-1)\tilde{P}_{X,Y}(a,b)$.
        \item[(b)] Follows from the triangle inequality, Eq. \eqref{eq:aux_lemma} and $\eta(a,b)\leq k-1$.
        \item[(c)] Follows from the fact that $N\left( (a,b) \mid x^n \right) \leq n-1$, $\tilde{P}_{X,Y}(a,b) \leq 1$ and the definition of $\varepsilon$.
      \end{enumerate}
      From Eq. \eqref{eq:last}, for every $\xi>0$ there exist $k\in\mathbb{N}$ such that for any $(a,b)\in\mathcal{X}^2$
      \begin{equation*}
        \left| \hat{P}_{x^{nk}}^{(2)}(a,b) - P_{X,Y}(a,b) \right| \leq \frac{3}{n}+\xi.
      \end{equation*}
      Denote $\delta=\frac{3}{n}+\xi$. We have shown that $x^{nk} \in T^k_\varepsilon(\tilde{P}_{Y_n}) \implies x^{nk}\in T_\delta^{nk,(2)}(\tilde{P}_{X,Y},1)$. Thus, $\left| T^k_\varepsilon(\tilde{P}_{Y_n}) \right| \leq \left| T_\delta^{nk,(2)}(\tilde{P}_{X,Y},1) \right|$. For any pair $\tau_1(\varepsilon)>0$, $\tau_2(\delta)>0$ with $\lim_{\varepsilon\to 0}\tau_1(\varepsilon)=0$, $\lim_{\delta\to 0}\tau_2(\delta)=0$ there exist $K$ such that for all $k>K$:
      \begin{equation*}
        2^{k\left(H_{\tilde{P}_{Y^n}}(Y^n)-\tau_1(\varepsilon)\right)}\leq\left| T^k_\varepsilon(\tilde{P}_{Y^n}) \right| \leq \left| T_\delta^{nk,(2)}(\tilde{P}_{X,Y},1) \right| \leq 2^{nk\left( H_{\tilde{P}_{Y \mid X}}(Y \mid X) + \tau_2(\delta) \right)}.
      \end{equation*}
      So
      \begin{equation*}
        \frac{1}{n}\left( H_{\tilde{P}_{Y^n}}(Y^n)-\tau_1(\varepsilon) \right) \leq \left( H_{\tilde{P}_{Y \mid X}}(Y \mid X) + \tau_2(\delta) \right),
      \end{equation*}

      or, alternatively,
      \begin{equation*}
        H_{\tilde{P}_{Y^n}}(Y^n)\leq nH_{\tilde{P}_{Y \mid X}}(Y \mid X) + \zeta_n \quad , \quad \lim_{n\to\infty}\zeta_n = 0.
      \end{equation*}
      We can think of the single letter distribution $\tilde{P}_{Y \mid X}$ as $\tilde{P}_{Y_i,Y_{i-1}}(y_i \mid y_{i-1})$ and define $\tilde{Q}_{Y^n}(y^n)=\prod_{i=1}^{n}\tilde{P}_{Y_i,Y_{i-1}}(y_i \mid y_{i-1})$.
    \end{proof}



\section{Proofs for the Feedback Capacity of the $(1,2)$-RLL BEC}\label{sec:proofs_for_(1,2)-RLL}

\begin{proof}[\textbf{Proof of Lemma \ref{lemma:(1,2)_feasibility}}]
            The definition of the scheme clearly shows that if $Y_i=1$ then $X_{i+1}=0$, and if $Y_{i-1}=Y_i=0$ then $X_{i+1}=1$. All that remains is to prove that it is possible to assign $\delta$ of the unit interval to `$0$' in the case of consecutive erasures. We prove this using induction on the number of erasures, $n$. As a base case take $n=3$. Fig. \ref{figures:(1,2)_induction_base_case} shows possible partitions of the unit interval that comply with the definitions of the coding scheme.
            \begin{figure}
                \centering
                \psfrag{A}[c][][1]{`$0$'}
                \psfrag{B}[c][][1]{`$1$'}
                \psfrag{C}[c][][1]{$0$}
                \psfrag{D}[c][][1]{$1$}
                \psfrag{E}[c][][1]{$1-\delta$}
                \psfrag{F}[c][][1]{$\delta$}
                \psfrag{G}[c][][1]{$2\delta-1$}
                \psfrag{H}[c][l][1]{$y=0/?$}
                \psfrag{I}[c][][1]{$y=?$}
                \psfrag{J}[c][][1.3]{$l_4$}
                \psfrag{K}[c][][1.3]{$l_1$}
                \psfrag{L}[c][][1.3]{$l_2$}
                \includegraphics[scale=0.8]{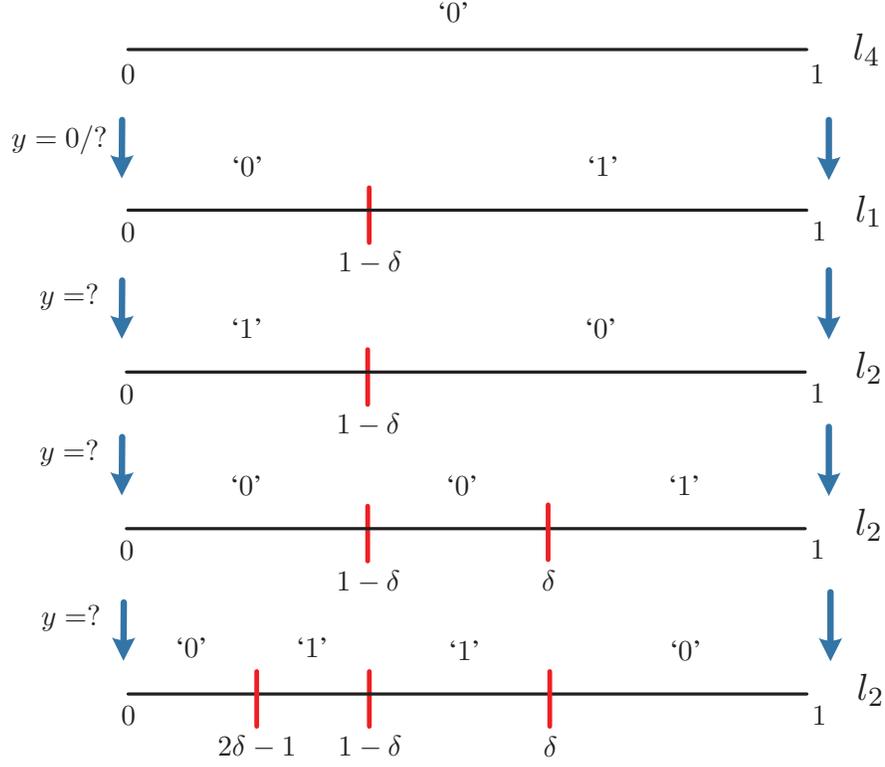}
                \caption{Induction base case: the initial node is 4 and all channel outputs are erasures. Each partition assigns the required amount of the unit interval to `$0$' and to `$1$'. The $(1,2)$-RLL input constraint is satisfied by these labelings: every `$1$' is followed by a `$0$' and no more than two consecutive `$0$'s are allowed.}\label{figures:(1,2)_induction_base_case}
            \end{figure}
            The last partition in Fig. \ref{figures:(1,2)_induction_base_case} assumes that $0\leq2\delta-1\leq1-\delta$. This means that $\frac{1}{2}\leq\delta\leq\frac{2}{3}$.

            For $n\geq4$ consecutive erasures consider the following:
            \begin{enumerate}
              \item Subintervals of $[0,1]$ which were labeled `$1$' during the previous channel use must now be labeled `$0$'. Additionally, the total length of these subintervals is $1-\delta$. This means that we must assign an additional $2\delta-1$ to `$0$'.
              \item Subintervals of $[0,1]$ which were labeled `$1$' two channel uses ago are now unconstrained. The total length of these subintervals is also $1-\delta$.
              \item The two sets of subintervals mentioned in items $1)$ and $2)$, above, are disjoint.
            \end{enumerate}
            Given that $\delta\leq\frac{2}{3}$ we can assign $2\delta-1$ of the unconstrained subintervals to `$0$'.

            We calculate the rate achieved by this scheme using the same method shown in Section \ref{sec:coding_scheme}. In the case of this coding scheme, and after exchanging $\delta$ with $1-\delta$, we reach the following rate:
        \begin{equation}\label{}
            R = \frac{H_2(\delta)} {\frac{1}{1-\varepsilon}+\overline{\varepsilon}+\delta}.
        \end{equation}
        \end{proof}

            \begin{proof}[\textbf{Proof of Lemma \ref{lemma:(1,2)_bounds_coincide}}]
                Eq. \eqref{(eq:1,2)_1st_upper_bound} contains an upper bound to $C^{\mathrm{fb}}_{(1,2)}(\varepsilon)$. By partially deriving the RHS of Eq. \eqref{(eq:1,2)_1st_upper_bound}, it can be shown that the maximum is attained for $\delta_2=1-\delta_1$. Substituting this relation into Eq. \eqref{(eq:1,2)_1st_upper_bound} gives:
            \begin{equation}\label{(1,2)_2nd_upper_bound}
                C_{(1,2)}^{\mathrm{fb}}\leq \max_{0\leq\delta\leq1} \frac{H_2(\delta)} {\frac{1}{1-\varepsilon}+\overline{\varepsilon}+\delta}.
            \end{equation}
            Since $H_2(x)=H_2(1-x)$, it is clear that the maximum is in $0\leq\delta\leq\frac{1}{2}$. The derivative of Eq. \eqref{(1,2)_2nd_upper_bound} is equal to zero only when
            \begin{equation}\label{eq:M_2}
                (1-\delta)^{\frac{1}{1-\varepsilon}+\overline{\varepsilon}+1} = \delta^{\frac{1}{1-\varepsilon}+\overline{\varepsilon}}.
            \end{equation}
            The LHS and RHS of Eq. \eqref{eq:M_2} are, respectively, monotonic decreasing and monotonic increasing functions of $\delta$. In order for the maximizing $\delta$ to be at least $\frac{1}{3}$, we need to prove that for any $\varepsilon$
            \begin{equation}\label{eq:M_3}
                \left( \frac{2}{3} \right)^{\frac{1}{1-\varepsilon}+\overline{\varepsilon}+1} \geq \left( \frac{1}{3} \right)^{\frac{1}{1-\varepsilon}+\overline{\varepsilon}},
            \end{equation}
            which simplifies to
            \begin{equation}\label{eq:M_4}
                2^{\frac{1}{1-\varepsilon}+\overline{\varepsilon}+1} \geq 3.
            \end{equation}
            Since the power increases in $\varepsilon$, it is sufficient to check that Eq. \eqref{eq:M_4} holds for $\varepsilon = 0$, and indeed $2^3=8\geq3$.
            \end{proof}
\end{appendices}

\end{document}